\theoremstyle{plain}
\numberwithin{equation}{section}
\newtheorem{Theorem}{Theorem}[section]
\newtheorem{Corollary}{Corollary}[section]
\newtheorem{Definition}{Definition}[section]
\theoremstyle{definition}
\newtheorem{Remark}{Remark}
\newtheorem{Assumption}{Assumption}[section]
\title{{\bf Optimal Investment and Consumption Strategies with General Cost Structure under CRRA Utility}}
\author[a]{\textsc{Yingting Miao} \thanks{Email: \href{Yingting.Miao@xjtlu.edu.cn}{Yingting.Miao@xjtlu.edu.cn}}}
\author[b]{\textsc{Qiang Zhang} \thanks{Email: \href{mazq@uic.edu.cn}{mazq@uic.edu.cn} (Corresponding author)}}
\affil[a]{{\small Department of Financial and Actuarial Mathematics, Xi'an Jiaotong-Liverpool University, Suzhou, Jiangsu 215123, P. R. China}}
\affil[b]{{\small Research Center of Mathematics, Advanced Institute of Natural Sciences, Beijing Normal University, Zhuhai 519087, P. R. China;\\ Guangdong Provincial Key Laboratory of Interdisciplinary Research and Application for Data Science, Beijing Normal-Hong Kong Baptist University, Zhuhai 519087, P. R. China}}
\begin{document}
	\begin{CJK}{UTF8}{gbsn} 
\date{\today}
\maketitle
\begin{abstract}
	Transaction costs play a critical role in portfolio allocation and consumption decisions. We study a finite-horizon consumption--investment problem with CRRA utility under a general class of transaction cost functions. Based on dynamic programming and a singular perturbation expansion for a small cost-to-wealth ratio, we derive leading-order asymptotic formulas for the no-trade region, the four trading boundaries, the value function correction, and the optimal consumption rate. We further show how fixed, proportional, fixed-plus-proportional, and nonlinear transaction costs arise as special cases of the general framework. The results show that the leading-order no-trade region is governed by the fixed and proportional components, while the framework still accommodates nonlinear cost structures. 
	Complementing the asymptotic analysis, we prove a verification theorem for the exact impulse-control formulation under a strictly positive fixed cost component, and characterize its limiting transitions to singular and continuous control regimes as the fixed cost vanishes.

\medskip
\noindent\textbf{2020 AMS Subject Classification:} 
91G10, 93C73, 93E20, 91B16.
\newline
\textbf{Keywords:} Portfolio management; transaction costs; optimal investment strategy; optimal consumption strategy; CRRA utility. 
\end{abstract}


\section{Introduction}\label{intro}
Investment and consumption decisions are central to portfolio management, with significant practical implications for fund managers. Merton \cite{Merton-RES-1969, Merton-JET-1971} conducted pioneering work in the field of continuous-time portfolio selection. Nevertheless, Merton's framework requires fund managers to adjust asset allocations continuously over time, a task often rendered impractical in the presence of transaction costs, as it can quickly deplete their wealth. These costs substantially affect portfolio choices for both individual and institutional investors. Nonetheless, optimal investment problems that involve transaction costs are widely acknowledged for their inherent complexity and difficulty. Building upon Merton's foundational work, Magill and Constantinides \cite{Magill-Constantinides-JET-1976} introduced transaction costs into the original model, demonstrating the existence of a no-trade region.

Consequently, the fund manager will buy a certain number of shares of the risky asset when the risky allocation reaches the buy boundary and will sell a certain number of shares of the risky asset when the risky allocation reaches the sell boundary. The critical questions are: \textit{When should the investor buy the risky asset? How much should the investor buy? When should the investor sell the risky asset? How much should the investor sell?} Mathematically, these decisions are characterized by four free boundaries, which is a central source of analytical difficulty. A further question is \textit{what the optimal consumption strategy should be}. In this paper, we address the above questions and determine an optimal consumption strategy and an optimal asset allocation strategy for a general transaction cost structure, based on the maximization of the expected CRRA utility function at a finite investment horizon.

Motivated by the seminal work of Magill and Constantinides \cite{Magill-Constantinides-JET-1976}, a substantial literature has developed on optimal investment and consumption strategies with transaction costs. Transaction costs are of significant importance not only in portfolio optimization but also in various other areas of modern finance, including option pricing (Hodges et al.\ \cite{Hodges-Neuberger-RFuMa-1989}, Davis et al.\ \cite{Davis-Panas-Zariphopoulou-SIAMJCO-1993}), shadow prices (Kallsen and Muhle-Karbe \cite{Kallsen-Muhle-AAP-2010}), risk-sensitive asset management (Bielecki and Pliska \cite{Bielecki-Pliska-FS-2000}, Bielecki et al.\ \cite{Bielecki-Chancelier-Pliska-Sulem-JCF-2004}), and asset pricing (Lo et al.\ \cite{Lo-Mamaysky-Wang-JPE-2004}, Herdegen et al.\ \cite{Herdegen-Martin-etal-FS-2021}). The profound impact of transaction costs across finance has motivated a broad range of analytical approaches, including martingale techniques (Cvitanic and Karatzas \cite{Cvitanic-Karatzas-MF-1996}), numerical methodologies (Gonon et al.\ \cite{Gonon-Muhle-Karbe-Shi-MF-2021}), and asymptotic analysis methods \cite{Atkinson-Wilmott-MF-1995, Whalley-Wilmott-MF-1997, Korn-FS-1998}.

Transaction cost functions can be decomposed into fixed and proportional leading components, together with a nonlinear residual component.

\textbf{Proportional transaction costs:} Jane\v{c}ek and Shreve \cite{Janecek-Shreve-FS-2004} studied optimal investment and consumption strategies over an infinite investment horizon based on CRRA utility. Kallsen and Muhle-Karbe \cite{Kallsen-Muhle-MF-2017} investigated the optimal investment and consumption problem for a general utility function and obtained the explicit leading-order solution in an asymptotic expansion. Based on CRRA utility, Quek and Atkinson \cite{Quek-Atkinson-AMF-2017} considered a multi-period discrete-time setting in which the coefficient of the proportional transaction costs varies in each period. Chellathurai and Draviam \cite{Chellathurai-Draviam-MPES-2005} studied the optimal investment strategy in a setting where the coefficient of proportional transaction costs depends on trading volume. Yang \cite{Yang-book-2006} provided an explicit asymptotic solution for a finite investment horizon based on the exponential utility function. Constantinides examined the capital market with transaction costs; see \cite{Constantinides-JPE-1986}. Additional work can be found in \cite{Akian-Menaldi-Sulem-SIAMJCO-1996,Davis-Norman-MOR-1990,Liu-Loewenstein-RFS-2002,Mokkhavesa-Atkinson-IMAJMM-2002,Shreve-Soner-AAP-1994, Akian-Sulem-Taksar-MF-2001,Dumas-Luciano-JF-1991,Taksar-Klass-Assaf-MOR-1988}.

\textbf{Fixed transaction cost only:} Morton and Pliska \cite{Morton-Pliska-MF-1995} derived the optimal consumption strategy for an infinite investment horizon based on logarithmic utility and fixed transaction cost only. Altarovici et al. \cite{Altarovici-Muhle-Soner-FS-2015} considered independent multi-assets under CRRA utility over an infinite time horizon. Lo et al. \cite{Lo-Mamaysky-Wang-JPE-2004} proposed a dynamic equilibrium model that includes asset prices, trading volume, and fixed transaction costs.

\textbf{Both fixed and proportional costs:} Based on the exponential utility function, Liu \cite{Liu-JF-2004} formulated the governing dynamic equations and free boundary conditions for multiple assets, demonstrating that these equations reduce to the one-risky-asset problem when all assets are independent. He also numerically studied the case of two correlated risky assets. The Hamilton-Jacobi-Bellman quasi-variational inequality method has also been applied to study optimal consumption strategies over an infinite investment horizon \cite{Altarovici-Reppen-Soner-SIAMJCO-2017,Cadenillas-MMOR-2000,Korn-FS-1998,Oksendal-Sulem-SIAMJCO-2002}. Altarovici et al. \cite{Altarovici-Reppen-Soner-SIAMJCO-2017} and Cadenillas \cite{Cadenillas-MMOR-2000} considered general utility and CRRA utility of consumption, respectively. Korn \cite{Korn-FS-1998} studied general utility of consumption and derived the optimal solution for exponential utility maximization as an example. {\O}ksendal and Sulem \cite{Oksendal-Sulem-SIAMJCO-2002} examined the CRRA utility of consumption and presented numerical estimates for the value function and the optimal consumption strategy.

\textbf{Nonlinear transaction costs:} Empirical studies suggest that market frictions can be more complex. For example, Lillo et al. \cite{Lillo-Farmer-Mantegna-Nature-2003} show that the price impact of individual trades is well described by a smooth, concave function of trade size and can be rescaled across market-capitalization classes. Similarly, Almgren et al. \cite{Almgren-Thum-Risk-2005} decompose market impact into permanent and temporary components using institutional trading data and find evidence that the total transaction cost follows an $8/5$ power law; the corresponding price impact follows a $3/5$ power law. These findings provide robust evidence that the total transaction cost function $k(\Delta)$ for an investor is smooth and convex, thereby providing practical support for the smoothness and convexity assumptions adopted in our model (see Assumption~\ref{ass:cost} later). This motivates theoretical frameworks capable of accommodating the general nonlinear cost structures observed in financial markets.

	Motivated by these empirical observations, recent work has studied optimal strategies under nonlinear price impact. G{\^a}rleanu and Pedersen \cite{Garleanu-Pedersen-JF-2013} derive a closed-form dynamic portfolio policy under quadratic trading costs in a model with return predictability. 
	Moreau, Muhle-Karbe, and Soner \cite{Moreau-Muhle-Karbe-Soner-MF-2017} obtain asymptotically optimal policies and welfare losses for small linear price impact, while Cay\'{e} et al. \cite{Caye-Herdegen-Muhle-Karbe-AAP-2020} and Guasoni and Weber \cite{Guasoni-Weber-MF-2020} analyze nonlinear price impact with power-law trading-rate costs. These studies show that optimal trading rates and welfare losses depend on risk tolerance, market volatility, the volatility of the frictionless target strategy, and the elasticity of the impact function. Together, they point to a broader need for developing a more unified and flexible analytical framework.

There is growing interest in how more realistic transaction cost schedules affect optimal trading. For instance, Belak et al.\ \cite{Belak-Mich-Seifried-MF-2022} study retail-investor portfolios under fixed, fixed-plus-proportional, piecewise constant, and floored-and-capped proportional costs. Their numerical results show that the cost structure can materially affect the shape of the no-trade region and the associated rebalancing behavior. These findings further highlight the value of analytical frameworks that can accommodate broader transaction cost specifications.

Asymptotic methods are also useful in related portfolio problems. For example, in a model incorporating return predictability and learning, Wang and Siu \cite{Wang-Siu-EJOR-2024} use an asymptotic expansion technique to study the impact of small proportional transaction costs on optimal investment and consumption, quantifying the interplay between transaction costs, risk aversion, and signal uncertainty. Melnyk et al. \cite{Melnyk-Muhle-Karbe-etal-MF-2020} have shown that the leading-order solution for proportional transaction costs remains consistent for agents with both additive and recursive utilities. Melnyk and Seifried \cite{Melnyk-Seifried-MF-2018}, employing asymptotic analysis, investigate long-term growth rates under both proportional and Morton-Pliska transaction costs. Additionally, Chen et al. \cite{Chen-Dai-Jing-Qin-MF-2022} study the long-term portfolio choice problem involving two illiquid and correlated assets under proportional costs. Further work on transaction costs can be found in comprehensive surveys \cite{Soner-Touzi-SiamCO-2013, Possamai-Soner-Touzi-CPDE-2015}.

Despite these important contributions, further theoretical analysis of finite-horizon CRRA consumption -- investment problems under general transaction costs is still needed. Existing asymptotic results for nonlinear price impact typically focus on particular power-law forms with specific exponents, while richer cost schedules are often studied numerically. A unified leading-order analysis for a broad class of transaction cost functions is therefore useful.

It is useful to clarify the asymptotic regime considered in this paper. Our perturbation analysis is not based on the assumption that the dollar transaction cost \(k(\Delta)\) itself is small. Rather, in the CRRA setting, the relevant small quantity is the transaction cost relative to total wealth, \(k(\Delta)/w\). Thus, the no-trade region is small in terms of portfolio proportions, while the dollar amount traded, and hence the dollar transaction cost, need not be small in absolute terms.

This distinction reflects the homogeneity of CRRA preferences. Under exponential utility, the wealth variable can often be separated from the value function, and trading decisions are naturally expressed in terms of dollar amounts. A small-cost expansion in that setting is therefore naturally tied to small dollar trades, so the local behavior of the transaction cost function near the origin is the relevant object. By contrast, under CRRA utility, the optimal investment decision is expressed through the ratio of risky wealth to total wealth. Consequently, the perturbation parameter in our model is the wealth-normalized cost \(k(\Delta)/w\), not the absolute cost \(k(\Delta)\). The analysis should therefore be interpreted as a small no-trade-region expansion governed by the cost-to-wealth ratio.

We address this need by developing a unified asymptotic framework for finite-horizon CRRA investment and consumption problems with general transaction costs. We consider a general class of transaction cost functions and derive explicit leading-order expressions for the four free boundaries: the buy, post-buy, sell, and post-sell boundaries. The transaction cost function is assumed to be $C^2$ on $(0,\infty)$, nondecreasing, and convex on $(0,\infty)$, as formalized in Assumption~\ref{ass:cost}. Our main contributions are:
		\begin{itemize}
		\item We derive asymptotic formulas for the optimal trading boundaries and the optimal consumption strategy. 
			
		\item We recover fixed, proportional, fixed-plus-proportional, and nonlinear cost structures as special cases of the general framework.
		
		\item We show in Section~\ref{sec: NonLinear costs} that a no-trade region exists if and only if the cost function contains a fixed or proportional leading component; otherwise, the optimal strategy reduces to continuous trading along the Merton line.

		\item We establish a rigorous verification theorem for the exact impulse-control problem with a strictly positive fixed cost. Furthermore, we analyze its limiting transitions to singular and absolutely continuous control regimes as the fixed cost component vanishes.
	\end{itemize}

{\bf Notation:}  
Throughout this paper, we assume a probability space $(\Omega, \mathcal{F},\mathbb{P})$, where $\mathbb{P}$ is a probability measure on $\Omega$ and $\mathcal{F}$ is a $\sigma$-algebra. We endow the probability space $(\Omega, \mathcal{F},\mathbb{P})$ with an increasing filtration $\{\mathcal{F}_t\}_{t\geq0}$, which is a right-continuous filtration on $(\Omega, \mathcal{F})$ such that $\{\mathcal{F}_0\}$ contains all the $\mathbb{P}$-negligible subsets. For any integrable random variable or process $X$, $\mathbb{E}[X]$ denotes expectation with respect to $\mathbb{P}$. Uncertainty in the model is generated by a $\mathcal{F}_t$-adapted standard one-dimensional Brownian motion $W$. All stochastic integrals are defined in the sense of It\^{o}. 

Let $\mathcal{O} \subset \mathbb{R}^n$ be an arbitrary open domain. We denote by $C^2(\mathcal{O})$ the space of twice continuously differentiable functions on $\mathcal{O}$, and by $W^{2,\infty}_{loc}(\mathcal{O})$ the local Sobolev space of functions whose weak derivatives up to the second order are locally essentially bounded. A $C^{1,2}$-function denotes a function that is once continuously differentiable in the time variable and twice continuously differentiable in all spatial variables. ${\bf 1}_{\{\cdot\}}$ denotes the characteristic function. 

For a function $f(x_1,\dots,x_n)$, $\partial_{i}f$ means $\frac{\partial}{\partial x_i}f$ and $\partial_{ij}f=\frac{\partial^2}{\partial x_i\partial x_j}f$ with $i,j=1,2,\dots, n$. In some cases, we use $\partial_{\bullet}=\frac{\partial}{\partial \bullet}$ to emphasize that the partial derivative is taken with respect to the variable $\bullet$. We denote the first and second derivatives of the transaction cost function by $k'$ and $k''$, respectively.

The rest of this paper is organized as follows. In Section~\ref{sec: finaincal market}, we introduce the market model, present a general transaction cost structure, and review the optimal investment and consumption policies in the absence of transaction costs. In Section~\ref{sec: general costs}, we address the problem involving general transaction costs and present leading-order solutions for optimal asset-allocation strategy, consumption strategy, and the value function. Section~\ref{sec: Linear costs} presents the optimal investment and consumption strategies for linear costs, which represent a combination of fixed and proportional costs. Section~\ref{sec: NonLinear costs} extends the analysis to nonlinear transaction costs and shows that the existence of a no-trade region is governed by the fixed and proportional components of the cost function. Section~\ref{sec: verification} establishes a verification theorem for the exact impulse-control problem under a strictly positive fixed cost, and clarifies its limiting transitions to degenerate control regimes as the fixed cost vanishes. For the sake of readability, the details of the mathematical proofs are provided in Appendices \ref{appendix: proof general}-\ref{appendix: proof effect of costs}.

\section{The Financial Market, Transaction Costs, and the Investor's Objective}\label{sec: finaincal market}
\subsection{The Financial Market}
We consider a financial market consisting of a risk-free asset $B$ and a risky asset $S$ governed by
\begin{equation}\label{Bt St}
{\rm d}B(t)=rB(t){\rm d}t\ \ \text{and}\ \ {\rm d}S(t)=S(t)(\mu{\rm d}t+\sigma{\rm d}W(t)),
\end{equation}
where $r$ is the constant risk-free rate, $\mu$ (assuming $\mu>r$) is the constant expected return, and $\sigma$ is the volatility of the risky asset $S$.  Following \cite{Buraschi-Porchia-Trojani-JF-2010, French-Schwert-Stambaugh-JFE-1987, Kraft-QF-2005, Merton-JFE-1980}, we assume that the risk premium $A = \frac{\mu-r}{\sigma^2}$ is constant, which implies that the stock excess return is proportional to the stock variance.

\subsection{Transaction Costs and Wealth Dynamics}\label{sec:costs-and-wealth}

\noindent\textbf{Transaction Cost Function.} 
In this market, trading the risky asset incurs a transaction cost. When the investor trades $|\Delta n|$ units of the risky asset at the price $S$, the dollar amount traded is $\Delta = |\Delta n| S$, and the associated cost is $k(\Delta)$.
This convention is consistent with the standard proportional transaction-cost formulation. If an investor trades $q$ shares at price $S$, the dollar amount traded is $\Delta=|q|S$, and a proportional transaction cost is therefore of the form $k(\Delta)=k_2\Delta$. This corresponds to the usual bid-ask spread model, in which the transaction cost is proportional to the dollar amount traded. The present formulation is more general because $k(\cdot)$ may also include a fixed component. In that case, a fixed transaction cost is a dollar amount paid per trade and should not be represented as a price-dependent per-share quantity such as $k_1/S$. Thus, we model transaction costs directly as a function of the dollar amount traded.
Throughout the paper, we use $\Delta$ for the dollar size of a trade, while $\triangle$ is used for changes in portfolio proportions and the rescaled perturbation variables, such as $\triangle\phi$, $\triangle W_B$, and $\triangle Y_B$.
We impose the following assumptions on $k(\cdot)$:
\begin{Assumption}\label{ass:cost}
	The transaction cost function $k: [0, \infty) \to [0, \infty)$ is of class $C^2$ on $(0, \infty)$ and satisfies:
	\begin{enumerate}
		\item \textbf{Non-negativity}: $k(z) \ge 0$ for all $z \ge 0$. 
		We adopt the convention that $k(0)=0$ (no cost is paid when no trade occurs), while the right limit $\lim_{z\downarrow 0}k(z)$ may be strictly positive, allowing for a fixed cost component;
		\item\label{Monotonicity} \textbf{Monotonicity}: $k'(z) \ge 0$ for all $z > 0$;
		\item\label{Convexity} \textbf{(Optional) Convexity}: $k''(z) \ge 0$ for all $z > 0$.
	\end{enumerate}
\end{Assumption}

The following remarks clarify Assumption~\ref{ass:cost}.

\begin{Remark}\label{rem:cost-approx}
	\leavevmode\par
	\begin{enumerate}[label=(\alph*)]
		\item The convexity assumption in the present CRRA setting differs from the local convexity condition often used in models with exponential utility. Under exponential utility, the wealth variable can be separated from the value function, so the trading decision is governed directly by the dollar amount traded. In the small-transaction-cost regime, this dollar amount is itself small, and therefore only the local behavior of the transaction cost function near the origin is relevant. By contrast, under CRRA utility the problem is naturally expressed in terms of portfolio proportions. The trading decision depends on the ratio of the dollar amount traded to total wealth. This ratio may be small even when the dollar amount traded is not small in absolute terms, especially for large wealth. Hence the transaction cost function may be evaluated away from the origin. For this reason, we conditionally consider the convexity of \(k(\cdot)\) on \((0,\infty)\), rather than only local convexity near the origin.

		\item Condition~\ref{Monotonicity} rules out anomalous cost functions for which larger trades incur a strictly lower absolute total cost than smaller trades; such specifications would create pathological arbitrage incentives for artificially large transactions and would be inconsistent with the standard no-trade-region structure.
		
		\item\label{control regimes} {Condition~\ref{Convexity} is marked as optional because it dictates the mathematical nature of the optimal control formulation. While general convexity allows for proportional costs, strict convexity ($k''>0$) imposes strictly increasing marginal costs, which inherently penalizes massive block trades. However, strict convexity without a fixed cost implies strict superadditivity, which creates artificial incentives for infinite order splitting to minimize market impact, thereby shifting the problem into an absolutely continuous rate-control regime. By contrast, a strictly positive right limit $\lim_{z\downarrow 0}k(z) > 0$ introduces local subadditivity near the origin. This fixed cost component effectively prevents infinite order splitting from creating artificial cost advantages, guaranteeing that the optimal strategy consists of discrete block trades. This subadditivity is the essential prerequisite for the impulse-control verification theorem (Theorem \ref{verification-thm}) provided in Section \ref{sec: verification}.}

		\item The transaction cost function $k(\cdot)$ is of class $C^2$ only on $(0, \infty)$, which allows a fixed cost component. Our analysis requires \(k(\cdot)\) to be \(C^2\) on \((0,\infty)\). Therefore, transaction cost functions with kink points at positive trade sizes are outside the direct scope of the model, although a jump at the origin is allowed in order to accommodate fixed costs.

		\item Empirical evidence on market impact (e.g., \cite{Almgren-Thum-Risk-2005, Lillo-Farmer-Mantegna-Nature-2003}) indicates that the price impact of individual trades is typically a smooth, \textit{increasing}, and concave function of trade size. In our framework, the convexity of the total transaction cost function is perfectly consistent with this empirical evidence: since the \textit{variable component} of the total cost can be mathematically interpreted as the definite integral of such a positive and increasing price-impact function, its second derivative is positive, yielding a strictly convex cost profile.
	\end{enumerate}
\end{Remark}

\noindent\textbf{Wealth Dynamics.} 
Let $w^B(t)$ and $w^S(t)$ denote the dollar amounts invested in the risk-free and the risky asset, respectively. The investor's trading strategy is described by two nondecreasing, right-continuous, adapted cumulative processes $\mathfrak{L}(t)$ and $\mathfrak{J}(t)$ with $\mathfrak{L}(0) = \mathfrak{J}(0) = 0$, representing the total dollar amounts of purchases and sales of the risky asset up to time $t$. The consumption rate is denoted by $c(t) \ge 0$. 

Between trades, the wealth processes evolve according to the standard \textit{self-financing} dynamics. At a trading time $\tau$, a purchase of dollar amount $\Delta \mathfrak{L} > 0$ or a sale of dollar amount $\Delta \mathfrak{J} > 0$ results in an instantaneous change in the portfolio positions, with the transaction cost $k(\Delta \mathfrak{L} + \Delta \mathfrak{J})$ deducted from the risk-free account. Consequently, the wealth processes satisfy the following integral equations:
\begin{align}
	w^B(t) &= x_1 + \int_{0}^{t} \big(r w^B(s) - c(s)\big) \, ds - \mathfrak{L}(t) + \mathfrak{J}(t) - \sum_{\tau_i \le t} k(\Delta \mathfrak{L}_i + \Delta \mathfrak{J}_i), \label{def-wB} \\ 
	w^S(t) &= x_2 + \int_{0}^{t} \mu w^S(s) \, ds + \int_{0}^{t} \sigma w^S(s) \, dW(s) + \mathfrak{L}(t) - \mathfrak{J}(t), \label{def-wS}
\end{align}
where $\{\tau_i\}$ are the trading times of $\mathfrak{L}$ and $\mathfrak{J}$, and $\Delta \mathfrak{L}_i = \mathfrak{L}(\tau_i) - \mathfrak{L}(\tau_i-)$, $\Delta \mathfrak{J}_i = \mathfrak{J}(\tau_i) - \mathfrak{J}(\tau_i-)$ are the corresponding trade sizes.

\subsection{Solvency Region}

Define the \emph{solvency region} 
\begin{equation}\label{def:solvency}
	\mathcal{S} \triangleq \big\{ (x, y) \in \mathbb{R} \times [0, \infty) : x + y - k(y) > 0 \big\}.
\end{equation}
The solvency region consists of all portfolio positions such that if the investor were forced to liquidate the entire risky position immediately, the proceeds after paying the transaction cost $k(y)$ would be strictly positive. Since short-selling is prohibited, we impose $y \ge 0$. This set is {relatively open in $\mathbb{R} \times [0, \infty)$}, and its closure is denoted by $\bar{\mathcal{S}}$.

\subsection{Admissible Strategies}\label{sec:admissible}

When the transaction cost has a positive fixed component, trading strategies are naturally represented in impulse-control form, since frequent infinitesimal trading would be prohibitively costly. The pure proportional case, in which trading may occur at the boundary in a singular-control form, is recovered later as a limiting regime.

\begin{Definition}[Trading Strategy]\label{def:trading}
	A \emph{trading strategy} is a sequence of triples $\{(\tau_i, \Delta \mathfrak{L}_i, \Delta \mathfrak{J}_i)\}_{i \ge 1}$ such that:
	\begin{enumerate}
		\item $\{\tau_i\}_{i \ge 1}$ is a strictly increasing sequence of $\{\mathcal{F}_t\}$-stopping times with $\tau_i \to \infty$ almost surely;
		\item for each $i$, $\Delta \mathfrak{L}_i$ and $\Delta \mathfrak{J}_i$ are nonnegative, $\mathcal{F}_{\tau_i}$-measurable random variables representing the dollar amounts of purchases and sales at time $\tau_i$, respectively, and they satisfy $\Delta \mathfrak{L}_i \cdot \Delta \mathfrak{J}_i = 0$ almost surely.
	\end{enumerate}
	The corresponding cumulative purchase and sale processes are defined as
	\[
	\mathfrak{L}(t) = \sum_{\tau_i \le t} \Delta \mathfrak{L}_i, \qquad \mathfrak{J}(t) = \sum_{\tau_i \le t} \Delta \mathfrak{J}_i, \qquad t \in [0, T],
	\]
	with $\mathfrak{L}(0)=\mathfrak{J}(0)=0$.
\end{Definition}

\begin{Definition}[Admissible Strategy]\label{def:admissible}
	Let $(t, x, y) \in [0, T] \times \overline{\mathcal{S}}$ be an initial state. 
	\begin{enumerate}
		\item A \emph{strategy} is a triple $\pi = (c, \mathfrak{L}, \mathfrak{J})$ consisting of:
		\begin{enumerate}
			\item a nonnegative, progressively measurable consumption rate process $c = \{c(s)\}_{s \in [t, T]}$ satisfying $\int_t^T c(s) \, ds < \infty$ almost surely;
			\item a trading strategy $(\mathfrak{L}, \mathfrak{J})$ in the sense of Definition~\ref{def:trading}.
		\end{enumerate}
		Given such a strategy, the wealth processes $(w^B, w^S)$ evolve according to \eqref{def-wB}--\eqref{def-wS} on $[t, T]$.
		\item The strategy $\pi$ is called \emph{admissible} if the following conditions hold:
		\begin{enumerate}
			\item \textbf{Solvency}: $(w^B(s), w^S(s)) \in \overline{\mathcal{S}}$ for all $s \in [t, T]$ almost surely.
			\item\label{Integrability} \textbf{Integrability}: The consumption and terminal wealth satisfy
			\[
			\mathbb{E}\left[ \int_t^T e^{-\beta s} |u_1(c(s))| \, ds \right] < \infty, \qquad
			\mathbb{E}\left[ e^{-\beta T} |u_2(w^B(T) + w^S(T))| \right] < \infty.
			\]
			\item\label{Square-integrability} \textbf{Square-integrability}: 
			\[
			\mathbb{E}\left[ \int_t^T \big( w^S(s) \big)^2 \, ds \right] < \infty.
			\]
		\end{enumerate}
	\end{enumerate}
	The set of all such admissible strategies is denoted by $\mathcal{A}(t, x, y)$. We note that $\mathcal{A}(t, x, y) \neq \emptyset$ if and only if $(x, y) \in \mathcal{S}$ (see, e.g., \cite{Shreve-Soner-AAP-1994} for a proof in a related setting).
\end{Definition}
\begin{Remark}
	\label{rem:admissible-justification}
	Condition~(\ref{Square-integrability}) is imposed to ensure that the stochastic integrals appearing in the wealth dynamics and in the subsequent perturbation analysis are true martingales. In particular, it justifies the use of It\^o's formula for terms such as
	\[
	\int_t^{s} \sigma w^S(u)\,dW(u)
	\quad\text{and}\quad
	\int_t^{s} \sigma v_z(u,w(u))w^S(u)\,dW(u),
	\]
	where $v(t,z)$ denotes the frictionless value function in \eqref{mertoncrra V}. 
\end{Remark}

\begin{Remark}
	Even in the presence of super-linear costs (e.g., quadratic costs), the solvency constraint together with the integrability requirements automatically prevents the investor from executing arbitrarily large trades that would incur disproportionately high expenses. Consequently, the definition of admissible strategies remains well-posed and the asymptotic analysis developed in this paper is unaffected by the global growth behavior of $k(\cdot)$.
\end{Remark}

\subsection{The Investor's Objective}

An investor with initial wealth $w_0$ has a CRRA utility function given by 
\begin{equation}\label{crra}
	u(x)=
	\left\{\begin{aligned}
		&\frac{a x^{\gamma}}{\gamma}, && \gamma < 1,\ \gamma \neq 0,\\
		&a \log(x), && \gamma = 0.
	\end{aligned}\right.
\end{equation}
The following derivation is carried out under the assumption $\gamma\neq 0$. The logarithmic utility case $\gamma= 0$ can be obtained as the limiting case $\gamma \to 0$ of the results presented here. 
The investor allocates a fraction $\phi(t)$ of wealth to the risky asset and consumes at rate $c(t)$. Because trading is costly, continuous portfolio rebalancing is generally suboptimal. Instead, optimal trading is characterized by a no-trade region, described below and illustrated in Figure~\ref{fig:no-trade}. The degenerate case in which the no-trade region disappears is discussed in Section~\ref{sec: NonLinear costs}.

The investor's objective is to maximize the expected discounted utility from intermediate consumption and terminal wealth over all admissible strategies:
\begin{align}\label{objective}
	V(t, x, y) = \sup_{(c, \mathfrak{L}, \mathfrak{J}) \in \mathcal{A}(t, x, y)} \mathbb{E}\left[ \int_t^T \alpha e^{-\beta s} u_1(c(s)) \, ds + (1-\alpha) e^{-\beta T} u_2(w^B(T) + w^S(T)) \right],
\end{align}
The parameter $\alpha \in [0,1]$ controls the relative weight placed on intermediate consumption versus terminal wealth, while $a_c$ and $a_w$ normalize the two CRRA utility functions $u_1$ and $u_2$, respectively. The parameter $\beta>0$ is the subjective discount rate. The wealth processes $w^B$ and $w^S$ are driven by the chosen strategy according to \eqref{def-wB}--\eqref{def-wS}.

In the absence of transaction costs ($k(\cdot) \equiv 0$), the optimal portfolio strategy is to hold a constant proportion of total wealth in risky assets $\phi^* = \frac{\mu-r}{(1-\gamma)\sigma^2}$ (the Merton proportion), which for typical parameter values lies well inside the interval $(0,1)$, and continuous trading is optimal. For the general cost structure considered here, closed-form solutions are unavailable. We therefore develop an asymptotic expansion for {a small cost-to-wealth ratio}.

The optimal strategy is characterized by four free boundaries: the buy boundary $\phi_B$, the post-buy boundary $\widehat{\phi}_B$, the sell boundary $\phi_S$, and the post-sell boundary $\widehat{\phi}_S$. In general, these boundaries depend on the remaining horizon and current wealth. A schematic illustration is provided in Figure~\ref{fig:no-trade}. 
\begin{figure}[H]
	\centering
	\includegraphics[height=8.5cm, width=8.7cm]{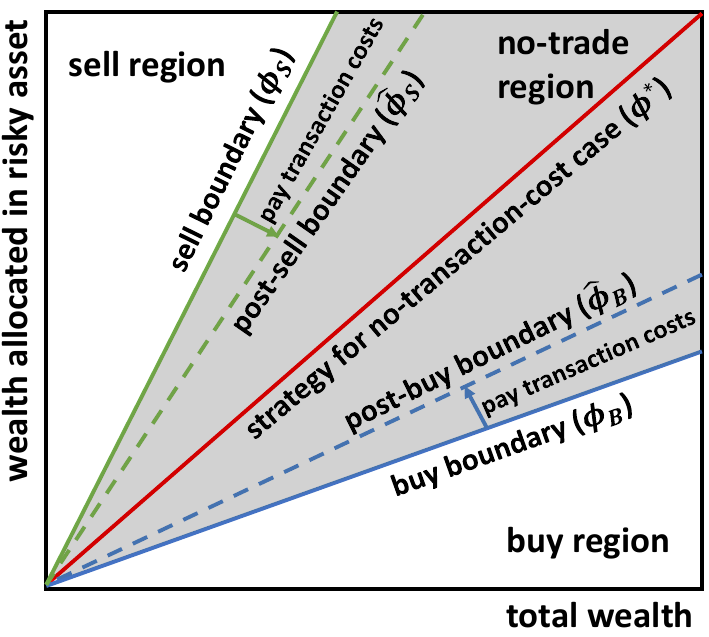}
	\caption{\textbf{A sketch of optimal asset allocation strategies with transaction costs.} The gray area represents the no-trade region, within which the frictionless optimal allocation $\phi^*$ (red line) is located. The four free boundaries are indicated: $\phi_B$ (buy), $\widehat{\phi}_B$ (post-buy), $\phi_S$ (sell), and $\widehat{\phi}_S$ (post-sell).}
	\label{fig:no-trade}
\end{figure}
Figure~\ref{fig:no-trade} can be interpreted as follows. Before making a trading decision, the investor weighs two competing factors: (1)~the benefit of rebalancing the portfolio toward the frictionless optimum, and (2)~the transaction costs incurred by trading. When the benefit from rebalancing is smaller than the associated costs, the investor refrains from trading; the portfolio is then said to lie within the no-trade region. Conversely, when the risky allocation falls below the buy boundary, the benefit of increasing the risky exposure outweighs the transaction costs, prompting the investor to buy. The boundary separating the no-trade region from the buy region is denoted by $\phi_B$. A purchase moves the portfolio allocation to the post-buy boundary $\widehat{\phi}_B$, which remains inside the no-trade region. The sell boundary $\phi_S$ and post-sell boundary $\widehat{\phi}_S$ admit a symmetric interpretation.

For the subsequent analysis, it is convenient to describe the portfolio in terms of the total wealth $w(t) = w^B(t) + w^S(t)$ and the fraction $\phi(t) = w^S(t) / w(t)$ invested in the risky asset. In terms of these variables, the value function is defined as
\begin{align}\label{crra value function definition}
	\widetilde V(t,w,\phi) = \sup_{(c, \mathfrak{L}, \mathfrak{J}) \in \mathcal{A}(t, (1-\phi)w, \phi w)} \mathbb{E}\left[ \int_t^T \alpha e^{-\beta s} u_1(c(s)) \, ds + (1-\alpha) e^{-\beta T} u_2(w(T)) \,\Big|\, \mathcal{F}_t \right],
\end{align}
where the supremum is taken over all admissible strategies in the sense of Definition~\ref{def:admissible}, with initial wealth components $x = (1-\phi)w$ and $y = \phi w$. The terminal condition is
\[
\widetilde V(T,w,\phi) = (1-\alpha)e^{-\beta T}u_2(w).
\]

Let $\tau = T - t$ denote the remaining investment horizon. Exploiting the homogeneity of the CRRA utility functions, we factor the value function as
\begin{equation}\label{crraV}
	\widetilde V(t,w,\phi) = V(\tau,w,\phi) = e^{-\beta (T-\tau)} \frac{a_w w^{\gamma}}{\gamma} \big[ f(\tau,w,\phi) \big]^{1-\gamma},
\end{equation}
where the function $f(\tau,w,\phi)$ is to be determined. The initial condition at $\tau = 0$ is 
\[
f(0,w,\phi) = (1-\alpha)^{\frac{1}{1-\gamma}},
\]
independent of both $w$ and $\phi$.
\begin{Remark}
	In the absence of transaction costs, the optimization problem is scale-invariant: multiplying initial wealth by a constant factor leaves the optimal portfolio weight and consumption-to-wealth ratio unchanged, while the value function scales by the corresponding CRRA power. Hence the frictionless factor $f^*$ depends only on the remaining horizon $\tau$. 
\end{Remark}

For future reference, we introduce several constants that will appear frequently in the asymptotic expansions:
\begin{align}\label{def_a}
	a_1 = \left(\frac{\alpha a_c}{a_w}\right)^{\frac{1}{1-\gamma}}, \quad
	a_2 = \frac{\gamma\sigma^2}{2} a_4^2 + \frac{\gamma r - \beta}{1-\gamma}, \quad
	a_3 = (1-\alpha)^{\frac{1}{1-\gamma}}, \quad
	a_4 = \frac{A}{1-\gamma}, \quad
	A = \frac{\mu-r}{\sigma^2}.
\end{align} 

\subsection{Optimal policies in the absence of transaction costs}\label{sec: no costs}
If the transaction-cost-to-wealth ratio is of order one, the trade is unlikely to be beneficial to the investor. Therefore, in practice, trades usually occur when this ratio is small. We therefore develop a singular perturbation expansion around the frictionless solution. The perturbation parameter is the transaction-cost-to-wealth ratio. For the reader's convenience, we briefly recall the classical Merton solution for the case $k(\cdot) \equiv 0$.

In the absence of transaction costs, the optimal investment and consumption strategies are given by
\begin{align}\label{mertoncrra phi c}
	\phi^* = \frac{A}{1-\gamma}, \qquad c^*(\tau,w) = \frac{a_1 w}{f^*(\tau)},
\end{align}
and the corresponding value function is
\begin{align}\label{mertoncrra V}
	V^*(\tau,w) = e^{-\beta (T-\tau)} \frac{a_w w^{\gamma}}{\gamma} \big[ f^*(\tau) \big]^{1-\gamma},
\end{align}
where
\begin{align}\label{mertoncrra f}
	f^*(\tau) = \left( \frac{a_1}{a_2} + a_3 \right) e^{a_2 \tau} - \frac{a_1}{a_2}.
\end{align}
The constants $a_1, a_2, a_3$ and $A$ are defined in \eqref{def_a}. Detailed derivations can be found in \cite{Merton-RES-1969, Merton-JET-1971, Merton-JFE-1980}.

The frictionless value function \eqref{mertoncrra V} and optimal policies \eqref{mertoncrra phi c} serve as the base point for our singular perturbation expansion for {a small cost-to-wealth ratio}.

\section{Optimal policies under a general transaction cost structure}\label{sec: general costs}
The frictionless policies in \eqref{mertoncrra phi c} require continuous portfolio rebalancing, which is no longer optimal when trading is costly. In this section, we analyze the transaction cost function $k(\cdot)$ introduced in Assumption~\ref{ass:cost}, where the dollar amount traded is $\Delta=\triangle\phi(t)w(t)$. The key assumptions are $C^2$ smoothness on $(0,\infty)$, monotonicity, and convexity on $(0,\infty)$. We also allow a fixed-cost component through the right-limit interpretation that the cost of an arbitrarily small positive trade may be strictly positive, while no cost is incurred when no trade is made.

As discussed in Section~\ref{sec: finaincal market} and illustrated in Figure~\ref{fig:no-trade}, transaction costs give rise to a no-trade region bounded by four free boundaries: the buy boundary $\phi_B$, the post-buy boundary $\widehat{\phi}_B$, the sell boundary $\phi_S$, and the post-sell boundary $\widehat{\phi}_S$. We now derive the governing equation inside the no-trade region and the boundary conditions that determine these four boundaries.

The degenerate case in which the no-trade region collapses to zero is discussed in Section~\ref{sec: NonLinear costs}; this case occurs when the transaction cost function contains only a nonlinear component.

\noindent{\bf Governing equation in the no-trade region:} 		
Inside the no-trade region, only the consumption policy, $c$, is a control variable. From \eqref{def-wB}-\eqref{def-wS} and $w(t) = w^B(t) + w^S(t)$, we have
\begin{align}
{\rm d}\phi(t)=\,&-{\rm d}(1-\phi(t))\notag\\
=\,&-{\rm d}\left(\frac{w^B(t)}{w(t)}\right)=-\frac{1}{w(t)}{\rm d}w^B(t)-w^B(t){\rm d}\left(\frac{1}{w(t)}\right)\notag\\
=\,&\left[\sigma^2\phi(t)(1-\phi(t))(A-\phi(t))+\frac{\phi(t)}{w(t)}c(t)\right]{\rm d}t+\sigma\phi(t)(1-\phi(t)){\rm d}W.\label{d phi}
\end{align}
Here we use the no-trading conditions, namely,
$\Delta\mathfrak{L}(t)=\Delta\mathfrak{J}(t) = \triangle\phi(t)\,w =0$.

The H-J-B equation for the value function, $V(\tau,w,\phi)$, is
\begin{align}\label{HJBVcrra}
\sup_{c}\Big\{&\alpha e^{-\beta (T-\tau)}u_1(c)-\partial_{1}V+[(r+A\sigma^2\phi)w-c]\partial_{2}V+\frac{1}{2}\sigma^2\phi^2w^2\partial_{22}V+\sigma^2\phi(1-\phi)(A-\phi)\partial_{3}V\notag\\
&+\frac{\phi}{w}c\,\partial_{3}V+\frac{1}{2}\sigma^2\phi^2(1-\phi)^2\partial_{33}V+\sigma^2\phi^2(1-\phi)w\partial_{23}V\Big\}=0,
\end{align}
with the initial condition $V(0,w,\phi)=(1-\alpha)e^{-\beta T}u_2(w)$.  
From \eqref{crra} and \eqref{crraV}, \eqref{HJBVcrra} can be expressed as
\begin{align}\label{HJBVcrra-f}
	\sup_{c}\Big\{&\frac{\alpha a_c}{a_w}f^{1-\gamma}\left(\frac{c}{w}\right)^{\gamma}-\big[\gamma+(1-\gamma)f^{-1}\big(w\partial_{2}f-\phi \partial_{3}f\big)\big]\left(\frac{c}{w}\right)+(1-\gamma)f^{-1}\Big[-\partial_{1}f+h_2(\tau,\phi)f\Big]\notag\\
	&+(1-\gamma)f^{-1}\Big[\big[h_1(\tau,\phi)-\sigma^2\phi^2\big](1-\phi)\partial_{3}f+\big[r+h_1(\tau,\phi)\big]w\,\partial_{2}f\Big]\notag\\
	&+(1-\gamma)f^{-1}\Big[\frac{1}{2}\sigma^2\phi^2(1-\phi)^2\Big(\partial_{33}f-\gamma \frac{(\partial_{3}f)^2}{f}\Big)+\frac{1}{2}\sigma^2\phi^2w^2\Big(\partial_{22}f-\gamma \frac{(\partial_{2}f)^2}{f}\Big)\Big]\notag\\
	&+(1-\gamma)f^{-1}\Big[\sigma^2\phi^2(1-\phi)w\Big(\partial_{23}f-\gamma\frac{\partial_{3}f\partial_{2}f}{f}\Big)\Big]
	\Big\}=0,
\end{align}
where 
\vspace*{-10pt}
\begin{align*}
	&h_1(\tau,\phi)=A\sigma^2\phi+\gamma\sigma^2\phi^2,\\ 
	&h_2(\tau,\phi)=\frac{\gamma r-\beta}{1-\gamma}+\frac{\gamma}{1-\gamma}\sigma^2A\phi-\frac{1}{2}\gamma\sigma^2\phi^2,
\end{align*} 
then the optimal consumption strategy, expressed in terms of $f=f(\tau,w,\phi)$, is
\begin{equation}\label{crraoptimalc}
c(\tau,w,\phi)=a_1wf^{-1}\left[1+\frac{1-\gamma}{\gamma}\frac{w\,\partial_{2}f-\phi\, \partial_{3}f}{f}\right]^{\frac{1}{\gamma-1}}.
\end{equation}
After substituting \eqref{crraoptimalc} into \eqref{HJBVcrra-f}, we obtain the following governing equation inside the no-trade region 
\vspace*{-10pt}
\begin{align}\label{crraH-J-BVc*}
&-\partial_{1}f+\left[h_1(\tau,\phi)-\sigma^2\phi^2\right](1-\phi)\partial_{3}f+\left[r+h_1(\tau,\phi)\right]w\,\partial_{2}f+\frac{1}{2}\sigma^2\phi^2(1-\phi)^2\left(\partial_{33}f-\gamma \frac{(\partial_{3}f)^2}{f}\right)\notag\\
&+\frac{1}{2}\sigma^2\phi^2w^2\left(\partial_{22}f-\gamma \frac{(\partial_{2}f)^2}{f}\right)+\sigma^2\phi^2(1-\phi)w\left(\partial_{23}f-\gamma\frac{\partial_{3}f\partial_{2}f}{f}\right)\notag\\
&+h_2(\tau,\phi)f+a_1h_3(\tau,w,\phi,f)=0,
\end{align}
with initial condition $f(0,w,\phi)=a_3$, where $a_1$, $a_3$ are defined in \eqref{def_a}, and
\begin{align*}
&h_3(\tau,w,\phi,f)=\left[1+\frac{1-\gamma}{\gamma}\frac{w\,\partial_{2}f-\phi\, \partial_{3}f}{f}\right]^{\frac{\gamma}{\gamma-1}}.
\end{align*} 

\noindent{\bf Boundary conditions for the four free boundaries.}
At a trading time, the value function is continuous across the trade, and optimality requires the marginal value of the trade to balance its marginal transaction cost. We first state these conditions on the buy side. 

When the investor buys at the boundary $\phi_B$, the risky-asset proportion is shifted to the post-buy boundary $\widehat{\phi}_B$. The corresponding post-trade wealth is
\[
\widehat{w}_b = w - k\big((\widehat{\phi}_B-\phi_B)w\big).
\]
Value matching at the buy boundary gives
\begin{align}
	V(\tau,w,\phi_B)=&V(\tau,\widehat{w}_b,\widehat{\phi}_B), \label{buy_value}
\end{align}
which, via the factorization \eqref{crraV}, becomes
\begin{align}
	f(\tau,w,\phi_B)=&\left(\frac{\widehat{w}_b}{w}\right)^{\frac{\gamma}{1-\gamma}}f(\tau,\widehat{w}_b,\widehat{\phi}_B). \label{crrabc1}
\end{align}

We optimize the expected utility by choosing when to buy ($\phi_B$) and how much to buy ($\widehat{\phi}_B-\phi_B$). Applying the variation principle to \eqref{crrabc1} with respect to $\phi_B$ and $\widehat{\phi}_B$, respectively, we obtain
\begin{align}
	\partial_{\phi_B}f(\tau,w,\phi_B)
	=\,&k'\big((\widehat{\phi}_B-\phi_B)w\big)w\left\{\frac{\gamma}{1-\gamma}\frac{1}{\widehat{w}_b}f(\tau,w,\phi_B)+\left(\frac{\widehat{w}_b}{w}\right)^{\frac{\gamma}{1-\gamma}}\partial_{2}f(\tau,\widehat{w}_b,\widehat{\phi}_B)\right\},\label{crrabc2}
	\\
	\partial_{\widehat{\phi}_B}f(\tau,\widehat{w}_b,\widehat{\phi}_B)=\,&k'\big((\widehat{\phi}_B-\phi_B)w\big)w\left\{\partial_{2}f(\tau,\widehat{w}_b,\widehat{\phi}_B)+\frac{\gamma}{1-\gamma}\left(\frac{\widehat{w}_b}{w}\right)^{-\frac{1}{1-\gamma}}\frac1wf(\tau,w,\phi_B)\right\}.\label{crrabc3}
\end{align}
The three boundary conditions, \eqref{crrabc1}--\eqref{crrabc3}, determine $\phi_B$ and $\widehat{\phi}_B$.

Similarly, the following three equations determine $\phi_S$ and $\widehat{\phi}_S$:
\begin{align}
	f(\tau,w,\phi_S)=\, &\left(\frac{\widehat{w}_s}{w}\right)^{\frac{\gamma}{1-\gamma}}f(\tau,\widehat{w}_s,\widehat{\phi}_S),\label{crrasc1}
	\\
	\partial_{\phi_S}f(\tau,w,\phi_S)=\, &-k'\big((\phi_S-\widehat{\phi}_S)w\big)w\left\{\frac{\gamma}{1-\gamma}\frac{1}{\widehat{w}_s}f(\tau,w,\phi_S)+\left(\frac{\widehat{w}_s}{w} \right)^{\frac{\gamma}{1-\gamma}}\partial_{2}f(\tau,\widehat{w}_s,\widehat{\phi}_S)\right\},\label{crrasc2}
	\\
	\partial_{\widehat{\phi}_S}f(\tau,\widehat{w}_s,\widehat{\phi}_S)=\, &-k'\big((\phi_S-\widehat{\phi}_S)w\big)w\left\{\partial_{2}f(\tau,\widehat{w}_s,\widehat{\phi}_S)+\frac{\gamma}{1-\gamma}\left(\frac{\widehat{w}_s}{w} \right)^{-\frac{1}{1-\gamma}}\frac1wf(\tau,w,\phi_S)\right\},\label{crrasc3}
\end{align} 
where $w$ is the wealth before selling and $\widehat{w}_s =w+\triangle w_s$ is the wealth after selling. The sale changes wealth by $\triangle w_s=-k\big((\phi_S-\widehat{\phi}_S)w\big)$ due to the transaction cost associated with the sale.

In summary, to obtain optimal investment and consumption strategies, we solve the nonlinear partial differential equation \eqref{crraH-J-BVc*} in the no-trade region together with six boundary conditions---three associated with the buy and post-buy positions \eqref{crrabc1}--\eqref{crrabc3}, and three associated with the sell and post-sell positions \eqref{crrasc1}--\eqref{crrasc3}.

The boundary formulas below show that the post-trading boundaries lie inside the no-trade region, satisfying $\phi_B \leq \widehat{\phi}_B \leq \phi^* \leq \widehat{\phi}_S \leq \phi_S$, where $\phi^*$ is the frictionless Merton proportion \eqref{mertoncrra phi c}. Hence the no-trade region is $\mathcal{NT} = \{ \phi \in [0,1] : \phi_B(\tau,w) \leq \phi \leq \phi_S(\tau,w) \}$.

The expansion applies when the normalized cost-to-wealth ratio is small, i.e., $\widehat{k}(\triangle\phi,w) \triangleq \frac{k(\triangle\phi\,w)}{w}\ll1$. We refer to $\widehat{k}(\cdot,\cdot)$ as the normalized transaction cost. We apply a singular perturbation expansion $\phi = \phi^* + O\big((\widehat{k}(\triangle\phi,\,w))^\lambda\big)$ to determine the solution within the no-trade region and the free boundaries. The exponent $\lambda$ is determined by the local behavior of the cost function $k(\cdot)$ at the origin. For purely proportional costs ($\lim_{z\downarrow0}k(z)=0$, $k'(0+)>0$), one has $\lambda = 1/3$ \cite{Possamai-Soner-Touzi-CPDE-2015}. For costs with a fixed component ($\lim_{z\downarrow0}k(z)>0$), regardless of the presence of a proportional component, the leading-order scaling is governed by the constant term, yielding $\lambda = 1/4$ \cite{Altarovici-Muhle-Soner-FS-2015}. Since our general framework allows $\lim_{z\downarrow0}k(z) \ge 0$, the dominant asymptotic regime is determined by whether this right limit vanishes. The expansion naturally encompasses both regimes, and our leading-order solution unifies the pure fixed-cost, pure proportional-cost, and mixed cases. The derivations are presented in Appendix \ref{appendix: proof general}. The main results for the general cost structure are summarized in Theorem~\ref{General}; the special cases of purely fixed, purely proportional, and linear costs are recovered in Corollaries~\ref{CRRA fixed}, \ref{CRRA proportional}, and~\ref{CRRA linear}, respectively.

\begin{Theorem}[Leading-order solution under general transaction costs]\label{General}
	Let the transaction cost function $k(\cdot)$ satisfy Assumption \ref{ass:cost} and let the total wealth $w$ be such that the normalized cost-to-wealth ratio $\widehat{k}(\triangle\phi,w) \triangleq k(\triangle\phi\,w)/w \ll 1$ for the relevant trade-size fractions $\triangle\phi$. Then the leading-order solutions to \eqref{crraH-J-BVc*} subject to the boundary conditions \eqref{crrabc1}--\eqref{crrasc3} are the following:
	\begin{enumerate}
		\item The optimal buy boundary $\phi_B$, post-buy boundary $\widehat{\phi}_B$, sell boundary $\phi_S$, and post-sell boundary $\widehat{\phi}_S$ are given by
		\begin{align}
			&\phi_S=\phi^*+\tfrac{1}{2}\left(\widetilde{x}+x\right),\  \widehat{\phi}_S=\phi^*+\tfrac{1}{2}\left(\widetilde{x}-x\right),\label{optimal sell boundary}\\
			&\phi_B=\phi^*-\tfrac{1}{2}\left(\widetilde{x}+x\right),\  \widehat{\phi}_B=\phi^*-\tfrac{1}{2}\left(\widetilde{x}-x\right).\label{optimal buy boundary}
		\end{align}
			Here $x$ and $\widetilde{x}$ are determined by a selected nonnegative solution of the system of equations
		\begin{align}\label{original generalx}
			\left(\widetilde{x}^{2}-x^{2}\right)\widetilde{x}=A_2 (x,w) ,\ x^3\widetilde{x}=A_1(x,w),
		\end{align}
		where 
		\begin{equation}\label{general A1 A2}
			\left\{\begin{aligned}
				&A_1(x,w)=H\Big[\frac{k(xw)}{w}-xk'(xw)\Big],\  A_2(x,w)=Hk'(xw),\\ 
				& H=12\tfrac{(\phi^{*})^2}{1-\gamma}\left(1-\tfrac{A}{1-\gamma}\right)^2.
			\end{aligned}\right.
		\end{equation} 
		
			When $x\neq 0$, $x$ and $\widetilde{x}$ are given by
			\begin{align}
				&A_2(x,w) x^{9}+A_1 (x,w) x^{8}=A_1^3(x,w),\label{x}\\
				&\widetilde{x}=\frac{A_1(x,w)}{x^3}.\label{xtilde}
			\end{align}
			For a genuinely general cost function, \(A_1(x,w)\) and \(A_2(x,w)\) depend on the unknown \(x\), so \eqref{x} is an implicit scalar equation rather than a polynomial with fixed coefficients. In the special fixed, proportional, and fixed-plus-proportional cases, \(A_1\) and \(A_2\) reduce to constants and the relevant nonnegative solution is unique, as discussed below.
		
		\item The optimal value function is
		\begin{align}
			V(\tau,w)=V^*(\tau,w)\left[1+(1-\gamma)\frac{\widehat{f}(\tau,w)}{f^{*}(\tau)}\right].
			\label{crraleadingV}
		\end{align}
		\item The optimal consumption rate is
		\begin{align}
			c(\tau,w)=&c^*(\tau,w)-\frac{a_1w}{[f^*(\tau)]^{2}}\left[\widehat{f}(\tau,w)+\frac{1}{\gamma}w\,\partial_{2}\widehat{f}(\tau,w)\right].\label{crraleadingc}
		\end{align}
	\end{enumerate}
	In \eqref{crraleadingV} and \eqref{crraleadingc}, $\widehat{f}(\tau,w)$ is 
	\begin{align}
		&\widehat{f}(\tau,w)=-\gamma\sigma^2 \psi(\tau) \left(3\widetilde{x}^2+x^2\right),\label{crraV2}
	\end{align}
	where
	\begin{align}
		\psi(\tau)
		=\,\frac{1}{24}\left[e^{a_2\tau}\frac{a_2(a_1+a_2a_3)\tau-a_1}{a_2^2}+\frac{a_1}{a_2^2}\right].
		\label{hatI}
	\end{align}
	The constants $a_1, a_2$ and $a_3$ are defined in \eqref{def_a}. 
	
	In \eqref{optimal sell boundary}--\eqref{hatI}, $\phi^*$, $c^*$, $f^*$, and $V^*$ are the frictionless solutions, given by \eqref{mertoncrra phi c}, \eqref{mertoncrra f}, and \eqref{mertoncrra V}, respectively.	
\end{Theorem}
\begin{proof}
	See Appendix \ref{appendix: proof general}.  
\end{proof}

\begin{Remark}
Section~\ref{sec: verification} gives a classical verification theorem for the exact QVI formulation of the control problem. That theorem is conditional on the existence of a sufficiently smooth solution satisfying the stated solvency, growth, intervention, and admissibility conditions. A separate convergence theorem showing that the leading-order asymptotic formulas in Theorem~\ref{General} converge to the exact QVI solution is not studied here.
\end{Remark}

	The formulas below concern the nontrivial case in which the selected solution of \eqref{original generalx} is not identically zero. If \(A_1(x,w)=A_2(x,w)=0\) at a positive trade size \(x\), then \(k(xw)=0\) and \(k'(xw)=0\) at that trade size. The fully frictionless model is one such case. The degenerate case \(x=\widetilde{x}=0\), which can occur for purely nonlinear convex costs with no fixed or proportional leading component, is treated separately in Section~\ref{sec: NonLinear costs}.

\begin{Remark}\label{Remark special cases of original generalx}
		We now record the explicit solutions in the special cases where the effective quantities \(A_1\) and \(A_2\) can be treated as nonnegative constants, as happens for fixed and proportional leading components.
	\begin{enumerate}[label=(\alph*)]
			\item\label{A2 is 0} When $A_1 \neq 0$ and $A_2=0$ are constants, \eqref{original generalx} becomes $\left(\widetilde{x}^{2}-x^{2}\right)\widetilde{x}=0$, $x^3\widetilde{x}=A_1$. This gives 
		\begin{equation}\label{A1}
			x=\widetilde{x}=A_1^{\frac14}.
		\end{equation}

			\item\label{A1 is 0} When $A_1=0$ and $A_2 \neq0$ are constants, \eqref{original generalx} becomes $\left(\widetilde{x}^{2}-x^{2}\right)\widetilde{x}=A_2$, $x^3\widetilde{x}=0$. This gives 
		\begin{equation}\label{A2}
			x=0,\ \widetilde{x}=A_2^{\frac13}.
		\end{equation}
	\end{enumerate}
	\end{Remark}
	\begin{Remark}
		In cases \ref{A2 is 0} or \ref{A1 is 0}, the obtained solution is nonnegative and unique. 
		When \(A_1\) and \(A_2\) are nonnegative constants, as in the fixed-plus-proportional case, Descartes' rule of signs applied to \eqref{x} gives a unique positive real root for \(x\), and then \eqref{xtilde} determines a unique positive \(\widetilde{x}\). For a genuinely nonlinear cost function, however, \(A_1(x,w)\) and \(A_2(x,w)\) depend on \(x\); in that case \eqref{x} is an implicit equation, and uniqueness is not asserted without additional monotonicity conditions.
	\end{Remark}
\begin{Remark}
	Since {the normalized cost-to-wealth quantities are small}, namely, $A_1(x,w)\ll 1$, $A_2(x,w)\ll 1$, from \eqref{optimal sell boundary} and \eqref{optimal buy boundary}, 
	the post-sell boundary lies below the sell boundary, and the post-buy boundary lies above the buy boundary, namely $\phi_B\leq\widehat{\phi}_B\leq\phi^*\leq\widehat{\phi}_S\leq\phi_S$. In other words, the post-trading boundaries are inside the no-trade region, as illustrated in Figure \ref{fig:no-trade}.
\end{Remark}
\begin{Remark}
		The equations in \eqref{original generalx} show that a non-degenerate no-trade width requires a nonzero nonnegative contribution from the leading fixed or proportional component of the transaction cost. The purely nonlinear case, for which the leading-order no-trade width collapses to zero, is discussed separately in Section~\ref{sec: NonLinear costs}.
	\end{Remark}
Transaction costs play an important role in investment and consumption strategies. In particular, it is crucial to understand how transaction costs affect the buy boundary $\phi_B$, post-buy boundary $\widehat{\phi}_B$, sell boundary $\phi_S$, and post-sell boundary $\widehat{\phi}_S$. From \eqref{general A1 A2}, the normalized (dimensionless) cost $\frac{k(xw)}{w}$ can be expressed as
\begin{align*}
	\frac{k(xw)}{w} = \frac{1}{H}\big(A_1(x,w) + xA_2(x,w)\big).
\end{align*}
The term $\frac{xA_2(x,w)}{H}$ can be interpreted as an effective proportional cost with a trading-size-dependent coefficient $\frac{A_2(x,w)}{H}$, while $\frac{A_1(x,w)}{H}$ represents the portion of the transaction cost with the effective proportional cost removed. We now characterize how $A_1$ and $A_2$ affect $\phi_B$, $\widehat{\phi}_B$, $\phi_S$, and $\widehat{\phi}_S$.
\begin{Theorem}\label{effect of costs}
	Let $\phi_B$, $\widehat{\phi}_B$, $\phi_S$, and $\widehat{\phi}_S$ be, respectively, the optimal buy, post-buy, sell, and post-sell boundaries specified in Theorem \ref{General} for given $\tau$, $w$, $A_1(x,w)$ and $A_2(x,w)$. Then  
	\begin{enumerate}[label=(\alph*)]
			\item\label{effect of A1} For fixed $A_2(x,w)$, $\phi_S$ increases and $\phi_B$ decreases with $A_1(x,w)$. Therefore, the no-trade region widens as $A_1(x,w)$ increases. Furthermore, $\widehat{\phi}_S$ decreases and $\widehat{\phi}_B$ increases with $A_1(x,w)$.  Consequently, both the buy trade size $|\widehat{\phi}_B-\phi_B|$ and the sell trade size $|\widehat{\phi}_S-\phi_S|$ increase with $A_1(x,w)$.	
			\item\label{effect of A2} For fixed $A_1(x,w)$, $\phi_S$ increases and $\phi_B$ decreases with $A_2(x,w)$. Therefore, the no-trade region widens as $A_2(x,w)$ increases. Although $\widehat{\phi}_S$ increases and $\widehat{\phi}_B$ decreases with $A_2(x,w)$, both $|\widehat{\phi}_B-\phi_B|$ and  $|\widehat{\phi}_S-\phi_S|$ decrease with $A_2(x,w)$. Thus, both the buy and sell trade sizes decrease with $A_2(x,w)$.
	\end{enumerate}
\end{Theorem}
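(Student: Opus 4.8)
The plan is to reduce all six monotonicity claims to the dependence of the two scalars $x$ and $\tilde{x}$ on the parameters $A_1(x,w)$ and $A_2(x,w)$. From \eqref{optimal sell boundary}--\eqref{optimal buy boundary} the no-trade width is $\phi_S-\phi_B=\tilde{x}+x$, the post-trading offsets obey $\widehat{\phi}_S-\phi^*=\phi^*-\widehat{\phi}_B=\tfrac12(\tilde{x}-x)$, and both trading sizes equal $|\widehat{\phi}_B-\phi_B|=|\widehat{\phi}_S-\phi_S|=x$. Thus it suffices to determine the signs of $\partial_{A_1}x,\ \partial_{A_2}x,\ \partial_{A_1}\tilde{x},\ \partial_{A_2}\tilde{x}$ together with those of the combinations $\partial_{A_i}(\tilde{x}\pm x)$, after which every assertion in (a) and (b) follows by the chain rule.

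I would treat $x,\tilde{x}$ as implicit functions of $(A_1,A_2)$ via \eqref{original generalx}, rewritten as $F_1:=\tilde{x}^{3}-x^{2}\tilde{x}-A_2=0$ and $F_2:=x^{3}\tilde{x}-A_1=0$. The Jacobian of $(F_1,F_2)$ with respect to $(x,\tilde{x})$ is
\[
J=\begin{pmatrix}-2x\tilde{x} & 3\tilde{x}^{2}-x^{2}\\ 3x^{2}\tilde{x} & x^{3}\end{pmatrix},\qquad \det J=x^{2}\tilde{x}\,(x^{2}-9\tilde{x}^{2}).
\]
The structural input is the ordering $\tilde{x}\ge x\ge 0$ (nonnegativity from Theorem \ref{General}; $\tilde{x}\ge x$ because the post-trading boundaries lie in the no-trade region), which on the interior $A_1,A_2>0$ is strict, $\tilde{x}>x>0$, so $x^{2}-9\tilde{x}^{2}<0$ and hence $\det J<0$. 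The sole degenerate locus $\det J=0$ is $x=0$, i.e. $A_1=0$, which is exactly the proportional-cost case solved explicitly in Remark \ref{Remark special cases of original generalx}(\ref{A1 is 0}) by $x=0,\ \tilde{x}=A_2^{1/3}$; there the monotonicities are immediate, so this locus can be excluded from the implicit-function step.

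Inverting $J$ gives
\[
\partial_{A_1}x=-\frac{3\tilde{x}^{2}-x^{2}}{\det J},\quad \partial_{A_2}x=\frac{x^{3}}{\det J},\quad \partial_{A_1}\tilde{x}=-\frac{2x\tilde{x}}{\det J},\quad \partial_{A_2}\tilde{x}=-\frac{3x^{2}\tilde{x}}{\det J}.
\]
With $\det J<0$ and $\tilde{x}\ge x$ these yield $\partial_{A_1}x>0,\ \partial_{A_1}\tilde{x}>0,\ \partial_{A_2}x<0,\ \partial_{A_2}\tilde{x}>0$, which already proves the growth of the trading size $x$ and the widening $\partial_{A_1}(\tilde{x}+x)>0$ in part (a), and the shrinking of $x$ in part (b). For the remaining combinations I would factor over $\tilde{x}\ge x$: $\partial_{A_1}(\tilde{x}-x)=(\tilde{x}-x)(3\tilde{x}+x)/\det J\le 0$ gives the decrease of $\widehat{\phi}_S$ and the increase of $\widehat{\phi}_B$ in $A_1$; $\partial_{A_2}(\tilde{x}+x)=x^{2}(x-3\tilde{x})/\det J>0$ gives the widening in $A_2$; and $\partial_{A_2}(\tilde{x}-x)=-x^{2}(3\tilde{x}+x)/\det J>0$ gives the increase of $\widehat{\phi}_S$ and the decrease of $\widehat{\phi}_B$ in $A_2$.

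I expect the only genuine obstacle to be the two combinations in which $\partial_{A_i}x$ and $\partial_{A_i}\tilde{x}$ carry opposite relative signs, namely $\partial_{A_1}(\tilde{x}-x)$ (both partials positive, so the difference is not termwise signed) and $\partial_{A_2}(\tilde{x}+x)$ (partials of opposite sign). Their signs are pinned down only by the factorizations $3\tilde{x}^{2}-2x\tilde{x}-x^{2}=(\tilde{x}-x)(3\tilde{x}+x)$ and $x^{3}-3x^{2}\tilde{x}=x^{2}(x-3\tilde{x})$, each resolved by the single inequality $\tilde{x}\ge x$. Everything else is a routine sign count once $\tilde{x}\ge x\ge 0$ and $\det J\neq 0$ on the interior $A_1,A_2>0$ are secured.
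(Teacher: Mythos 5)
Your proposal is correct and is essentially the paper's own argument: the paper likewise differentiates the system \eqref{original generalx} implicitly with respect to $(A_1,A_2)$ and signs the resulting derivatives of $x$ and $\tilde{x}$ using nonnegativity (equivalently $\tilde{x}\ge x\ge 0$), only it performs the elimination equation-by-equation for parts (a) and (b) separately rather than inverting the Jacobian once, and it obtains positivity by back-substituting $A_2=(\tilde{x}^{2}-x^{2})\tilde{x}$ (e.g.\ writing $9\tilde{x}^{2}-x^{2}=8\tilde{x}^{2}+A_2/\tilde{x}$) instead of using your factorizations $(3\tilde{x}+x)(\tilde{x}-x)$ and $x^{2}(x-3\tilde{x})$. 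A minor point in your favor: you explicitly isolate the degenerate locus $\det J=0$ (i.e.\ $x=0$, the case $A_1=0$) and dispose of it via the explicit solution, whereas the paper silently divides by $x$ and $\tilde{x}$ without flagging this case.
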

\begin{proof}
	See Appendix \ref{appendix: proof effect of costs}.
\end{proof}

\section{Optimal policies for fixed and proportional transaction costs}\label{sec: Linear costs}

Theorem~\ref{General} provides the leading-order optimal policies for any transaction cost function $k(\cdot)$ satisfying Assumption~\ref{ass:cost}. In this section, we specialize Theorem~\ref{General} to cost functions of the form
\begin{align}\label{linear costs}
	k(\triangle\phi\,w)=k_1+k_2(\triangle\phi\,w),
\end{align}
where $k_1 \ge 0$ and $k_2 \ge 0$, with at least one of them strictly positive. {The expression in \eqref{linear costs} is understood for positive trade size $\triangle\phi\,w>0$; when no trade occurs, no transaction cost is paid, so $k(0)=0$ by convention, while the right limit satisfies $\lim_{z\downarrow0}k(z)=k_1$ if $k_1>0$, as allowed in Assumption~\ref{ass:cost}.} This specification encompasses both fixed and proportional transaction cost components and is widely studied in the literature \cite{Morton-Pliska-MF-1995, Altarovici-Muhle-Soner-FS-2015, Janecek-Shreve-FS-2004, Liu-JF-2004}.

All specifications in \eqref{linear costs}---pure proportional ($k_1=0$, $k_2>0$), pure fixed-cost ($k_1>0$, $k_2=0$), and fixed-plus-proportional ($k_1>0$, $k_2>0$)---are directly admissible under Assumption~\ref{ass:cost}. The following corollaries are therefore valid in all these cases.

For the cost structure \eqref{linear costs}, we evaluate the quantities $A_1(x,w)$ and $A_2(x,w)$ appearing in \eqref{general A1 A2}. Noting that $k(\bullet)=k_1+k_2\bullet$ implies $k'(\bullet)=k_2$, we obtain
\[
A_1(w)=H\frac{k_1}{w}, \qquad A_2=Hk_2,
\]
where $H$ is defined in \eqref{general A1 A2}. Substituting these expressions into Theorem~\ref{General} yields the following corollary.

\begin{Corollary}[Fixed plus proportional transaction costs]\label{CRRA linear}
	A superscript ``$\mathrm{fp}$" denotes quantities associated with the fixed-plus-proportional cost structure \eqref{linear costs} when $k_1, k_2 \neq 0$.
	
	For the linear transaction costs given by \eqref{linear costs}, the optimal trading boundaries are 
	\begin{align}
		\phi^{\mathrm{fp}}_S=&\phi^*+\frac{1}{2}\left(z^3+z^{-1}\right)\big(H\frac{k_1}{w}\big)^{\frac14},\quad \widehat{\phi}^{\mathrm{fp}}_S=\phi^*+\frac{1}{2}\left(z^3-z^{-1}\right)\big(H\frac{k_1}{w}\big)^{\frac14},\label{crra optimal sell boundary linear}\\
		\phi^{\mathrm{fp}}_B=&\phi^*-\frac{1}{2}\left(z^3+z^{-1}\right)\big(H\frac{k_1}{w}\big)^{\frac14},\quad \widehat{\phi}^{\mathrm{fp}}_B=\phi^*-\frac{1}{2}\left(z^3-z^{-1}\right)\big(H\frac{k_1}{w}\big)^{\frac14},\label{crra optimal buy boundary linear}
	\end{align}
	where $z$ is the unique nonnegative real root of the following equation
	\begin{align}\label{original generalz}
		z^{9}-z=H^{\frac14}\big(\frac{k_1}{w}\big)^{-\frac34}k_2.
	\end{align}

	The optimal value function and consumption rate are
	\begin{align}
		V^{\mathrm{fp}}(\tau,w)=\,&V^*(\tau,w)\left[1+(1-\gamma)\frac{\widehat{f}^{\mathrm{fp}}(\tau,w)}{f^{*}(\tau)}\right],
		\label{crraleadingV linear}
				\\
	c^{\mathrm{fp}}(\tau,w)=\,&c^*(\tau,w)+ a_1{\sigma^2}\big[3\gamma z^6+(\gamma-2)z^{-2}\big]\frac{\psi(\tau)w}{[f^*(\tau)]^{2}}\big(H\frac{k_1}{w}\big)^{\frac12}.\label{c_rate linear}
\end{align}
	In \eqref{crraleadingV linear},
	\begin{align}
		&\widehat{f}^{\mathrm{fp}}(\tau,w)=-{\gamma\sigma^2} \psi(\tau) 
		\left(3z^6+z^{-2}\right)\big(H\frac{k_1}{w}\big)^{\frac12},\label{crraf2}
	\end{align}
	where $\psi(\tau)$ is given by \eqref{hatI}, and $a_1$ is defined in \eqref{def_a}. 
	
	In \eqref{crra optimal sell boundary linear}-\eqref{crraf2}, $V^*$, $f^*$, $\phi^*$, and $c^*$ denote the frictionless solutions given by \eqref{mertoncrra V}, \eqref{mertoncrra f}, and \eqref{mertoncrra phi c}. 
\end{Corollary}
\begin{proof} 
	This is a special case of Theorem \ref{General} with $A_1(w)=H\frac{k_1}{w},\ A_2=Hk_2$.
\end{proof}

The four free boundaries, $\phi_B$, $\widehat{\phi}_B$,  $\phi_S$, $\widehat{\phi}_S$, do not exhibit explicit time dependence. They only implicitly depend on time through $w$ in $A_1(w)=H\frac{k_1}{w}$, since $w$ changes with time.

To illustrate the effects of fixed and proportional costs, we provide a visual illustration for analytical expressions of optimal trading boundaries \eqref{crra optimal sell boundary linear}-\eqref{crra optimal buy boundary linear} from Corollary~\ref{CRRA linear}, showing how the four free boundaries depend on the fixed-cost-to-wealth ratio $k_1/w$ and the proportional cost rate $k_2$, see Figure \ref{fix prop}. 

In Figure \ref{fix prop}, as well as in Figures \ref{given-prop fix}-\ref{consumption fix prop}, the parameters are time discount rate $\beta=0.01$, risk-free rate $r = 0.01$, expected return $\mu = 0.069$, return volatility $\sigma = 0.22$, and risk-aversion parameter $\gamma = -1$. The investment horizon is set at $\tau=5$ years. These parameters are taken from \cite{Liu-JF-2004, Liu-Loewenstein-RFS-2002}.
\begin{figure}[htbp]
	\centering
\includegraphics[width=6in]{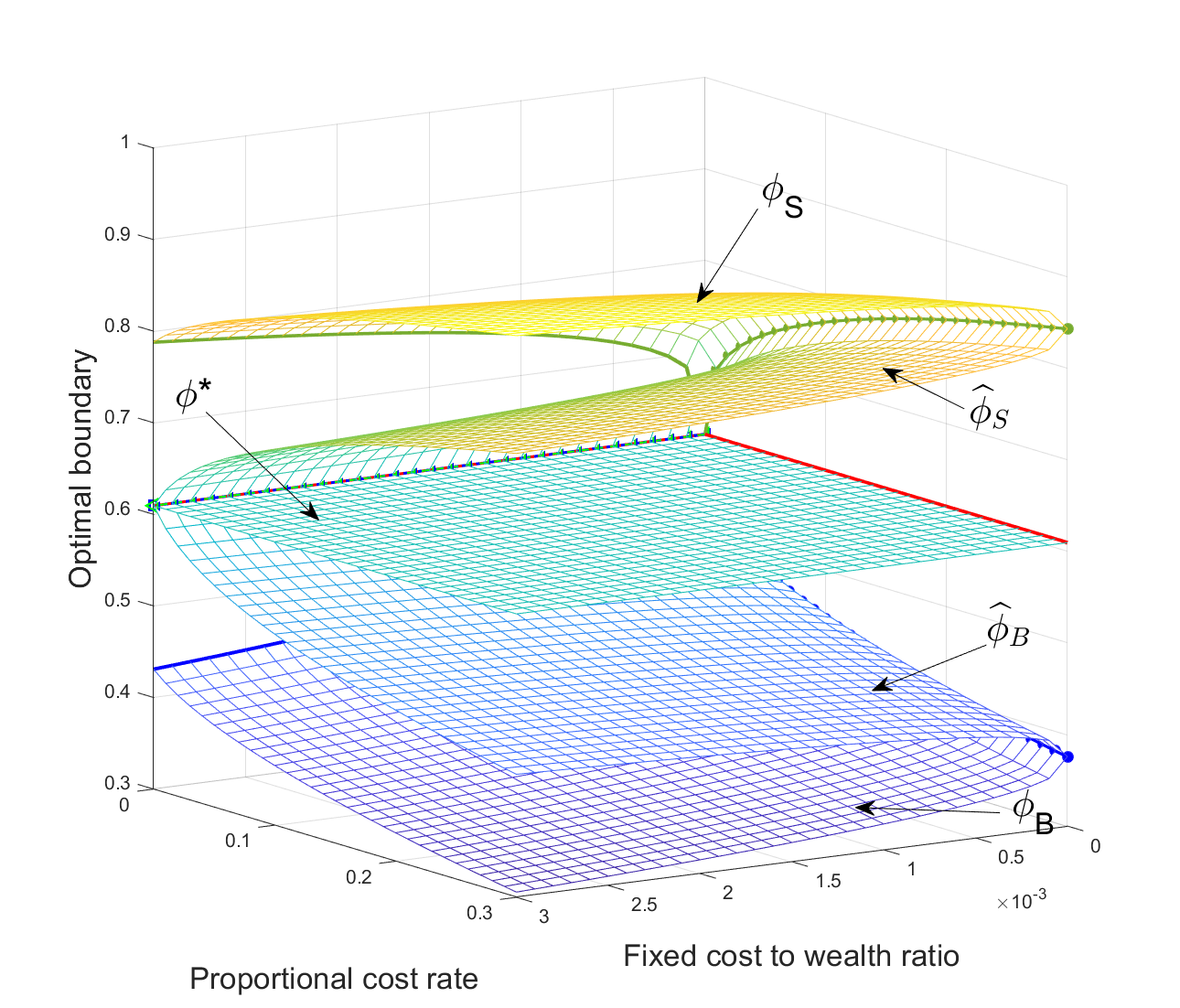}
	\caption{{\bf Four free boundaries as functions of fixed and proportional costs.} {\small The graph plots the optimal buy boundary $\phi_B$, post-buy boundary $\hat{\phi}_B$, sell boundary $\phi_S$, post-sell boundary $\hat{\phi}_S$ and Merton line $\phi^*$ against the fixed-cost-to-wealth ratio $\frac{k_1}{w}$ and the proportional cost rate $k_2$.}}
\label{fix prop}
\end{figure}

Figure \ref{given-prop fix} shows, for a given proportional cost rate $k_2$, how the free boundaries $\phi_B$, $\hat{\phi}_B$, $\phi_S$, and $\hat{\phi}_S$ vary with the fixed-cost-to-wealth ratio $\frac{k_1}{w}$. As $\frac{k_1}{w}$ increases, corresponding to either a higher fixed cost $k_1$ or lower wealth $w$, both the no-trade region and the trading size increase. A higher $\frac{k_1}{w}$ means that a larger fraction of wealth is spent on the fixed cost in each trade, so the investor trades less frequently. Consequently, the no-trade region becomes wider and the post-trade allocation lies deeper within the no-trade region after each trade. Figures \ref{p-a} and \ref{p-b} demonstrate that, for the same $\frac{k_1}{w}$, the trading size in Figure \ref{p-b} is smaller than that in Figure \ref{p-a} because $k_2$ is larger in Figure \ref{p-b}.
\begin{figure}[htbp]
	\centering
	\subfloat[]{\label{p-a}
		\includegraphics[width=3in,page={1}]{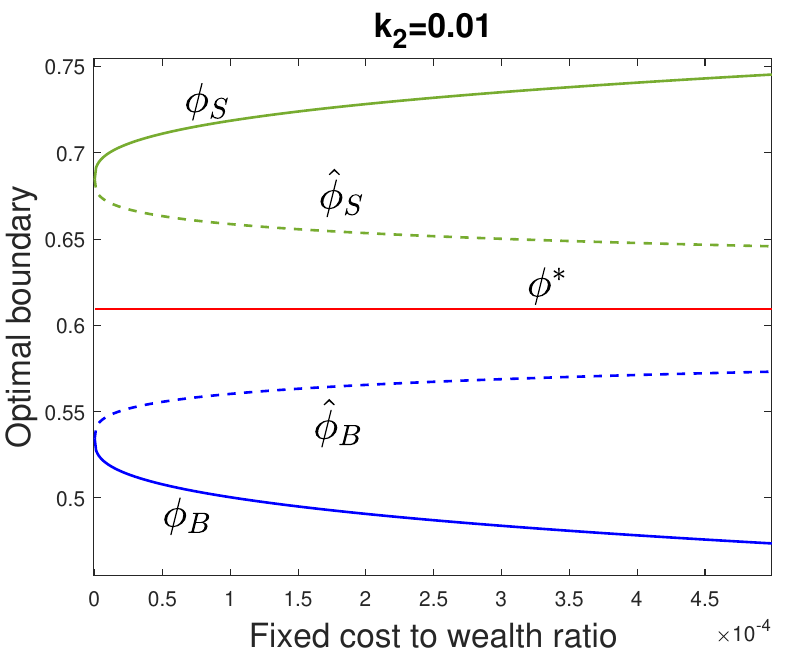}}
	\quad 
	\subfloat[]{\label{p-b} 		
		\includegraphics[width=3in,page={2}]{given_prop.pdf}}
	\caption{{\bf Given proportional cost rate $k_2$, four free boundaries as functions of the fixed-cost-to-wealth ratio $\frac{k_1}{w}$.} {\small The graph plots the optimal buy boundary $\phi_B$, post-buy boundary $\hat{\phi}_B$, sell boundary $\phi_S$, post-sell boundary $\hat{\phi}_S$ and Merton line $\phi^*$ against the fixed-cost-to-wealth ratio $\frac{k_1}{w}$ (i) corresponds to $k_2=0.01$, (ii) corresponds to $k_2=0.1$.}}
\label{given-prop fix}
\end{figure}

Figure \ref{given-fix prop} illustrates, for a fixed $\frac{k_1}{w}$, how the four free boundaries vary with the proportional cost rate $k_2$. Since $k_1$ is fixed while $w$ changes over time, for a given realization of the movement of the risky asset price, $w(t)$ is a random process, which implies that $\frac{k_1}{w}$ is also a random process. The projection of this random process onto the surfaces of $\phi_B$, $\hat{\phi}_B$, $\phi_S$, and $\hat{\phi}_S$ results in four random trajectories. However, Figure \ref{given-fix prop} provides insight into how the proportional cost affects $\phi_B$, $\hat{\phi}_B$, $\phi_S$, and $\hat{\phi}_S$. As $k_2$ increases, the no-trade region increases and the trading size decreases. This agrees with the economic intuition that higher proportional costs discourage large adjustments.
\begin{figure}[htbp]
	\centering
		\subfloat[]{\label{f-a}
				\includegraphics[width=3in,page={1}]{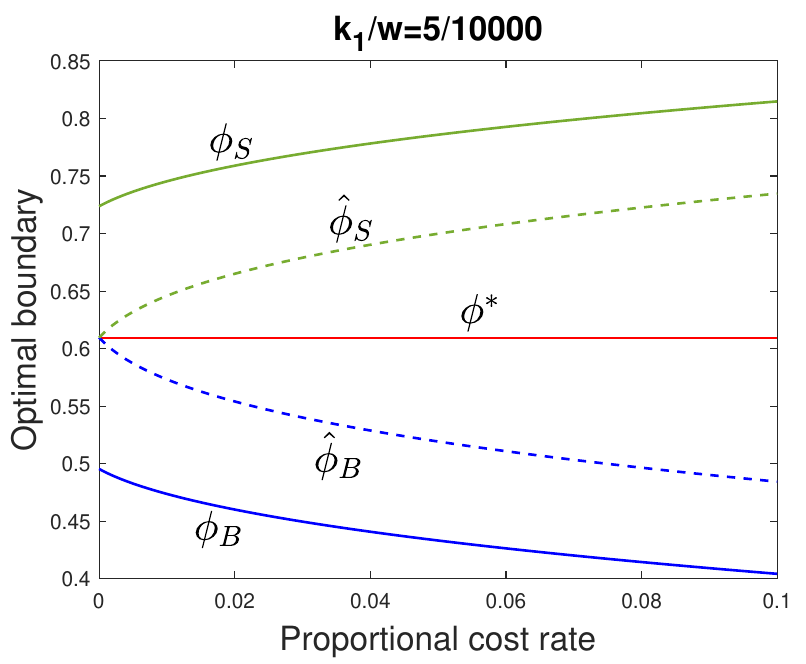}}
		\quad
		\subfloat[]{\label{f-b} 		
			\includegraphics[width=3in, page={2}]{given_fixed.pdf}}
	\caption{{\bf Given fixed-cost-to-wealth ratio $\frac{k_1}{w}$, four free boundaries as functions of the proportional cost rate $k_2$.} {\small The graph plots the optimal buy boundary $\phi_B$, post-buy boundary $\hat{\phi}_B$, sell boundary $\phi_S$, post-sell boundary $\hat{\phi}_S$ and Merton line $\phi^*$ against the proportional cost rate $k_2$ (i) corresponds to $k_1/w=5/10000$, (ii) corresponds to $k_1/w=15/10000$.}}
\label{given-fix prop}
\end{figure}

Figure \ref{consumption fix prop} illustrates how the consumption rate \eqref{c_rate linear} depends on the fixed-cost-to-wealth ratio $\frac{k_1}{w}$ and the proportional cost rate $k_2$. The figure demonstrates that when either $\frac{k_1}{w}$ or $k_2$ increases, the consumption rate decreases. This is because a larger fraction of wealth is absorbed by transaction costs.
\begin{figure}[htbp]
	\centering
	\includegraphics[width=6in]{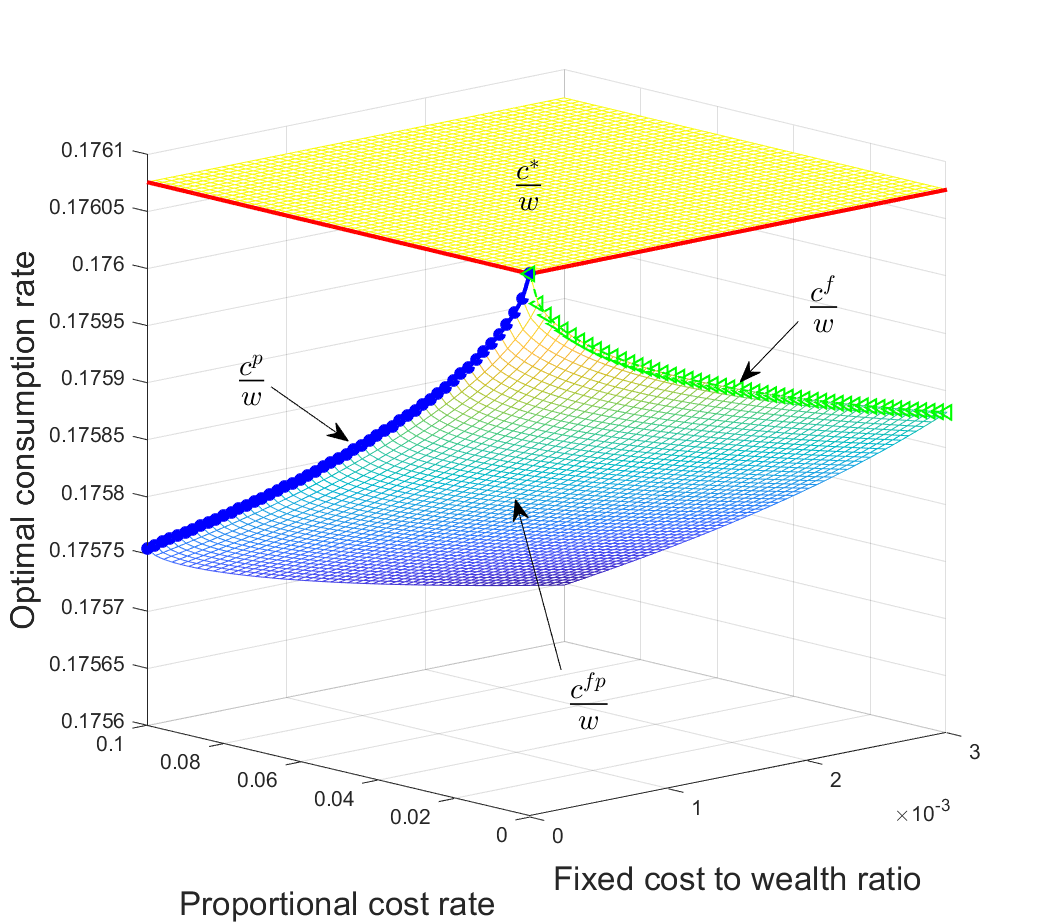}
	\caption{{\bf Optimal consumption rates as functions of fixed and proportional costs.} {\small The graph plots the optimal consumption rate for the linear transaction cost $\frac{c^{\mathrm{fp}}}{w}$, the optimal consumption rate for fixed transaction cost $\frac{c^{\mathrm{f}}}{w}$, the optimal consumption rate for proportional transaction cost $\frac{c^{\mathrm{p}}}{w}$ and the frictionless consumption rate $\frac{c^*}{w}$ against the fixed-cost-to-wealth ratio $\frac{k_1}{w}$ and the proportional cost rate $k_2$.}}
\label{consumption fix prop}
\end{figure}

The following two corollaries verify that the case of fixed cost only \cite{Altarovici-Muhle-Soner-FS-2015, Morton-Pliska-MF-1995} and the case of proportional cost only \cite{Janecek-Shreve-FS-2004} are special cases of Corollary \ref{CRRA linear}.

\begin{Corollary}[Fixed cost only]\label{CRRA fixed}
	A superscript ``${\mathrm{f}}$" denotes quantities associated with the fixed-cost-only case. In this case,
	$k_2=0$ in \eqref{linear costs}, the optimal trading boundaries are 
	\begin{align}
	&\phi^{\mathrm{f}}_S=\phi^*+\left(H \frac {k_1} w \right)^{\frac14},\ \phi^{\mathrm{f}}_B=\phi^*- \left(H \frac {k_1} w \right)^{\frac14},\ \widehat{\phi}^{\mathrm{f}}_S=\widehat{\phi}^{\mathrm{f}}_B=\phi^*,\label{crra optimal boundaries fixed}
	\end{align}
	where $H$ and $\phi^*$ are given by \eqref{general A1 A2} and \eqref{mertoncrra phi c}.
	Moreover, the optimal value function and consumption policies are
	\begin{align}
	V^{\mathrm{f}}(\tau,w)=\,&V^*(\tau,w)\left[1+(1-\gamma)\frac{\widehat{f}^{\mathrm{f}}(\tau,w)}{f^{*}(\tau)}\right],\label{crraleadingV fixed}\\
	c^{\mathrm{f}}(\tau,w)=\,&c^*(\tau,w)+2 a_1(2\gamma - 1)\sigma^2\left(H\frac{k_1}{w}\right)^{\frac12}\frac{\psi(\tau)w}{[f^*(\tau)]^{2}}.\label{c_rate fixed}
	\end{align}
	
	In \eqref{crraleadingV fixed},
	\begin{align}
	&\widehat{f}^{\mathrm{f}}(\tau,w)=- 4 {\gamma\sigma^2} \psi(\tau)\left(H\frac {k_1}{w}\right)^{\frac 1 2 },\label{crraf2 fixed}
	\end{align}
	where $\psi(\tau)$ is given by \eqref{hatI}, and $a_1$ is defined in \eqref{def_a}. 
	$c^*$, $V^*$ and $f^*$ are given by \eqref{mertoncrra phi c}, \eqref{mertoncrra V} and \eqref{mertoncrra f}, respectively. 	
\end{Corollary}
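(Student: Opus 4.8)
The plan is to obtain Corollary \ref{CRRA fixed} as the direct specialization of Theorem \ref{CRRA linear} to the case $k_2=0$, so that no fresh analysis of the governing equation \eqref{crraH-J-BVc*} or of the boundary conditions \eqref{crrabc1}-\eqref{crrasc3} is needed: everything follows by substitution. First I would set $k_2=0$ in the cost structure \eqref{linear costs}. By \eqref{CRRA A12 H} this gives $A_2(x,w)=Hk_2=0$ and $A_1(x,w)=Hk_1/w$, with $A_1\neq 0$ whenever $k_1>0$. This places us precisely in the case $A_1\neq 0$, $A_2=0$ treated in Remark \ref{Remark special cases of original generalx}, where the pair of equations \eqref{original generalx} collapses to $\left(\tilde{x}^2-x^2\right)\tilde{x}=0$ together with $x^3\tilde{x}=A_1$, whose unique nonnegative root is $x^f=\tilde{x}^f=A_1^{1/4}=\left(Hk_1/w\right)^{1/4}$.

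Next I would feed this common value $x^f=\tilde{x}^f$ into the trading-boundary formulas \eqref{crra optimal sell boundary linear}-\eqref{crra optimal buy boundary linear}. Since $x^f=\tilde{x}^f$, the half-differences $\frac{1}{2}\left(\tilde{x}^f-x^f\right)$ vanish, giving $\widehat{\phi}^f_S=\widehat{\phi}^f_B=\phi^*$, while the half-sums reduce to $\frac{1}{2}\left(\tilde{x}^f+x^f\right)=\left(Hk_1/w\right)^{1/4}$, producing $\phi^f_S=\phi^*+\left(Hk_1/w\right)^{1/4}$ and $\phi^f_B=\phi^*-\left(Hk_1/w\right)^{1/4}$, which is exactly \eqref{crra optimal boundaries fixed}. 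The financial reading --- that with a purely fixed cost one trades all the way back to the Merton line $\phi^*$ rather than to an interior post-trade boundary --- is then automatic.

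For the value function I would substitute $x^f=\tilde{x}^f=\left(Hk_1/w\right)^{1/4}$ into \eqref{crraf2}. Using $3\left(\tilde{x}^f\right)^2+\left(x^f\right)^2=4\left(Hk_1/w\right)^{1/2}$, the correction term becomes $\widehat{f}^f(\tau,w)=-4\gamma\sigma^2\psi(\tau)\left(Hk_1/w\right)^{1/2}$, which is \eqref{crraf2 fixed}; inserting this into \eqref{crraleadingV linear} yields \eqref{crraleadingV fixed} verbatim. The only step requiring a little algebra is the consumption rate \eqref{c_rate linear}: with $x^f\tilde{x}^f=\left(Hk_1/w\right)^{1/2}$ one has $\left(Hk_1/w\right)/\left(x^f\tilde{x}^f\right)=\left(Hk_1/w\right)^{1/2}$, so the bracket collapses to $\left(4-\frac{2}{\gamma}\right)\left(Hk_1/w\right)^{1/2}$; multiplying by the prefactor $a_1\gamma\sigma^2$ clears the $\gamma$ in the denominator and leaves the coefficient $2a_1(2\gamma-1)\sigma^2$ appearing in \eqref{c_rate fixed}. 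This cancellation of $\gamma$ between the prefactor and the $2/\gamma$ term is the one place where care is needed; the rest is routine bookkeeping, and I expect no genuine obstacle since the corollary is purely a reduction of an already-proved theorem.
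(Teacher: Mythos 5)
Your proposal is correct and follows essentially the same route as the paper: the paper also obtains Corollary \ref{CRRA fixed} by specializing Theorem \ref{CRRA linear} with $A_1=H\frac{k_1}{w}$, $A_2=0$, using the root $x^f=\tilde{x}^f=A_1^{1/4}$ from case (1) of Remark \ref{Remark special cases of original generalx}, and substituting $x^f=\tilde x^f$ into \eqref{crra optimal sell boundary linear}--\eqref{crraf2} and \eqref{c_rate linear}. Your algebra for the consumption rate, where the bracket collapses to $\left(4-\frac{2}{\gamma}\right)\left(H\frac{k_1}{w}\right)^{1/2}$ and the prefactor $a_1\gamma\sigma^2$ turns this into $2a_1(2\gamma-1)\sigma^2\left(H\frac{k_1}{w}\right)^{1/2}$, is exactly the computation the paper describes as the ``straightforward'' substitution step.
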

\begin{proof}
	This follows from Corollary~\ref{CRRA linear} by taking $k_2=0$.
\end{proof}
In this case, the post-trade boundaries coincide with the Merton proportion, $\widehat{\phi}^{\mathrm{f}}_B=\widehat{\phi}^{\mathrm{f}}_S=\phi^*$, reflecting the fact that the only friction is a fixed cost per trade, which penalizes frequency but not volume. The formal connection of our asymptotic analysis to the pure fixed-cost models of \cite{Morton-Pliska-MF-1995} and \cite{Altarovici-Muhle-Soner-FS-2015} is therefore established.

\begin{Corollary}[Proportional cost only]\label{CRRA proportional}
	A superscript ``$\mathrm{p}$" denotes quantities associated with the proportional-cost-only case.
	For this case, $k_1=0$ in \eqref{linear costs}, the optimal trading boundaries are 
	\begin{align}
	&\phi^{\mathrm{p}}_S=\widehat{\phi}^{\mathrm{p}}_S=\phi^*+\frac{1}{2}(H k_2)^{\frac 13},\quad
	\phi^{\mathrm{p}}_B=\widehat{\phi}^{\mathrm{p}}_B=\phi^*-\frac{1}{2}(H k_2)^{\frac 13},\label{crra optimal boundaries prop}
	\end{align}
	where $H$ and $\phi^*$ are given by \eqref{general A1 A2} and \eqref{mertoncrra phi c}.
	
	Moreover, the corresponding optimal value function and consumption policy are
	\begin{align}
	V^{\mathrm{p}}(\tau,w)=\,&V^*(\tau,w)\left[1+(1-\gamma)\frac{\widehat{f}^{\mathrm{p}}(\tau,w)}{f^{*}(\tau)}\right],\label{crraleadingV prop}\\
	c^{\mathrm{p}}(\tau,w)=\,&c^*(\tau,w)+ 3 a_1 \gamma\sigma^2\left(Hk_2\right)^{\frac23}\frac{\psi(\tau)w}{[f^*(\tau)]^{2}}.\label{c_rate prop}		
	\end{align}
	In \eqref{crraleadingV prop},
	\begin{align}\label{crraf2 prop}
	\widehat{f}^{\mathrm{p}}(\tau)=-3 {\gamma\sigma^2}\psi(\tau) (H k_2)^{\frac 23},
	\end{align}
	where $\psi(\tau)$ is given by \eqref{hatI}, and $a_1$ is defined in \eqref{def_a}. 
	$c^*$, $V^*$ and $f^*$ are given by \eqref{mertoncrra phi c}, \eqref{mertoncrra V} and \eqref{mertoncrra f}, respectively.	
\end{Corollary}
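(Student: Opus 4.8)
The plan is to obtain Corollary \ref{CRRA proportional} as the specialization $k_1=0$ of Theorem \ref{CRRA linear}, since the proportional-only cost is exactly the linear cost \eqref{linear costs} with vanishing fixed part. Setting $k_1=0$ in \eqref{CRRA A12 H} immediately gives $A_1(x,w)=0$ and $A_2(x,w)=Hk_2$, which is nonzero whenever a cost is present, so the coefficients feeding the root-finding system \eqref{original generalx} are known explicitly and we may simply carry them through every formula of Theorem \ref{CRRA linear}.

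First I would solve \eqref{original generalx} in this regime. With $A_1=0$ the second equation $x^3\tilde{x}=A_1$ forces $x=0$: one cannot have $\tilde{x}=0$, since then the first equation would require $A_2=0$, contradicting $k_2\neq 0$. Substituting $x=0$ into the first equation gives $\tilde{x}^3=A_2=Hk_2$, hence $\tilde{x}=(Hk_2)^{1/3}$. This is precisely case (\ref{A1 is 0}) of Remark \ref{Remark special cases of original generalx}, which also furnishes the uniqueness and nonnegativity of the root. Plugging $x^p=0$ and $\tilde{x}^p=(Hk_2)^{1/3}$ into the boundary formulas \eqref{crra optimal sell boundary linear}--\eqref{crra optimal buy boundary linear} collapses each buy/post-buy and sell/post-sell pair to a single value $\phi^*\pm\tfrac{1}{2}(Hk_2)^{1/3}$, yielding \eqref{crra optimal boundaries prop}; likewise the value-function correction \eqref{crraf2} reduces to $\widehat{f}^p=-3\gamma\sigma^2\psi(\tau)(Hk_2)^{2/3}$, so that \eqref{crraleadingV prop} and \eqref{crraf2 prop} follow by direct substitution into \eqref{crraleadingV linear}.

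The one step requiring care --- and the main obstacle --- is the consumption rate. The linear-cost formula \eqref{c_rate linear} contains the term $\tfrac{2}{\gamma}\,\tfrac{Hk_1/w}{x^{fp}\tilde{x}^{fp}}$, which degenerates to the indeterminate form $0/0$ once $k_1=0$ and $x^p=0$. To resolve it I would not set $k_1=0$ naively but instead invoke the algebraic identity supplied by the second equation of \eqref{original generalx}, namely $Hk_1/w=A_1=(x^{fp})^3\tilde{x}^{fp}$, so that $\tfrac{Hk_1/w}{x^{fp}\tilde{x}^{fp}}=(x^{fp})^2$ holds identically for the linear cost before any limit is taken. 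The bracket in \eqref{c_rate linear} then reads $3(\tilde{x}^{fp})^2+(x^{fp})^2-\tfrac{2}{\gamma}(x^{fp})^2$, which is smooth in $x^{fp}$; evaluating at $x^p=0$, $\tilde{x}^p=(Hk_2)^{1/3}$ leaves only $3(\tilde{x}^p)^2=3(Hk_2)^{2/3}$ and produces \eqref{c_rate prop}. Finally I would read off the no-trade half-width $\tfrac{1}{2}(Hk_2)^{1/3}$ about the Merton line $\phi^*$ and compare with the leading-order proportional-cost boundaries of \cite{Janecek-Shreve-FS-2004} to confirm the claimed recovery of their result.
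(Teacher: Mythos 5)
Your overall route coincides with the paper's: Corollary \ref{CRRA proportional} is obtained by specializing Theorem \ref{CRRA linear} to $k_1=0$, solving \eqref{original generalx} with $A_1=0$, $A_2=Hk_2$ to get $x^p=0$, $\tilde{x}^p=(Hk_2)^{1/3}$, and substituting into the boundary and value-function formulas; those steps are correct. For the consumption rate, your resolution of the $0/0$ term in \eqref{c_rate linear} via the identity $H\frac{k_1}{w}=A_1=\left(x^{fp}\right)^3\tilde{x}^{fp}$, so that $\frac{Hk_1/w}{x^{fp}\tilde{x}^{fp}}=\left(x^{fp}\right)^2$, is algebraically sound and yields \eqref{c_rate prop}; the paper instead sidesteps the indeterminacy by returning to the general formula \eqref{crraleadingc}, $c=c^*-a_1wf^*(\tau)^{-2}\left[\widehat{f}+\frac{1}{\gamma}w\widehat{f}_w\right]$, and observing that $\widehat{f}^p$ in \eqref{crraf2 prop} is independent of $w$, so the $\widehat{f}_w$ term simply drops. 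The paper's route is cleaner because it never requires interpreting \eqref{c_rate linear} as a limit; yours is a legitimate alternative but implicitly rests on continuity of the solution as $k_1\to 0^{+}$.

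There is, however, one genuine gap. When $k_1=0$ the root is $x^p=0$, i.e., $\widehat{\phi}_B=\phi_B$ and $\widehat{\phi}_S=\phi_S$: the post-trade boundaries coincide with the trade boundaries. But the proof of Theorem \ref{General} (on which Theorem \ref{CRRA linear} rests) explicitly uses $\widehat{Y}_B\neq Y_B$ — for instance to deduce $\mathcal{L}f_1=0$ in \eqref{L f1} — and its boundary conditions \eqref{crraBC1}--\eqref{crraSC3} are formulated for genuinely separated pre-/post-trade points. So one cannot simply "carry through every formula of Theorem \ref{CRRA linear}" at $k_1=0$: either you must prove that the $k_1=0$ optimum is the continuous limit of the $k_1>0$ optima (a statement neither you nor the paper establishes), or you must re-derive the boundary conditions for the degenerate case. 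The paper does the latter in the Remark of Appendix \ref{appendix: proof linear}: with proportional costs only, trading occurs in infinitesimal amounts $\delta_B,\delta_S$ at the boundaries, and taking $\delta_B\to 0$, $\delta_S\to 0$ converts \eqref{crraBC1}--\eqref{crraSC3} into the derivative conditions \eqref{crraBC1-p}--\eqref{crraSC2-p}, under which the expansion still closes and produces \eqref{crra optimal boundaries prop}--\eqref{crraf2 prop}. Your proposal needs this step (or an explicit continuity argument) to be complete.
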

\begin{proof}
	In this case $k_1 = 0$, and the argument used to derive Corollary~\ref{CRRA linear} is no longer directly applicable because the system \eqref{original generalz} degenerates. This is the pure proportional cost regime, which is well documented in the literature (see, e.g., \cite{Janecek-Shreve-FS-2004} for the case without consumption). The results \eqref{crra optimal boundaries prop}, \eqref{c_rate prop}, and \eqref{crraf2 prop} follow directly from Theorem~\ref{General} by noting that $A_1=0$ and $A_2=Hk_2$ in \eqref{original generalx}, which implies $x^{\mathrm{p}}=0$ and $\widetilde{x}^{\mathrm{p}}=(Hk_2)^{1/3}$ (cf. the case \ref{A1 is 0} in Remark \ref{Remark special cases of original generalx}).
\end{proof}

\begin{Remark}
	The boundary conditions involving only proportional costs differ from the other cases discussed earlier. This is because, in this case, $\phi_B=\widehat{\phi}_B$ and $\phi_S=\widehat{\phi}_S$, which imposes a constraint on the second derivative of the optimal free boundaries $\phi_B$ and $\phi_S$. See Appendix \ref{appendix: proof fixed or prop}.
\end{Remark}

The results $\widehat{\phi}_B= \phi_B$ and $\widehat{\phi}_S= \phi_S$ in the case of only proportional transaction costs have a natural interpretation. When the portfolio is positioned at the trading boundaries $\phi_B$ or $\phi_S$, there are two possibilities in the next moment: (1) the portfolio remains at the trading boundary or moves toward the interior of the no-trade region, in which case no trading is needed; or (2) the portfolio moves outside the no-trade region. In the latter case, one only needs to trade the smallest possible amount of the risky asset to return the portfolio to the trading boundary, thereby minimizing transaction costs. If one instead trades into the interior of the no-trade region, part of the proportional transaction cost may be wasted, since the portfolio may subsequently move back into the no-trade region.

\section{Optimal policies for nonlinear transaction costs}\label{sec: NonLinear costs}

In this section, we explain how the leading-order framework accommodates nonlinear transaction costs and clarify the role played by the fixed and proportional leading components.
Motivated by the nonlinear price-impact literature \cite{Garleanu-Pedersen-JF-2013, Moreau-Muhle-Karbe-Soner-MF-2017, Caye-Herdegen-Muhle-Karbe-AAP-2020, Guasoni-Weber-MF-2020, Almgren-Thum-Risk-2005}, we consider, as a special example, a power-law total transaction cost
$k(\Delta)=k_3\Delta^{\kappa}$ with $\kappa>1$, where $\Delta$ is the dollar amount traded. This form includes the total-cost exponents associated with several commonly studied price-impact and cost specifications: a square-root price impact corresponds to $\kappa=3/2$ \cite{Caye-Herdegen-Muhle-Karbe-AAP-2020, Guasoni-Weber-MF-2020}, a $3/5$ price-impact law corresponds to $\kappa=8/5$ \cite{Almgren-Thum-Risk-2005}, and quadratic trading costs correspond to $\kappa=2$ \cite{Garleanu-Pedersen-JF-2013, Moreau-Muhle-Karbe-Soner-MF-2017}. When fixed and proportional components are absent, these nonlinear specifications share a common feature: the transaction cost function satisfies the conditions
\begin{align}\label{add-condi}
	k(0)=0 \quad\text{and}\quad k'(0+)=0.
\end{align}
For the power-law form, and more generally for any convex cost function covered by Assumption~\ref{ass:cost} that satisfies \eqref{add-condi}, the free-boundary system \eqref{original generalx}--\eqref{general A1 A2} cannot have a positive nonnegative solution. Indeed, convexity and \(k(0)=0\) imply \(k(z)\le z k'(z)\) for \(z>0\). Hence
\[
A_1(x,w)=H\left[\frac{k(xw)}{w}-xk'(xw)\right]\le0.
\]
For a strictly nonlinear power law with \(\kappa>1\), this inequality is strict for every \(x>0\). Since the second equation in \eqref{original generalx} requires \(x^3\widetilde{x}=A_1(x,w)\) with \(x,\widetilde{x}\ge0\), no positive solution exists. The only admissible nonnegative solution is therefore \(x=\widetilde{x}=0\). Thus, the no-trade region collapses, portfolio adjustment becomes continuous, and the limiting policy is consistent with continuous adjustment along the Merton line.

More generally, the analysis shows that the existence of a non-degenerate no-trade region is determined by the fixed or proportional leading component of the transaction cost function. Only when these leading components are absent does the purely nonlinear component lead to the degenerate no-trade case. This can be seen from the following transaction cost specification:
\begin{align}\label{nonlinear costs}
	k(\triangle\phi\,w)=k_1+k_2(\triangle\phi\,w)+k_{\mathrm{nl}}(\triangle\phi\,w),
\end{align}
where $k_1$ and $k_2$ are not both zero, and $k_{\mathrm{nl}}(\triangle\phi\,w)$ is the nonlinear component of the transaction cost. The fixed or proportional leading component generates a non-degenerate no-trade region. If the nonlinear component is retained in the implicit quantities \(A_1(x,w)\) and \(A_2(x,w)\), the selected solution must still satisfy the nonnegativity requirement in \eqref{original generalx}; in particular \(A_1(x,w)\ge0\) at the selected trade size. If a convex nonlinear term is the only leading component, this condition fails except at \(x=0\), as shown above.

To make the discussion more concrete, we consider the case in which the nonlinear component has a power-law form, namely,
\begin{align}\label{quadratic costs}
	k(xw)=k_1+k_2(xw)+k_3(xw)^{\kappa},
\end{align}
with $\kappa>1$. Applying Theorem~\ref{General} yields the following corollary.
\begin{Corollary}[Linear and power-law transaction costs]\label{CRRA quadratic}
	A superscript ``$\mathrm{lp}$'' denotes quantities associated with the linear-plus-power-law cost structure \eqref{quadratic costs}, allowing for fixed, proportional, and power-law components. The fixed component is covered by the right-limit interpretation in Assumption~\ref{ass:cost}. Assume that the implicit system \eqref{original generalx} with the quantities below admits a selected nonnegative solution \((x,\widetilde{x})\). This requirement rules out the pure power-law case \(k_1=k_2=0\), for which the only admissible solution is the degenerate one \(x=\widetilde{x}=0\). The leading-order trading boundaries are
		\begin{align}
			&\phi_S^{\mathrm{lp}}=\phi^*+\tfrac{1}{2}\left(\widetilde{x}+x\right),\  \widehat{\phi}^{\mathrm{lp}}_S=\phi^*+\tfrac{1}{2}\left(\widetilde{x}-x\right),\label{crra optimal sell boundary quadratic}\\
			&\phi_B^{\mathrm{lp}}=\phi^*-\tfrac{1}{2}\left(\widetilde{x}+x\right),\  \widehat{\phi}_B^{\mathrm{lp}}=\phi^*-\tfrac{1}{2}\left(\widetilde{x}-x\right).\label{crra optimal buy boundary quadratic}
		\end{align}
			Here $x$ and $\widetilde{x}$ are determined by \eqref{original generalx} with  
		\begin{equation}\label{general quadratic A1 A2}
			A_1(x,w)=H\Big(\frac{k_1}{w}+(1-\kappa)k_3x(xw)^{\kappa-1}\Big),\quad  A_2(x,w)=H\big(k_2+\kappa k_3(xw)^{\kappa-1}\big),
		\end{equation} 
		and $H$ is defined in \eqref{general A1 A2}. 
		
		The optimal value function and consumption rate are
		\begin{align}
			V^{\mathrm{lp}}(\tau,w)=\,&V^*(\tau,w)\left[1+(1-\gamma)\frac{\widehat{f}^{\mathrm{lp}}(\tau,w)}{f^{*}(\tau)}\right],
			\label{crraleadingV quadratic}
			\\
			c^{\mathrm{lp}}(\tau,w)=\,&c^*(\tau,w)-\frac{a_1w}{[f^*(\tau)]^{2}}\left[\widehat{f}^{\mathrm{lp}}(\tau,w)+\frac{1}{\gamma}w\,\partial_{2}\widehat{f}^{\mathrm{lp}}(\tau,w)\right].\label{c_rate quadratic}
		\end{align}
			In \eqref{crraleadingV quadratic} and \eqref{c_rate quadratic}, $\widehat{f}^{\mathrm{lp}}(\tau,w)$ is given by \eqref{crraV2}, $\psi(\tau)$ is given by \eqref{hatI}, and $a_1$ is defined in \eqref{def_a}. 
		
		In \eqref{crra optimal sell boundary quadratic}--\eqref{c_rate quadratic}, $V^*$, $f^*$, $\phi^*$, and $c^*$ denote the frictionless solutions given by \eqref{mertoncrra V}, \eqref{mertoncrra f}, and \eqref{mertoncrra phi c}. 
	
\end{Corollary}

In summary, it is the fixed or proportional component---not the nonlinear one---of the transaction cost function that determines whether a no-trade region exists.

\section{A verification theorem}\label{sec: verification}

In this section, we provide a verification theorem for the exact control problem formulated in Section~\ref{sec: finaincal market}. The theorem is separate from the asymptotic expansion derived in Section~\ref{sec: general costs}: it shows that any sufficiently smooth solution of the associated quasi-variational inequality is equal to the value function. It does not, by itself, prove convergence of the leading-order asymptotic formulas in Theorem~\ref{General}; such convergence would require additional error estimates.

As discussed in Remark~\ref{rem:cost-approx}, item~\ref{control regimes}, the exact quasi-variational inequality (QVI) \eqref{verification qvi} features a non-local intervention operator $\mathcal{M}$ and discrete jump summations, which explicitly characterize an \textit{impulse control} problem. 

\begin{Remark}[Attainability of the optimal impulse and verification scope]\label{rem:impulse_wellposed}
	For this impulse-control problem to be mathematically well-posed (i.e., to prevent the optimal strategy from degenerating into an absolutely continuous rate control via infinite-order splitting), the transaction cost must exhibit local subadditivity near the origin. Therefore, Theorem~\ref{verification-thm} presented below specifically assumes a strictly positive fixed cost component, $\lim_{z\downarrow 0}k(z) > 0$. The cases of pure proportional costs ($\lim_{z\downarrow 0}k(z)=0, k''=0$) or pure nonlinear costs ($\lim_{z\downarrow 0}k(z)=0, k''>0$) require structurally distinct verification arguments (singular or continuous control, respectively). Although a full technical exposition of these separate arguments is outside the scope of this theorem, the arguments are connected through the general asymptotic expansion framework developed in this paper, and their interpretation as limit transitions of the exact QVI is elaborated later in Remark~\ref{rem:limit_transitions}.
\end{Remark}

Let $x$ and $y$ denote the dollar amounts invested in the risk-free and risky assets, respectively. For a smooth test function $\Phi=\Phi(t,x,y)$, define the no-trade generator under consumption rate $c\ge0$ by
\begin{align}\label{verification generator}
	\mathcal{L}^c\Phi
	=\partial_t\Phi+(rx-c)\partial_x\Phi+\mu y\,\partial_y\Phi
	+\frac12\sigma^2y^2\partial_{yy}\Phi .
\end{align}
For a purchase of dollar amount $\ell>0$ and a sale of dollar amount $j>0$, define the post-trade states
\begin{align}\label{verification posttrade}
	\Gamma_B^\ell(x,y)=(x-\ell-k(\ell),\,y+\ell),\qquad
	\Gamma_S^j(x,y)=(x+j-k(j),\,y-j).
\end{align}
Let $\mathcal{I}_B(x,y)$ be the set of feasible positive purchases for which the corresponding post-trade state remains in $\overline{\mathcal S}$, and let $\mathcal{I}_S(x,y)$ be the set of feasible positive sales satisfying $0<j\le y$ and preserving solvency. The intervention operator is
\begin{align}\label{verification intervention}
	\mathcal{M}\Phi(t,x,y)
	=\max\left\{
	\sup_{\ell\in\mathcal{I}_B(x,y)}\Phi(t,\Gamma_B^\ell(x,y)),\,
	\sup_{j\in\mathcal{I}_S(x,y)}\Phi(t,\Gamma_S^j(x,y))
	\right\},
\end{align}
where the supremum over an empty set is interpreted as $-\infty$. Thus $\mathcal{M}\Phi=-\infty$ if no positive feasible trade is available. The quasi-variational inequality associated with the exact impulse-control problem is
\begin{align}\label{verification qvi}
	\max\left\{
	\sup_{c\ge0}\left[\alpha e^{-\beta t}u_1(c)+\mathcal{L}^c\Phi(t,x,y)\right],
	\mathcal{M}\Phi(t,x,y)-\Phi(t,x,y)
	\right\}=0
\end{align}
on $[0,T)\times\mathcal S$, together with the terminal condition
\begin{align}\label{verification terminal}
	\Phi(T,x,y)=(1-\alpha)e^{-\beta T}u_2(x+y).
\end{align}

\begin{Theorem}[Verification Theorem]\label{verification-thm}
	Suppose that {the transaction cost has a strictly positive fixed component, $\lim_{z\downarrow 0}k(z) > 0$,} and that a function $\Phi:[0,T]\times\overline{\mathcal{S}}\to\mathbb{R}$ satisfies the following conditions:
	\begin{enumerate}[label=(\roman*)]
		\item\label{regularity} \textbf{Regularity}: {$\Phi$ is continuous on $[0,T]\times\overline{\mathcal{S}}$ and of class $C^1$ on $[0,T)\times\mathcal{S}$. For each $t \in [0,T)$, the spatial function $\Phi(t,\cdot,\cdot)$ belongs to the local Sobolev space $W^{2,\infty}_{loc}(\mathcal{S})$. Furthermore,} $\Phi$ is of class {$C^{1,2}$} in the interiors of the continuation and intervention regions. 
		
		\item\label{terminal} \textbf{Terminal condition}: $\Phi$ satisfies the terminal condition \eqref{verification terminal}.
		
		\item\label{QVI compliance} \textbf{QVI compliance}: $\Phi$ satisfies the QVI \eqref{verification qvi}. {Specifically, the intervention obstacle condition $\Phi(t,x,y) \ge \mathcal{M}\Phi(t,x,y)$ holds everywhere on $[0,T)\times\mathcal{S}$, and the differential inequality $\sup_{c\ge0}\big[\alpha e^{-\beta t}u_1(c)+\mathcal{L}^c\Phi(t,x,y)\big] \le 0$ holds Lebesgue-almost everywhere on $[0,T)\times\mathcal{S}$.}
		
		\item\label{UI} \textbf{Integrability and occupation time}: {For any admissible strategy $\pi \in \mathcal{A}(t,x,y)$, the family of random variables
			\[
			\big\{ \Phi(\tau, w^B(\tau),w^S(\tau)) \big\}_{\tau \in \mathcal{T}_{[t,T]}}
			\]
		is uniformly integrable, where $\mathcal{T}_{[t,T]}$ denotes the set of all stopping times taking values in $[t,T]$. Additionally, the expected occupation time of the corresponding controlled state process on the Lebesgue-null set where $\Phi$ is not of class $C^{1,2}$ is zero.}
				
	\end{enumerate}
	Then
	\[
	\Phi(t,x,y)\ge V(t,x,y),\qquad \text{for all } (t,x,y)\in[0,T]\times\mathcal{S}.
	\]
	Moreover, suppose there exists an admissible feedback strategy $\pi^*=(c^*,\mathfrak L^*,\mathfrak J^*)$ such that:
	\begin{enumerate}[label=(\roman*),resume]
		\item In the continuation region, $c^*$ attains the supremum in the HJB part of \eqref{verification qvi}.
		\item At each trading time $\tau_i$, the chosen impulse attains the intervention value $\mathcal M\Phi$, such that the post-trade state satisfies
		\[
		\Phi(\tau_i,w^B(\tau_i-),w^S(\tau_i-))
		=\Phi(\tau_i,w^B(\tau_i),w^S(\tau_i)).
		\]
		\item The strategy $\pi^*$ is admissible, and {condition \ref{UI} holds under $\pi^*$.}
	\end{enumerate}
	Then $\Phi(t,x,y)=V(t,x,y)$, and $\pi^*$ is an optimal strategy.
\end{Theorem}

\begin{proof}
	Fix an arbitrary admissible strategy $\pi=(c,\mathfrak L,\mathfrak J)$, and let $(w^B(s),w^S(s))$ be the corresponding controlled state. Let $\{\rho_n\}_{n\ge 1}$ be a localizing sequence of stopping times such that $\rho_n \to \infty$ almost surely, the stopped state remains in a compact subset of $\mathcal S$ between trading times, and the stopped stochastic integrals are true martingales.
	
	{Since $\Phi(t,\cdot,\cdot) \in W^{2,\infty}_{loc}(\mathcal{S})$, its first spatial derivatives are locally Lipschitz continuous, implying that its second distributional derivatives contain no singular measure components. Let $\mathcal{N} \subset [0,T) \times \mathcal{S}$ denote the Lebesgue-null set where $\Phi$ is not strictly of class $C^{1,2}$. By the zero occupation time assumption in condition \ref{UI}, we have
		\[
		\mathbb{E}\left[ \int_t^T \mathbf{1}_{\mathcal{N}}(s, w^B(s), w^S(s)) \, ds \right] = 0.
		\]
		Consequently, the state process almost surely accumulates no local time on $\mathcal{N}$, allowing us to apply the generalized It\^o formula (e.g., Meyer-It\^o formula) to obtain that,} on each interval $[\tau_i, \tau_{i+1} \wedge \rho_n)$, 
	\[
	d\Phi(s,w^B(s),w^S(s))
	=\mathcal L^{c(s)}\Phi(s,w^B(s),w^S(s))\,ds
	+\sigma w^S(s)\partial_y\Phi(s,w^B(s),w^S(s))\,dW(s). 
	\]
	
	By condition \ref{QVI compliance}, $\alpha e^{-\beta s}u_1(c(s))+\mathcal L^{c(s)}\Phi(s,w^B(s),w^S(s))\le0$ Lebesgue-almost everywhere. Therefore, between trades,
	\[
	d\Phi(s,w^B(s),w^S(s))
	\le -\alpha e^{-\beta s}u_1(c(s))\,ds
	+\sigma w^S(s)\partial_y\Phi(s,w^B(s),w^S(s))\,dW(s) .
	\]
	At a trading time $\tau_i$, the intervention condition $\Phi\ge\mathcal M\Phi$ guarantees that the value function candidate does not strictly increase across jumps:
	\[
	\Phi(\tau_i,w^B(\tau_i),w^S(\tau_i))
	\le
	\Phi(\tau_i,w^B(\tau_i-),w^S(\tau_i-)).
	\]
	
	{Furthermore, since the transaction cost has a strictly positive fixed component ($\lim_{z\downarrow 0}k(z) > 0$) and the admissible strategy must satisfy the solvency condition, the number of trading times in any finite interval $[0,T]$ is almost surely finite. This rigorously precludes Zeno behavior and ensures that the sum over jump interventions is mathematically well-defined and finite.}
	
	Summing the generalized It\^o inequalities over all no-trade intervals up to $T\wedge\rho_n$, adding the nonpositive jump contributions, and taking expectations yield
	\[
	\Phi(t,x,y)
	\ge
	\mathbb E\left[
	\int_t^{T\wedge\rho_n}\alpha e^{-\beta s}u_1(c(s))\,ds
	+\Phi(T\wedge\rho_n,w^B(T\wedge\rho_n),w^S(T\wedge\rho_n))
	\right].
	\]
	Letting $n\to\infty$, {we have \(T \wedge \rho_n \to T\) almost surely. The uniform integrability assumed in condition \ref{UI} ensures convergence in $L^1$ for the terminal term. For the running cost integral, the integrability condition in Definition \ref{def:admissible}~(\ref{Integrability}) guarantees \(\mathbb{E}\left[ \int_t^T e^{-\beta s} |u_1(c(s))| \, ds \right] < \infty\), permitting the direct application of the Dominated Convergence Theorem. Passing to the limit yields}
	\[
	\Phi(t,x,y)
	\ge
	\mathbb E\left[
	\int_t^T\alpha e^{-\beta s}u_1(c(s))\,ds
	+(1-\alpha)e^{-\beta T}u_2(w^B(T)+w^S(T))
	\right].
	\]
	Since the admissible strategy $\pi$ was arbitrary, $\Phi(t,x,y)\ge V(t,x,y)$.
	
	Now suppose the feedback strategy $\pi^*$ satisfies the specified optimality conditions. Along the no-trade intervals, the HJB supremum is attained, making the drift inequality an equality. At trading times, the selected intervention attains $\mathcal M\Phi$, making the jump inequality an equality. Applying the identical localization and convergence arguments under \(\pi^*\) gives
	\[
	\Phi(t,x,y)
	=
	\mathbb E\left[
	\int_t^T\alpha e^{-\beta s}u_1(c^*(s))\,ds
	+(1-\alpha)e^{-\beta T}u_2(w^{B,*}(T)+w^{S,*}(T))
	\right].
	\]
	Hence $V(t,x,y)\ge\Phi(t,x,y)$. Combining this with the upper bound proves $V=\Phi$, and the feedback strategy $\pi^*$ is optimal.
\end{proof}

\begin{Remark}
	In the continuation region $\Phi>\mathcal M\Phi$, the QVI reduces to the HJB equation. After the change of variables $w=x+y$ and $\phi=y/w$, this corresponds to \eqref{HJBVcrra} written in the original variables. The buy and sell boundaries in Section~\ref{sec: general costs} correspond to points where $\Phi=\mathcal M\Phi$. Value matching and first-order conditions are obtained by evaluating the intervention equality and differentiating with respect to the optimal trade size, provided the stated differentiability assumptions hold.
\end{Remark}

\begin{Remark}[Limit Transitions to Degenerate Control Regimes]\label{rem:limit_transitions}
	While Theorem~\ref{verification-thm} is strictly formulated for the impulse-control regime (where $\lim_{z\downarrow 0}k(z) > 0$), our general asymptotic expansion framework implicitly unifies other control regimes. These regimes can be rigorously interpreted as limit transitions of the exact QVI \eqref{verification qvi} when the fixed cost component vanishes. The nature of the topological phase transition in the space of admissible strategies depends fundamentally on the local marginal cost near the origin, $\lim_{z\downarrow 0}k'(z)$:
	
	\begin{enumerate}[label=(\alph*)]
		\item \textbf{Singular Control Limit ($\lim_{z\downarrow 0}k'(z) > 0$)}: 
		Assume there is no fixed cost ($\lim_{z\downarrow 0}k(z) = 0$), but the marginal cost at the origin is strictly positive, denoted by $k'(0+) \triangleq \lim_{z\downarrow 0}k'(z) > 0$. The basic example is pure proportional costs ($k(z) = k_2 z$). As the trade size $z \to 0$, the optimal strategy shifts from discrete jumps to continuous, finite-variation processes. Analytically, applying a first-order Taylor expansion to the intervention obstacle $\Phi(x,y) \ge \mathcal{M}\Phi(x,y)$ for an infinitesimal purchase $\ell > 0$ yields:
		\[
		\Phi(x,y) \ge \Phi(x - \ell - k(\ell), y + \ell) \approx \Phi(x,y) + \big[ \partial_y\Phi - \big(1+k'(0+)\big)\partial_x\Phi \big] \ell.
		\]
		Dividing by $\ell$ and letting $\ell \downarrow 0$ recovers the classical gradient constraint $\partial_y\Phi - (1+k'(0+))\partial_x\Phi \le 0$ for the buy region. A symmetric argument for an infinitesimal sale yields $(1-k'(0+))\partial_x\Phi - \partial_y\Phi \le 0$ for the sell region. The verification argument adapts by replacing the discrete jump summations in the Meyer-It\^o formula with Lebesgue-Stieltjes integrals with respect to the finite-variation processes $\mathfrak{L}$ and $\mathfrak{J}$.
		
		\item \textbf{Absolutely Continuous Control Limit ($\lim_{z\downarrow 0}k'(z) = 0$)}: Assume there is no fixed cost ( $\lim_{z\downarrow 0}k(z) = 0$) and the marginal cost vanishes at the origin ($k'(0+) = 0$), but the cost function is strictly convex ($k''(z) > 0$). The basic example is pure nonlinear costs ($k(z) = k_3 z^\kappa$ for $\kappa>1$). As discussed in Remark~\ref{rem:cost-approx}, item~\ref{control regimes}, infinite-order splitting reduces the absolute cost of any instantaneous block trade to zero. Consequently, the rigorous formulation inherently shifts to an absolutely continuous control regime, where the investor trades at finite rates $l(t) = d\mathfrak{L}(t)/dt \ge 0$ and $j(t) = d\mathfrak{J}(t)/dt \ge 0$ over time. The non-local intervention operator in the QVI is replaced by a local supremum over consumption and trading rates within the continuous HJB equation:
		\[
		\sup_{c \ge 0, \, l, j \ge 0} \left\{ \alpha e^{-\beta t}u_1(c) + \mathcal{L}^c\Phi + l\big[\partial_y\Phi - \partial_x\Phi\big] + j\big[\partial_x\Phi - \partial_y\Phi\big] - \tilde{k}(l) \partial_x\Phi - \tilde{k}(j) \partial_x\Phi \right\} = 0.
		\]
		Here, $\tilde{k}(\cdot)$ represents the instantaneous execution cost \textit{rate} (e.g., $\tilde{k}(l) = \text{const} \cdot l^\kappa$), which is dimensionally distinct from the normalized block-trade cost $\widehat{k}(\cdot,\cdot)$ defined earlier. It serves as the continuous-time flow analogue of $k(\cdot)$. The verification argument in this regime becomes that of a standard continuous-control problem, relying strictly on absolutely continuous Lebesgue integrals.
	\end{enumerate}
	These limit transitions clarify how the verification arguments differ across transaction-cost specifications while remaining consistent with the same solvency-based economic structure.
\end{Remark}

\begin{Remark}
	Theorem~\ref{verification-thm} verifies a sufficiently smooth exact solution of the QVI. To verify the leading-order asymptotic formulas in Theorem~\ref{General} as asymptotically optimal policies, one would need additional residual estimates showing that the asymptotic candidate satisfies the QVI up to a controlled error and that this error vanishes as the cost-to-wealth ratio tends to zero.
\end{Remark}

\section{Conclusion}\label{sec: conclusion}

In this paper, we study optimal asset allocation and consumption strategies under a broad class of transaction cost functions satisfying smoothness, monotonicity, and convexity conditions. We derive leading-order asymptotic formulas for the transaction-cost-induced corrections, including the no-trade region, the four trading boundaries, the value-function correction, and the optimal consumption rate. These formulas address the practical questions of when to trade, how much to trade, how to allocate wealth, and how to consume. Our theoretical approach is based on maximizing expected CRRA utility over a finite horizon and applying a singular perturbation expansion. We show that the fixed and proportional components determine the leading-order size of the no-trade region and characterize analytically how the trading boundaries and trading volumes depend on the cost structure.

Complementing the asymptotic analysis, we establish a rigorous verification theorem for the exact formulation of the optimal control problem. Specifically, we prove that under a strictly positive fixed cost component, a sufficiently regular solution to the associated impulse-control quasi-variational inequality (QVI) coincides with the true value function. Furthermore, we analyze the limiting transitions of this exact QVI to degenerate control regimes---namely, singular and absolutely continuous control---as the fixed cost vanishes.

These results provide analytical guidance for portfolio and risk management in markets with transaction costs. The analysis also clarifies the roles of linear and nonlinear cost components, showing that the existence of a no-trade region is governed by the fixed and proportional components rather than by the nonlinear component alone. Overall, the proposed framework accommodates general nonlinear transaction cost structures while retaining tractable leading-order formulas for portfolio allocation, consumption, and trading decisions.

An important contribution of this paper is the combination of a general transaction cost function with CRRA preferences. Unlike exponential-utility formulations, where trading decisions are naturally expressed in dollar amounts, the CRRA formulation bases investment decisions on the ratio of risky wealth to total wealth. Accordingly, the relevant {small cost-to-wealth condition} is that the wealth-normalized cost $k(\Delta)/w$ is small, rather than the absolute transaction cost $k(\Delta)$ itself. This unified CRRA framework accommodates fixed, proportional, fixed-plus-proportional, and nonlinear cost structures while providing tractable analytic expressions for the trading boundaries, asset allocation, value-function correction, and optimal consumption adjustment; it also identifies the fixed or proportional leading component as the source of a persistent non-degenerate no-trade region.

\section*{Appendix} 
\appendix

\section{Proof of Theorem \ref{General}}\label{appendix: proof general}

In this appendix, we provide the details for proving Theorem \ref{General}. We define a dimensionless transaction cost $\widehat{k}\big(\triangle\phi,w\big)= \frac{k\big(\triangle\phi\,w\big)}{w}$. This quantity represents the fraction of total wealth spent on the transaction if a trade is performed. In practice, $\widehat k\big(\triangle\phi,w\big)\ll 1$, otherwise, the trade would deplete too much wealth, and the investor will refrain from trading at that moment.

Let $\varepsilon$ ($\varepsilon \ll 1$) be the order of magnitude of the dimensionless transaction cost $\widehat k\big(\triangle\phi,w\big)$. Then 
$\overline{k}\big(\triangle\phi,w\big)= \frac{\widehat k\big(\triangle\phi,\,w\big)}{\varepsilon}$ is ${\rm O}(1)$ term.
Therefore, the transaction cost $k\big(\triangle\phi\,w\big)$ can be expressed as	
\begin{align*}
k\big(\triangle\phi\,w\big)=\varepsilon \overline k\big(\triangle\phi,w\big)w.
\end{align*}
Since $\varepsilon$ is small, the width of the no-trade region is also small.  We expand $\phi$ in terms of $\varepsilon$ around $\phi^*$. To obtain the solution to \eqref{crraH-J-BVc*} and \eqref{crrabc1}-\eqref{crrasc3}, we apply a singular perturbation expansion in terms of $\varepsilon$. 

Let $\lambda$ be the exponent of the leading-order term in the singular perturbation expansion, 
namely  $\phi-\phi^*={\rm O}(\varepsilon^{\lambda})$, where $\lambda$ is a constant to be determined, and $\phi^*$ is the frictionless optimal asset allocation strategy given by \eqref{mertoncrra phi c}. This suggests introducing the scaled variable
\begin{equation}\label{generalYeta}
Y\triangleq \varepsilon^{-\lambda}(\phi-\phi^*),
\end{equation}
and $Y$ is an ${\rm O}(1)$ term. Note that $\varepsilon^{\lambda}$ is the size of the no-trade region. The transaction cost $k\big(\triangle\phi\,w\big)$ can be rewritten as
\begin{align}\label{general cost}
k\big(\triangle\phi\, w\big)
=\varepsilon\overline{k}\big(\triangle\phi,w\big)w
=\varepsilon K\big(\triangle Y,w\big)w.
\end{align}
We change the state variables of the system from $(\tau,w,\phi)$ to $(\tau,w,Y,\varepsilon^{\lambda})$, namely $f(\tau,w,\phi) = \overline{f}(\tau,w,Y,\varepsilon^{\lambda})$.
Although $\overline{f}$ and $f$ have different functional forms, since we will only examine the solution in terms of $\overline{f}$ in the rest of the paper, we drop the overbar on $f$ for notational conciseness.

We expand $f(\tau,w,Y,\varepsilon^{\lambda})$ in powers of $\varepsilon^{\lambda}$:
\begin{equation}\label{generalfYexpasion}
f(\tau,w,Y, \varepsilon^{\lambda})=f^*(\tau)+\sum_{i=1}^{\infty}\varepsilon^{i\lambda}f_i(\tau,w,Y).
\end{equation}
The initial condition becomes
$
f(0,w,Y,\varepsilon^{\lambda})=f^*(0)+\sum_{i=1}^{\infty}\varepsilon^{i\lambda}f_i(0,w,Y)=(1-\alpha)^{\frac{1}{1-\gamma}},
$
which gives
\begin{equation}\label{generalfinalY}
f^*(0)=(1-\alpha)^{\frac{1}{1-\gamma}}\ \ \text{and}\ \ f_i(0,w,Y)=0,\ \text{for}\ i=1,2,3,\dots.
\end{equation}

We rewrite the H-J-B equation \eqref{crraH-J-BVc*} in terms of $Y$, then the coefficients of the resulting equation will depend on $\varepsilon^{\lambda}$. We substitute the expression \eqref{generalfYexpasion} into the resulting equation and regroup the results in the power of $\varepsilon^{\lambda}$ to arrive at
\begin{align}\label{G}
G_{0}(f^*)&+\varepsilon^{\lambda}G_{1}(f^*,f_1)+\varepsilon^{2\lambda}G_{2}(f^*,f_1,f_2)\notag\\
&+\varepsilon^{3\lambda}G_{3}(f^*,f_1,f_2,f_3)+\varepsilon^{4\lambda}G_{4}(f^*,f_1,f_2,f_3,f_4)+\cdots=0.
\end{align}
This yields $G_j = 0$  for $j=0,1,2,3,4,\dots$. The explicit expressions of $G_j$ will be given by \eqref{G equals CD} in Appendix \ref{appendix: details in no-trade}. The functions $G_j$, $j=0,1,2,\dots$, are independent of $\varepsilon$.

We also express the four critical boundaries $\phi_B$, $\phi_S$, $\widehat{\phi}_B$, $\widehat{\phi}_S$ in terms of $Y$, namely 
\begin{equation}\label{generalYeta boundary}
Y_B\triangleq \varepsilon^{-\lambda}(\phi_B-\phi^*),\quad
\widehat{Y}_B\triangleq \varepsilon^{-\lambda}(\widehat{\phi}_B-\phi^*),\quad
Y_S\triangleq \varepsilon^{-\lambda}(\phi_S-\phi^*),\quad
\widehat{Y}_S\triangleq \varepsilon^{-\lambda}(\widehat{\phi}_S-\phi^*).
\end{equation}
Then boundary conditions \eqref{crrabc1}-\eqref{crrasc3} can be expressed in terms of $Y_B$, $\widehat{Y}_B$, $Y_S$ and $\widehat{Y}_S$ (details are in Appendix \ref{general B.C.}).
\begin{align}
&f\big(\tau,w,Y_B,\varepsilon^{\lambda}\big)-f\big(\tau,w,\widehat{Y}_B,\varepsilon^{\lambda}\big)
=\,\varepsilon\triangle W_B\left[w\,\partial_{2}f\big(\tau,w,\widehat{Y}_B,\varepsilon^{\lambda}\big)+\frac{\gamma}{1-\gamma}f\big(\tau,w,\widehat{Y}_B,\varepsilon^{\lambda}\big)\right],\label{crraBC1}\\
&\partial_{Y_B}f\big(\tau,w,Y_B,\varepsilon^{\lambda}\big)
=\,\varepsilon\,\partial_{1}K\big(\triangle Y_B,w\big)
\left[\frac{\gamma}{1-\gamma}f\big(\tau,w,Y_B,\varepsilon^{\lambda}\big)+w\partial_{2}f\big(\tau,w,\widehat{Y}_B,\varepsilon^{\lambda}\big)\right],\label{crraBC2}
\\
&\partial_{\widehat{Y}_B}f\big(\tau,w,\widehat{Y}_B,\varepsilon^{\lambda}\big)
=\,\varepsilon\frac{\gamma}{1-\gamma}\,\partial_{1}K\big(\triangle Y_B,w\big)f\big(\tau,w,Y_B,\varepsilon^{\lambda}\big)\notag\\
&\hspace*{100pt}+\varepsilon w\left[\partial_{1}K\big(\triangle Y_B,w\big)\partial_{2}f\big(\tau,w,\widehat{Y}_B,\varepsilon^{\lambda}\big)+K\big(\triangle Y_B,w\big)\partial_{32}f\big(\tau,w,\widehat{Y}_B,\varepsilon^{\lambda}\big)\right],\label{crraBC3}
\\
&f\big(\tau,w,Y_S,\varepsilon^{\lambda}\big)-f\big(\tau,w,\widehat{Y}_S,\varepsilon^{\lambda}\big)
=\,\varepsilon\triangle W_S\left[w\,\partial_{2}f\big(\tau,w,\widehat{Y}_S,\varepsilon^{\lambda}\big)+\frac{\gamma}{1-\gamma}f\big(\tau,w,\widehat{Y}_S,\varepsilon^{\lambda}\big)\right],\label{crraSC1}\\
&\partial_{Y_S}f\big(\tau,w,Y_S,\varepsilon^{\lambda}\big)
=\,-\varepsilon\,\partial_{1}K\big(\triangle Y_S, w\big)
\left[\frac{\gamma}{1-\gamma}f\big(\tau,w,Y_S,\varepsilon^{\lambda}\big)+w\,\partial_{2}f\big(\tau,w,\widehat{Y}_S,\varepsilon^{\lambda}\big)\right],\label{crraSC2}
\\
&\partial_{\widehat{Y}_S}f\big(\tau,w,\widehat{Y}_S,\varepsilon^{\lambda}\big)
=\,-\varepsilon\frac{\gamma}{1-\gamma}\,\partial_{1}K\big(\triangle Y_S,w\big)f\big(\tau,w,Y_S,\varepsilon^{\lambda}\big)\notag\\
&\hspace*{100pt}-\varepsilon w\left[\partial_{1}K\big(\triangle Y_S,w\big)\partial_{2}f\big(\tau,w,\widehat{Y}_S,\varepsilon^{\lambda}\big)-K\big(\triangle Y_S,w\big)\partial_{32}f\big(\tau,w,\widehat{Y}_S,\varepsilon^{\lambda}\big)\right],\label{crraSC3}
\end{align}
where $\triangle W_B=-K\left(\triangle Y_B,w\right)$, $\triangle W_S=-K\left(\triangle Y_S,w\right)$, $\triangle Y_B=\widehat{Y}_B-Y_B$, and $\triangle Y_S=Y_S-\widehat{Y}_S$.

Define the constant, used repeatedly below,
\begin{align}\label{q}
q\triangleq\frac{1}{2}\sigma^2(\phi^{*})^2\left(1-\frac{A}{1-\gamma}\right)^2.
\end{align}

Substituting \eqref{generalfYexpasion} into boundary conditions \eqref{crraBC1}-\eqref{crraSC3} yields
\begin{align*}
&\sum_{i=1}^{\infty}\varepsilon^{i\lambda}\left[f_i\big(\tau,w,Y_B\big)-f_i\big(\tau,w,\widehat{Y}_B\big)\right]\notag\\
=&\,\varepsilon\triangle W_B\left[\frac{\gamma}{1-\gamma}f^*(\tau)+\sum_{i=1}^{\infty}\varepsilon^{i\lambda}\left(w\,\partial_{2}f_{i}\big(\tau,w,\widehat{Y}_B\big)+\frac{\gamma}{1-\gamma}f_i\big(\tau,w,\widehat{Y}_B\big)\right)\right],
\\
&\sum_{i=1}^{\infty}\varepsilon^{i\lambda}\,\partial_{Y_B}f_{i}\big(\tau,w,Y_B\big)\notag\\
=&\,\varepsilon\,\partial_{1}K\big(\triangle Y_B,w\big)
\left[\frac{\gamma}{1-\gamma}f^*(\tau)+\sum_{i=1}^{\infty}\varepsilon^{i\lambda}\left(\frac{\gamma}{1-\gamma}f_i\big(\tau,w,Y_B\big)+w\,\partial_{2}f_{i}\big(\tau,w,\widehat{Y}_B\big)\right)\right],
\\
&\sum_{i=1}^{\infty}\varepsilon^{i\lambda}\,\partial_{\widehat{Y}_B}f_{i}\big(\tau,w,\widehat{Y}_B\big)\notag\\
=&\,\varepsilon\,\partial_{1}K\big(\triangle Y_B,w\big)\left[\frac{\gamma}{1-\gamma}f^*(\tau)+\sum_{i=1}^{\infty}\varepsilon^{i\lambda}\left(\frac{\gamma}{1-\gamma}f_{i}\big(\tau,w,Y_B\big)+w\,\partial_{2}f_{i}\big(\tau,w,\widehat{Y}_B\big)\right)\right]\notag\\
&+\varepsilon K\big(\triangle Y_B,w\big)\sum_{i=1}^{\infty}\varepsilon^{i\lambda}\,\partial_{32}f_{i}\big(\tau,w,\widehat{Y}_B\big),
\end{align*}
\begin{align*}
&\sum_{i=1}^{\infty}\varepsilon^{i\lambda}\left[f_i\big(\tau,w,Y_S\big)-f_i\big(\tau,w,\widehat{Y}_S\big)\right]\notag\\
=&\,\varepsilon\triangle W_S\left[\frac{\gamma}{1-\gamma}f^*(\tau)+\sum_{i=1}^{\infty}\varepsilon^{i\lambda}\left(w\,\partial_{2}f_{i}\big(\tau,w,\widehat{Y}_S\big)+\frac{\gamma}{1-\gamma}f_i\big(\tau,w,\widehat{Y}_S\big)\right)\right],
\\
&\sum_{i=1}^{\infty}\varepsilon^{i\lambda}\,\partial_{Y_S}f_{i}\big(\tau,w,Y_S\big)\\
=&\,-\varepsilon\,\partial_{1}K\big(\triangle Y_S,w\big)
\left[\frac{\gamma}{1-\gamma}f^*(\tau)+\sum_{i=1}^{\infty}\varepsilon^{i\lambda}\left(\frac{\gamma}{1-\gamma}f_i\big(\tau,w,Y_S\big)+w\,\partial_{2}f_{i}\big(\tau,w,\widehat{Y}_S\big)\right)\right],
\end{align*}
and
\begin{align*}
&\sum_{i=1}^{\infty}\varepsilon^{i\lambda}\,\partial_{\widehat{Y}_S}f_{i}\big(\tau,w,\widehat{Y}_S\big)\\
=&\,-\varepsilon\,\partial_{1}K\big(\triangle Y_S,w\big)\left[\frac{\gamma}{1-\gamma}f^*(\tau)+\sum_{i=1}^{\infty}\varepsilon^{i\lambda}\left(\frac{\gamma}{1-\gamma}f_i\big(\tau,w,Y_S\big)+w\,\partial_{2}f_{i}\big(\tau,w,\widehat{Y}_S\big)\right)\right]\notag\\
&+\varepsilon wK\big(\triangle Y_S,w\big)\sum_{i=1}^{\infty}\varepsilon^{i\lambda}\,\partial_{32}f_{i}\big(\tau,w,\widehat{Y}_S\big).
\end{align*}
To determine the leading-order solution, we only need to keep the first term on the right-hand side of the above boundary conditions, namely the term proportional to $f^*(\tau)$. All other terms have orders higher than ${\rm O}(\varepsilon)$ and are negligible. 
We will examine this system order-by-order, namely for the order $\varepsilon^{i\lambda}$($i=0,1,2,3,4,\dots$), we solve the H-J-B equation (see \eqref{G})
\begin{align}\label{i-th HJB}
G_i(f^*,f_1,\dots,f_i) = 0 
\end{align} 
with the associated boundary conditions. When we match the boundary conditions, there are only two possibilities: $i\lambda = 1$ or $i\lambda \neq 1$. We first examine the possibility for $i\lambda=1$, by solving \eqref{i-th HJB} with the initial condition \eqref{generalfinalY} and the associated boundary conditions given by
\begin{equation}\label{boundary condition ilambda=1}
\left\{\begin{aligned}
&f_i\big(\tau,w,Y_B\big)-f_i\big(\tau,w,\widehat{Y}_B\big) = \triangle W_B\frac{\gamma}{1-\gamma}f^*(\tau),\\
&\partial_{3}f_{i}\big(\tau,w,Y_B\big)
=\,\partial_{3}f_{i}\big(\tau,w,\widehat{Y}_B\big)=\,\partial_{1}K\big(\triangle Y_B,w\big)\frac{\gamma}{1-\gamma}f^*(\tau),\\
&f_i\big(\tau,w,Y_S\big)-f_i\big(\tau,w,\widehat{Y}_S\big) = \triangle W_S\frac{\gamma}{1-\gamma}f^*(\tau),\\
&\partial_{3}f_{i}\big(\tau,w,Y_S\big)
=\,\partial_{3}f_{i}\big(\tau,w,\widehat{Y}_S\big)=\,-\partial_{1}K\big(\triangle Y_S,w\big)\frac{\gamma}{1-\gamma}f^*(\tau).
\end{aligned}\right.
\end{equation}
If a solution exists, then we have found the leading-order solution and the value of $\lambda$ is determined, namely $\lambda = 1/i$. No further examination of higher orders is needed. If a solution does not exist, it must be $i\lambda\neq 1$, then we solve \eqref{i-th HJB} with the initial condition \eqref{generalfinalY} and the associated boundary conditions given by
\begin{equation}\label{boundary condition ilambda neq 1}
\left\{\begin{aligned}
&f_i\big(\tau,w,Y_B\big)-f_i\big(\tau,w,\widehat{Y}_B\big) = 0,\\
&\partial_{3}f_{i}\big(\tau,w,Y_B\big)
=\,\partial_{3}f_{i}\big(\tau,w,\widehat{Y}_B\big)=\, 0,\\
&f_i\big(\tau,w,Y_S\big)-f_i\big(\tau,w,\widehat{Y}_S\big) = 0,\\
&\partial_{3}f_{i}\big(\tau,w,Y_S\big)
=\,\partial_{3}f_{i}\big(\tau,w,\widehat{Y}_S\big) = 0.
\end{aligned}\right.
\end{equation}  
Afterwards, we progress to the next order, namely the order $(i+1)\lambda$. 

We start from $i=0$. We now show that we need to carry out this procedure up to $i=4$ to determine the leading-order contribution from the transaction costs.

{\bf (1)} For $i=0$, namely the $\varepsilon^{0}$ term, \eqref{i-th HJB} is $G_{0}(f^*)=qf^*_{YY}=0$.

This equation holds automatically since $f^*(\tau)$ is a function of $\tau$ only.

{\bf (2)} For $i=1$, namely the $\varepsilon^{\lambda}$ term, from the fact that $f^*$ is independent of $Y$, the H-J-B equation \eqref{i-th HJB} with $i=1$ becomes
\begin{align*}
G_{1}(f^*,f_1)=qf_{1YY}=0,
\end{align*}
which gives $f_1(\tau,w,Y)=a(\tau,w)Y+b(\tau,w)$ with $a$, $b$ being functions which only depend on $\tau$ and $w$. If $\lambda = 1$, the boundary conditions \eqref{boundary condition ilambda=1} with $i=1$ leads to
\begin{align*}
a(\tau,w)\triangle Y_B=\,&\frac{\gamma}{1-\gamma}K\big(\triangle Y_B,w\big)f^*(\tau),\\
a(\tau,w) =\,&\frac{\gamma}{1-\gamma}\,\partial_{1}K\big(\triangle Y_B,w\big)f^*(\tau),
\end{align*}
which implies 
\begin{align}\label{K arbitrary}
\partial_{1}K\big(\triangle Y_B,w\big)\triangle Y_B=K\big(\triangle Y_B,w\big).
\end{align}
Since $K\big(\triangle Y_B,w\big)$ is an arbitrary function, the condition \eqref{K arbitrary} does not hold in general, thus $\lambda\neq 1$. Furthermore, from boundary conditions \eqref{boundary condition ilambda neq 1} with $i=1$, we conclude that $a(\tau,w)\equiv 0$, which implies that $f_1$ is independent of $Y$.

{\bf (3)} For $i=2$, namely the $\varepsilon^{2\lambda}$ term, since $f^*$ and $f_1$ are independent of $Y$, the H-J-B equation \eqref{i-th HJB} with $i=2$ yields 
\begin{align*}
G_{2}(f^*,f_1,f_2)=-f^*_{\tau}+a_2f^*+a_1+qf_{2YY}=0,
\end{align*}
where $a_1$, $a_2$ are given in \eqref{def_a}. 
Since $f^*$ given by \eqref{mertoncrra f} satisfies 
\begin{align}\label{mertoncrra fe}
-\partial_{1}f^*+a_2f^*+a_1=0, 
\end{align}
we have $f_{2YY}=0$. Similar to the proof for the $\varepsilon^{\lambda}$ term, the boundary conditions \eqref{boundary condition ilambda=1} with $i=2$ leads to $2\lambda\neq 1$. Therefore, from \eqref{boundary condition ilambda neq 1} with $i=2$, $f_2$ is also independent of $Y$.

{\bf (4)} For $i=3$, namely the $\varepsilon^{3\lambda}$ term, based on the facts that $f^*$, $f_1$ and $f_2$ are independent of $Y$, the equation in no-trade region (see \eqref{i-th HJB} with $i=3$) can be expressed as 
\begin{align}\label{f3}
G_{3}(f^*,f_1,f_2,f_3)
=\,&\mathcal{L}f_1+qf_{3YY}+\gamma\sigma^2\Big(\frac{A}{1-\gamma}-\phi^*\Big)f^*\notag\\
=\,&\mathcal{L}f_1+qf_{3YY}=0,
\end{align}
where the operator $\mathcal{L}$ is defined as
\begin{align}
\mathcal{L}=\,&-\frac{\partial}{\partial{\tau}}+\left(r+A\sigma^2\phi^*+\gamma\sigma^2(\phi^*)^2-\frac{a_1}{f^{*}}\right)w\frac{\partial}{\partial w}+\frac{\sigma^2}{2}(\phi^*)^2w^2\frac{\partial^2}{\partial{w^2}}\notag\\
&+\Big(\frac{\gamma A}{1-\gamma}\sigma^2\phi^*-\frac{\gamma}{2}\sigma^2(\phi^*)^2+\frac{\gamma r-\beta}{1-\gamma}\Big)\notag\\
=\,&-\frac{\partial}{\partial{\tau}}+\left(r+\frac{A^2\sigma^2}{(1-\gamma)^2}-\frac{a_1}{f^{*}}\right)w\frac{\partial}{\partial w}+\frac{A^2\sigma^2}{2(1-\gamma)^2}w^2\frac{\partial^2}{\partial{w^2}}+a_2,\label{L operator}
\end{align}
and the constants $a_1$ and $a_2$ are defined in \eqref{def_a}.

Equation \eqref{f3} shows that $f_3$ is a quadratic function of $Y$: 
\begin{align}\label{f3 quadratic form}
qf_3(\tau,w,Y)=-\frac{1}{2}\mathcal{L}f_1(\tau,w)Y^2+c_1(\tau,w)Y+c_2(\tau,w)
\end{align}
with functions $c_1(\tau,w)$ and $c_2(\tau,w)$ to be determined.

We now examine boundary conditions for $f_3$. If $\lambda=\frac13$, from \eqref{boundary condition ilambda=1} with $i=3$, the boundary conditions for $f_3$ are
\begin{align}
&f_3(\tau,w,\widehat{Y}_B)-f_3(\tau,w,Y_B)=\frac{\gamma}{1-\gamma}K\left(\triangle Y_B,w\right)f^*(\tau),\label{V3b1st}\\
&\partial_{3}f_{3}(\tau,w,\widehat{Y}_B)=\partial_{3}f_{3}(\tau,w,Y_B)=\frac{\gamma}{1-\gamma}\,\partial_{1}K\left(\triangle Y_B,w\right)f^*(\tau),\label{V3b2nd}
\\
&f_3(\tau,w,\widehat{Y}_S)-f_3(\tau,w,Y_S)=\frac{\gamma}{1-\gamma}K\left(\triangle Y_S,w\right)f^*(\tau),\label{V3s1st}\\
&\partial_{3}f_{3}(\tau,w,\widehat{Y}_S)=\partial_{3}f_{3}(\tau,w,Y_S)=-\frac{\gamma}{1-\gamma}\,\partial_{1}K\left(\triangle Y_S,w\right)f^*(\tau).\label{V3s2nd}
\end{align}

Substituting \eqref{f3 quadratic form} into \eqref{V3b2nd}, we have
\begin{align*}
&-\mathcal{L}f_1(\tau,w)\widehat{Y}_B+c_1(\tau,w)=q\frac{\gamma}{1-\gamma}\,\partial_{1}K\left(\triangle Y_B,w\right)f^*(\tau),\\
&-\mathcal{L}f_1(\tau,w)Y_B+c_1(\tau,w)=q\frac{\gamma}{1-\gamma}\,\partial_{1}K\left(\triangle Y_B,w\right)f^*(\tau).
\end{align*}
Since $\widehat{Y}_B\neq Y_B$, the above two equations lead to 
\begin{align}\label{L f1}
\mathcal{L}f_1(\tau,w)=0. 
\end{align}
Equation~\eqref{L f1} can be derived from \eqref{f3 quadratic form} and \eqref{V3b2nd}, or from \eqref{f3 quadratic form} and \eqref{V3s2nd}. 

From \eqref{L f1}, \eqref{f3} is reduced to $f_{3YY}=0$, thus $f_3$ is a linear function of $Y$. 
Following the same procedure in our analysis for $i=1$ case, one will reach the conclusion that $\lambda\neq\frac{1}{3}$. Furthermore, \eqref{boundary condition ilambda neq 1} with $i=3$ implies that $f_3$ is independent of $Y$. 

The equation \eqref{L f1}, together with the boundary conditions \eqref{boundary condition ilambda neq 1} with $i=1$ and the initial condition $f_1(0,w)=0$, leads to $f_1=0$. Thus the leading-order correction to the value function occurs at least at ${\rm O}(\varepsilon^{2\lambda})$ level.

{\bf (5)} For $i=4$, namely the $\varepsilon^{4\lambda}$ term, from \eqref{i-th HJB} with $i=4$, the H-J-B equation for $f_4(\tau,w,Y)$ is 
\begin{align}\label{generalEV4Y}
G_4(f^*,f_1,f_2,f_3,f_4)
= qf_{4YY}+\mathcal{L}f_2(\tau,w)-\frac{1}{2}\gamma\sigma^2Y^2f^*(\tau)
=0,
\end{align} 
where $\mathcal{L}$ is given by \eqref{L operator}, $f_1=0$ and $f_3$ is independent of $Y$. 

After integrating \eqref{generalEV4Y} over $[0,Y]$, we obtain
\begin{align}\label{generalEV4once}
qf_{4Y}(\tau,w,Y)
=\frac{1}{6}\gamma\sigma^2f^*(\tau)Y^3-\mathcal{L}f_2(\tau,w)Y+N,
\end{align}
where $N\triangleq qf_{4Y}(\tau,w,0)$. 
Integrating \eqref{generalEV4once} over $[0,Y]$ again, we get
\begin{align}\label{linearEV4again}
qf_{4}(\tau,w,Y)
=\frac{1}{24}\gamma\sigma^2f^*(\tau)Y^4-\frac{1}{2}\mathcal{L}f_2(\tau,w)Y^2+NY+M,
\end{align}
where $M\triangleq qf_{4}(\tau,w,0)$.

Let us examine the possibility of $4\lambda=1$. By setting $4\lambda=1$, boundary conditions \eqref{boundary condition ilambda=1} with $i=4$ are 
\begin{align}
&f_4(\tau,w,\widehat{Y}_B)-f_4(\tau,w,Y_B)=\frac{\gamma}{1-\gamma}K\left(\triangle Y_B,w\right)f^*(\tau),\label{generalf4b1}\\
&\partial_{3}f_{4}(\tau,w,\widehat{Y}_B)=\partial_{3}f_{4}(\tau,w,Y_B)=\frac{\gamma}{1-\gamma}\,\partial_{1}K\left(\triangle Y_B,w\right)f^*(\tau),\label{generalf4b2}\\
&f_4(\tau,w,\widehat{Y}_S)-f_4(\tau,w,Y_S)=\frac{\gamma}{1-\gamma}K\left(\triangle Y_S,w\right)f^*(\tau),\label{generalf4s1}\\
&\partial_{3}f_{4}(\tau,w,\widehat{Y}_S)=\partial_{3}f_{4}(\tau,w,Y_S)=-\frac{\gamma}{1-\gamma}\,\partial_{1}K\left(\triangle Y_S,w\right)f^*(\tau).\label{generalf4s2}
\end{align}
The above four boundary conditions give us the following two results:
\begin{enumerate}[label=\roman*)]
	\item {\bf The leading term contribution from the transaction cost to the value function:} By setting $Y$ to $Y_B$ and to $\widehat{Y}_B$ in \eqref{generalEV4once}, we obtain the following two equations
	\begin{align*}
	q\,\partial_{3}f_{4}(\tau,w,Y_B)
	=\frac{1}{6}\gamma\sigma^2f^*(\tau)Y_B^3-\mathcal{L}f_2(\tau,w)Y_B+N,\\
	q\,\partial_{3}f_{4}(\tau,w,\widehat{Y}_B)
	=\frac{1}{6}\gamma\sigma^2f^*(\tau)\widehat{Y}_B^3-\mathcal{L}f_2(\tau,w)\widehat{Y}_B+N.
	\end{align*}
	These two equations and \eqref{generalf4b2} lead to
	\begin{align}	
	\mathcal{L}f_2(\tau,w)=\frac{1}{6}\gamma\sigma^2f^*(\tau)\left(Y_B^2+\widehat{Y}_B^2+Y_B\widehat{Y}_B\right).\label{generalf2B w}
	\end{align}
	Similarly, by setting $Y$ to $Y_S$ and to $\widehat{Y}_S$ in \eqref{generalEV4once}, \eqref{generalf4s2} leads to
	\begin{align}
	\mathcal{L}f_2(\tau,w)=\frac{1}{6}\gamma\sigma^2f^*(\tau)\left(Y_S^2+\widehat{Y}_S^2+Y_S\widehat{Y}_S\right).\label{generalf2S w}
	\end{align}	
	Equations \eqref{generalf2B w} and \eqref{generalf2S w} show that the leading-order buy and sell boundaries must be symmetric, namely
	\begin{align}\label{B=-S}
	Y_B=-Y_S\quad\text{and}\quad \widehat{Y}_B=-\widehat{Y}_S.
	\end{align}
	Equations \eqref{generalf2B w} and \eqref{generalf2S w} with the initial condition $f_2(0,w)=0$ and boundary conditions of $f_2$ (see \eqref{boundary condition ilambda neq 1} with $i=2$) imply that $f_2$ is independent of $w$. 
	Thus, \eqref{generalf2B w} and \eqref{generalf2S w} reduce to
	\begin{align}	
	-\partial_{1}f_{2}(\tau)+a_2f_2(\tau)=\frac{1}{6}\gamma\sigma^2f^*(\tau)\left(Y_B^2+\widehat{Y}_B^2+Y_B\widehat{Y}_B\right),\label{generalf2B}\\
	-\partial_{1}f_{2}(\tau)+a_2f_2(\tau)=\frac{1}{6}\gamma\sigma^2f^*(\tau)\left(Y_S^2+\widehat{Y}_S^2+Y_S\widehat{Y}_S\right).\label{generalf2S}
	\end{align}
	The solutions to \eqref{generalf2B} and \eqref{generalf2S} are
	\begin{align*}
	f_{2}(\tau)
	=\,&-4\gamma\sigma^2\psi(\tau)\left(Y_B^2+\widehat{Y}_B^2+Y_B\widehat{Y}_B\right),
	\\
	f_{2}(\tau)
	=\,&-4\gamma\sigma^2\psi(\tau)\left(Y_S^2+\widehat{Y}_S^2+Y_S\widehat{Y}_S\right),
	\end{align*}
	where \eqref{mertoncrra f} is used, and 
	\begin{align*}
	\psi(\tau)
	=\,&\frac{1}{24}\left[e^{a_2\tau}\frac{a_2(a_1+a_2a_3)\tau-a_1}{a_2^2}+\frac{a_1}{a_2^2}\right],
	\end{align*}
	that is \eqref{hatI}. The constants $a_1$, $a_2$, $a_3$, $a_4$ are defined in \eqref{def_a}. 
	
	By \eqref{generalYeta boundary}, it is worth noting that
	\begin{align}
	\varepsilon^{2\lambda}f_{2}(\tau)
	=\,&-4\gamma\sigma^2\psi(\tau)\left[\left(\varepsilon^{\lambda}Y_B\right)^2+\left(\varepsilon^{\lambda}\widehat{Y}_B\right)^2+\left(\varepsilon^{\lambda}Y_B\right)\left(\varepsilon^{\lambda}\widehat{Y}_B\right)\right]\notag\\
	=\,&-4\gamma\sigma^2\psi(\tau)\left[\left(\phi_B-\phi^*\right)^2+\left(\widehat{\phi}_B-\phi^*\right)^2+\left(\phi_B-\phi^*\right)\left(\widehat{\phi}_B-\phi^*\right)\right],\label{generalf2b}\\
	\varepsilon^{2\lambda}f_{2}(\tau)
	=\,&-4\gamma\sigma^2\psi(\tau)\left[\left(\varepsilon^{\lambda}Y_S\right)^2+\left(\varepsilon^{\lambda}\widehat{Y}_S\right)^2+\left(\varepsilon^{\lambda}Y_S\right)\left(\varepsilon^{\lambda}\widehat{Y}_S\right)\right]\notag\\
	=\,&-4\gamma\sigma^2\psi(\tau)\left[\left(\phi_S-\phi^*\right)^2+\left(\widehat{\phi}_S-\phi^*\right)^2+\left(\phi_S-\phi^*\right)\left(\widehat{\phi}_S-\phi^*\right)\right].\label{generalf2s}
	\end{align}

	\item {\bf The optimal trading-boundary:} Based on \eqref{generalEV4once} and \eqref{generalf2B w}, we eliminate $N$ and $\mathcal{L}f_2(\tau,w)$ in \eqref{linearEV4again}. Thus, \eqref{linearEV4again} can be rewritten as
	\begin{align}
	qf_4(\tau,w,Y)=&\,-\frac{1}{8}\gamma\sigma^2f^*(\tau)Y^4+\frac{1}{12}\gamma\sigma^2f^*(\tau)\left(Y_B^2+\widehat{Y}_B^2+Y_B\widehat{Y}_B\right)Y^2\notag\\
	&+qY\,\partial_{3}f_{4}(\tau,w,Y)+M.\label{generalEV4again b}
	\end{align}
	Using \eqref{q}, \eqref{generalf4b1}, \eqref{generalf4b2} and letting $\triangle Y_B=\widehat{Y}_B-Y_B$, \eqref{generalEV4again b} becomes
	\begin{align}\label{generalB1}
	(\phi^{*})^2\left(1-\frac{A}{1-\gamma}\right)^2\big[K\left(\triangle Y_B,w\right)-\triangle Y_B\,\partial_{1}K(\triangle Y_B,w)\big]
	=\,-\frac{1-\gamma}{12}\left(Y_B+\widehat{Y}_B\right)\left(\widehat{Y}_B-Y_B\right)^3.
	\end{align}
	By \eqref{generalEV4once}, \eqref{generalf4b2} and \eqref{generalf2B w}, we obtain $N=0$, and
	\begin{align}\label{generalB2}
	(\phi^{*})^2\left(1-\frac{A}{1-\gamma}\right)^2\,\partial_{1}K(\triangle Y_B,w)=-\frac{1-\gamma}{3}Y_B\widehat{Y}_B\left(Y_B+\widehat{Y}_B\right).
	\end{align}
	Similarly, following the same procedure for buy boundaries, we get the following equations for sell boundaries,
	\begin{align}
	H\big[K\left(\triangle Y_S,w\right)-\triangle Y_S\,\partial_{1}K(\triangle Y_S,w)\big]=&\,-\left(Y_S+\widehat{Y}_S\right)\left(\widehat{Y}_S-Y_S\right)^3,\label{generalS1}\\
	H\,\partial_{1}K(\triangle Y_S,w)=&\,4Y_S\widehat{Y}_S\left(Y_S+\widehat{Y}_S\right),\label{generalS2}
	\end{align}
	where $\triangle Y_S=Y_S-\widehat{Y}_S$ and $H$ is defined in \eqref{general A1 A2}. The equations \eqref{generalB1}, \eqref{generalB2}, \eqref{generalS1} and \eqref{generalS2} determine the optimal trading-boundary. Now, we analyze them in detail. 
	
	Based on \eqref{B=-S}, we introduce
	\begin{equation}\label{def x+-}
	\left\{\begin{aligned}
	&x^{-}=\,Y_S-\widehat{Y}_S=-Y_B+\widehat{Y}_B,\\
	&x^{+}=\,Y_S+\widehat{Y}_S=-(Y_B+\widehat{Y}_B).
	\end{aligned}\right.
	\end{equation}
	The definitions imply that $x^{-}\geq 0$, $x^{+}>0$. We define 
	\begin{align}\label{A12}
	\bar{A}_1(x^{-},w)=HK(x^{-},w)-x^{-}H\,\partial_{1}K(x^{-},w)\ \ \text{and}\ \  
	\bar{A}_2(x^{-},w)=H\,\partial_{1}K(x^{-},w),
	\end{align}
	where $H$ is defined in \eqref{general A1 A2}.
	
	Equations \eqref{generalB1} and \eqref{generalS1} lead to 
	\begin{align}\label{original x+- 1}
		\bar{A}_1(x^{-},w)=x^{+}(x^-)^3.
	\end{align}
	Similarly, \eqref{generalB2} and \eqref{generalS2} lead to 
	\begin{align}\label{original x+- 2}
		\bar{A}_2(x^{-},w)=\left((x^+)^2-(x^-)^2\right)x^{+}.
	\end{align}

	We introduce the variables
	\begin{align}\label{def x12}
	x=\varepsilon^{\lambda}x^{-},\ \ \widetilde{x}=\varepsilon^{\lambda}x^{+}.
	\end{align}
	Then, from \eqref{def x+-} and \eqref{def x12}, we have
	\begin{align}\label{varepsilonY}
	\varepsilon^{\lambda}Y_B=-\frac12(\widetilde{x}+x),\ \ \varepsilon^{\lambda}\widehat{Y}_B=-\frac12(\widetilde{x}-x),\ \ 
	\varepsilon^{\lambda}Y_S=\frac12(\widetilde{x}+x),\ \ \varepsilon^{\lambda}\widehat{Y}_S=\frac12(\widetilde{x}-x).
	\end{align}
	Furthermore, from \eqref{general cost}, we have 
	\begin{align}\label{xmK_xm}
	\varepsilon^{4\lambda} x^{-}\,\partial_{1}K=\varepsilon^{4\lambda-\lambda}x\cdot \varepsilon^{\lambda}\,\partial_{1}\bar{k}
	=\varepsilon^{4\lambda-1}x\,k',
	\end{align}
	and 
	\begin{align}\label{K_xm}
	\varepsilon^{3\lambda}\,\partial_{1}K =\varepsilon^{3\lambda}\cdot\varepsilon^{\lambda}\,\partial_{1}\bar{k}
	=\varepsilon^{4\lambda-1}k'.
	\end{align}
	Here we use $\partial_{1}$ and $k'$ to denote the first derivatives of the associated function and the transaction cost function, respectively. 
	
	Since $4\lambda=1$, \eqref{xmK_xm} and \eqref{K_xm} become 
	\begin{align}\label{xk_x k_x}
	\varepsilon^{4\lambda} x^{-}\,\partial_{1}K = x\,k',\ \ \text{and}\ \ \varepsilon^{3\lambda}\,\partial_{1}K = k'.
	\end{align}
	Finally, based on \eqref{def x12}, \eqref{varepsilonY} and \eqref{xk_x k_x}, \eqref{original x+- 1}-\eqref{original x+- 2} lead to \eqref{original generalx}. This confirms $\lambda = \frac14$. A combination of \eqref{generalf2b}, \eqref{generalf2s} and \eqref{varepsilonY} leads to the identity \begin{align*}
		\widehat{f}(\tau,w)
		=\,&\varepsilon^{2\lambda}f_2(\tau)
		=\,-\gamma\sigma^2 \psi(\tau) \left(3\widetilde{x}^2+x^2\right),
	\end{align*}
	that is \eqref{crraV2}. Similarly, a combination of \eqref{varepsilonY} and \eqref{generalYeta boundary} leads to \eqref{optimal sell boundary}-\eqref{optimal buy boundary}. 	
\end{enumerate}	

From \eqref{crraV} and \eqref{generalfYexpasion}, we have
\begin{align}
V=\,&e^{-\beta (T-\tau)}\frac{a_w w^{\gamma}}{\gamma}\left[f^*(\tau)+\varepsilon^{2\lambda}f_2(\tau)+\cdots\right]^{1-\gamma}\notag\\
=\,&V^*(\tau,w)\left[1+(1-\gamma)\frac{\varepsilon^{2\lambda}f_2(\tau)}{f^{*}(\tau)}+\cdots\right]\notag\\
=\,&V^*(\tau,w)\left[1+(1-\gamma)\frac{\widehat{f}(\tau,w)}{f^{*}(\tau)}+\cdots\right].\label{generalresultV}
\end{align}
This is \eqref{crraleadingV}. 

Expression \eqref{crraoptimalc} contains $\frac{1}{f}$, based on \eqref{generalfYexpasion}, it has an expansion
\begin{align}\label{f-}
\frac{1}{f}
=\,&\frac{1}{f^*}-\varepsilon^{\lambda}\frac{f_1}{(f^*)^{2}}+\varepsilon^{2\lambda}\left[-\frac{f_2}{(f^*)^{2}}+\frac{(f_1)^2}{(f^*)^{3}}\right]
	+\varepsilon^{3\lambda}\left[-\frac{f_3}{(f^*)^{2}}+2\frac{f_1f_2}{(f^*)^{3}}-\frac{(f_1)^3}{(f^*)^{4}}\right]\notag\\
&+\varepsilon^{4\lambda}\left[-\frac{f_4}{(f^*)^{2}}+2\frac{f_1f_3}{(f^*)^{3}}+\frac{(f_2)^2}{(f^*)^{3}}-3\frac{(f_1)^2f_2}{(f^*)^{4}}\right]+\cdots.
\end{align}
From \eqref{crraoptimalc}, \eqref{f-} and \eqref{generalfYexpasion}, we have
\begin{align}
c=\,&\frac{a_1w}{f}\left[1+\frac{1-\gamma}{\gamma}\frac{w\,\partial_{2}f(\tau,w,Y)-Y\,\partial_{3}f(\tau,w,Y)-\varepsilon^{-\lambda}\phi^*\,\partial_{3}f(\tau,w,Y)}{f}\right]^{\frac{1}{\gamma-1}}\notag\\
=\,&\frac{a_1w}{f^*}-\frac{a_1w}{(f^*)^{2}}\left[\widehat{f}(\tau,w)+\frac{1}{\gamma}w\,\partial_{2}\widehat{f}(\tau,w)\right]+\cdots\notag\\
=\,&c^*(\tau,w)-\frac{a_1w}{(f^*)^{2}}\left[\widehat{f}(\tau,w)+\frac{1}{\gamma}w\,\partial_{2}\widehat{f}(\tau,w)\right]+\cdots.\label{generalresultc}
\end{align}
This is \eqref{crraleadingc}. 

This completes our proof for Theorem \ref{General}.

\section{Perturbation expansion for the governing equations and boundary conditions}\label{appendix: details in no-trade}
In this appendix, we show the details of the derivation for the perturbation expansion of the governing equations in the no-trade region and the associated boundary conditions for the case of general transaction cost structure $k(\cdot)$.

\subsection{Derivation for the governing equation}\label{appendix: chang variables equation}
After changing variables from $(\tau,w,\phi)$ to $(\tau,w,Y)$ in \eqref{crraH-J-BVc*} and multiplying the result by $\varepsilon^{2\lambda}$, we have
\begin{align}\label{generalHJBVCY}
D_{0}+\varepsilon^{\lambda}D_{1}+\varepsilon^{2\lambda}(D_{2}+C)+\varepsilon^{3\lambda}D_{3}+\varepsilon^{4\lambda}D_{4}=0,
\end{align}
where 
\begin{align}
C=\,&a_1\left[1+\frac{1-\gamma}{\gamma}\frac{w\,\partial_{2}f-Y\,\partial_{3}f-\varepsilon^{-\lambda}\phi^*\,\partial_{3}f}{f}\right]^{\frac{\gamma}{\gamma-1}},\ f=f(\tau,w,Y),\label{C part}
\\
D_i=\,&D_i^l+D_i^{nl},\ \ i=0,1,2,3,4.\label{D0-4}
\end{align}
Here $D_i^l$ has a linear dependence on $f = f(\tau,w,Y)$ and $D_i^{nl}$ has a nonlinear dependence on $f = f(\tau,w,Y)$. Their explicit expressions are
\begin{align}
D_0^l 
=\,& \frac{1}{2}\sigma^2\phi^{*^2}(1-\phi^{*})^2\,\partial_{33}f,\label{D0 l}
\\
D_{0}^{nl}=\,&-\frac{1}{2}\sigma^2\phi^{*^2}(1-\phi^{*})^2\gamma\frac{(\partial_{3}f)^2}{f},\label{D0 nl}
\\
D_1^l  
=\,& \sigma^2\phi^*\left(1-\phi^*\right)[A+(\gamma-1)\phi^*]\,\partial_{3}f+\sigma^2\phi^*(1-\phi^*)(1-2\phi^*)Y\,\partial_{33}f\notag\\
&+\sigma^2\phi^{*^2}(1-\phi^{*})w\,\partial_{23}f,\label{D1 l}
\\
D_{1}^{nl}=\,&-\phi^* \frac{\sigma^2\gamma}{f}\left[(1-\phi^*)(1-2\phi^*)Y(\partial_{3}f)^2+\phi^{*}(1-\phi^{*})w\,\partial_{3}f\partial_{2}f\right],\label{D1 nl}
\\
D_2^l 
=\,& -\partial_{1}f+\left(\frac{\gamma r-\beta}{1-\gamma}+\sigma^2\frac{\gamma A}{1-\gamma}\phi^{*}-\frac{\gamma}{2}\sigma^2\phi^{*^2}\right)f+\left(r+A\sigma^2\phi^{*}+\gamma\sigma^2\phi^{*^2}\right)w\,\partial_{2}f\notag\\
&+\sigma^2\left[(1-2\phi^{*})A+(\gamma-1)(2-3\phi^*)\phi^*\right]Y\,\partial_{3}f
+\frac{1}{2}\sigma^2\left[(1-2\phi^{*})^2-2\phi^{*}(1-\phi^{*})\right]Y^2\,\partial_{33}f\notag\\
&+\frac{1}{2}\sigma^2\phi^{*^2}w^2\,\partial_{22}f+\sigma^2\phi^{*}(2-3\phi^*)wY\,\partial_{23}f,\label{D2 l}
\\
D_2^{nl}=\,&- \frac{\sigma^2\gamma}{2f}\left\{\left[(1-2\phi^{*})^2-2\phi^{*}(1-\phi^{*})\right]Y^2(\partial_{3}f)^2+\phi^{*^2}w^2(\partial_{2}f)^2+2\phi^{*}(2-3\phi^*)wY\,\partial_{2}f\partial_{3}f\right\},\label{D2 nl}
\\
D_3^l 
=\,& -\sigma^2\left[(1-3\phi^*)(1-\gamma)+A\right]Y^2\,\partial_{3}f+\sigma^2(2\phi^{*}-1)Y^3\,\partial_{33}f
+\sigma^2\phi^{*}Yw^2\,\partial_{22}f\notag\\
&+\sigma^2(1-3\phi^{*})wY^2\,\partial_{23}f+\sigma^2(A+2\gamma\phi^*)Yw\,\partial_{2}f+\sigma^2\gamma\left(\frac{A}{1-\gamma}-\phi^{*}\right)Yf,\label{D3 l}
\\
D_3^{nl}=\,&-\sigma^2(2\phi^{*}-1)Y^3\gamma\frac{(\partial_{3}f)^2}{f}-\sigma^2\phi^{*}Yw^2\gamma \frac{(\partial_{2}f)^2}{f}-\sigma^2(1-3\phi^{*})wY^2\gamma\frac{\,\partial_{2}f\partial_{3}f}{f},\label{D3 nl}
\\
D_4^l  
=\,& -\frac{1}{2}\gamma\sigma^2Y^2f
+(1-\gamma)\sigma^2Y^3\,\partial_{3}f+\frac{1}{2}\sigma^2Y^4\,\partial_{33}f+\frac{1}{2}\sigma^2Y^2w^2\,\partial_{22}f
-\sigma^2Y^3w\,\partial_{23}f+\gamma\sigma^2Y^2w\,\partial_{2}f,\label{D4 l}
\\
D_4^{nl}=\,&-\frac{1}{2}\sigma^2Y^4\gamma \frac{(\,\partial_{3}f)^2}{f}-\frac{1}{2}\sigma^2Y^2w^2\gamma \frac{(\partial_{2}f)^2}{f}+\sigma^2Y^3w\gamma \frac{\,\partial_{2}f\partial_{3}f}{f},\label{D4 nl}
\end{align} 
where $a_1$ is defined by \eqref{def_a}.

As shown by \eqref{C part}-\eqref{D0-4}, $C$, $D_i$, $i=0,1,2,3,4$, not only depend on $\varepsilon$ explicitly, but also implicitly through the functions $f$ and $f^{-1}$. Therefore, we need to further expand these dependencies. 

In the following, we simplify the notation as $f_i = f_i(\tau,w,Y)$, $i=1,2,\dots$, and $f^* = f^*(\tau)$. 
After substituting $f$ given by \eqref{generalfYexpasion} and $f^{-1}$ given by \eqref{f-} into \eqref{C part}, we obtain the following expansion for $C$
\begin{align*}
&C\left(f^*+\varepsilon^{\lambda}f_1+\varepsilon^{2\lambda}f_2+\varepsilon^{3\lambda}f_3+\varepsilon^{4\lambda} f_4+\cdots\right)\\
=\,&a_1\Bigg\{1+\frac{1-\gamma}{\gamma}\left[\frac{1}{f^*}-\varepsilon^{\lambda}\frac{f_1}{(f^*)^2}+\varepsilon^{2\lambda}\frac{-f^*f_2+f_1^2}{(f^*)^3}+\cdots\right]\times\varPi(f^*,f_1,f_2,\dots)\Bigg\}^{\frac{\gamma}{\gamma-1}},
\end{align*}
where $\varPi(f^*,f_1,f_2,\dots) = -\varepsilon^{-\lambda}\phi^*\,\partial_{3}f^*+(w\,\partial_{2}f^*-Y\,\partial_{3}f^*-\phi^*\,\partial_{3}f_{1})+\varepsilon^{\lambda}(w\,\partial_{2}f_{1}-Y\,\partial_{3}f_{1}-\phi^*\,\partial_{3}f_{2})+\varepsilon^{2\lambda}(w\,\partial_{2}f_{2}-Y\,\partial_{3}f_{2}-\phi^*\,\partial_{3}f_{3})+\cdots$. 
Since $f^*(\tau)$ does not depend on $w$ and $Y$, we have $\partial_{2}f^*=\partial_{3}f^*=0$, then the above equation can be expanded as
\begin{align}
&C\left(f^*+\varepsilon^{\lambda}f_1+\varepsilon^{2\lambda}f_2+\varepsilon^{3\lambda}f_3+\varepsilon^{4\lambda} f_4+\cdots\right)\notag\\
=\,&a_1\Bigg\{1+\frac{1-\gamma}{\gamma}\left[\frac{1}{f^*}-\varepsilon^{\lambda}\frac{f_1}{(f^*)^2}+\varepsilon^{2\lambda}\frac{-f^*f_2+f_1^2}{(f^*)^3}+\cdots\right]\times\Xi(f_1,f_2,f_3,\dots)\Bigg\}^{\frac{\gamma}{\gamma-1}}\notag\\
=\,&a_1\Bigg\{I_0(f^*,f_1)+\varepsilon^{\lambda}I_1(f^*,f_1,f_2)+\varepsilon^{2\lambda}I_2(f^*,f_1,f_2,f_3)+\cdots\Bigg\}^{\frac{\gamma}{\gamma-1}}\notag\\
=\,&C_0(f^*,f_1)+\varepsilon^{\lambda}C_1(f^*,f_1,f_2)+\varepsilon^{2\lambda}C_2(f^*,f_1,f_2,f_3)+\cdots,\label{C expansion}
\end{align}
where $\Xi(f_1,f_2,f_3,\dots) = (-\phi^*\,\partial_{3}f_{1})+\varepsilon^{\lambda}(w\,\partial_{2}f_{1}-Y\,\partial_{3}f_{1}-\phi^*\,\partial_{3}f_{2})+\varepsilon^{2\lambda}(w\,\partial_{2}f_{2}-Y\,\partial_{3}f_{2}-\phi^*\,\partial_{3}f_{3})+\cdots$, and
\begin{align*}
&I_0(f^*,f_1)=\,1-\frac{1-\gamma}{\gamma}\phi^*\frac{\partial_{3}f_{1}}{f^*},\\
&I_1(f^*,f_1,f_2)=\,\frac{1-\gamma}{\gamma}\left[-\phi^*\frac{\partial_{3}f_{2}}{f^*}+\frac{1}{f^*}\left(w\,\partial_{2}f_{1}+\left(\phi^*\frac{f_{1}}{f^*}-Y\right)\partial_{3}f_{1}\right)\right],
\\
&I_2(f^*,f_1,f_2,f_3)=\,\frac{1-\gamma}{\gamma}\left[-\phi^*\frac{\partial_{3}f_{3}}{f^*}+\frac{1}{f^*}\left(w\,\partial_{2}f_{2}+\left(\phi^*\frac{f_{1}}{f^*}-Y\right)\partial_{3}f_{2}+\phi^*\frac{f_{2}}{f^*}\partial_{3}f_{1}\right)\right],
\\
&C_0(f^*,f_1) =\, a_1I_0^{\frac{\gamma}{\gamma-1}}(f^*,f_1),\\
&C_1(f^*,f_1,f_2) =\, \frac{\gamma}{\gamma-1}a_1I_0^{\frac{1}{\gamma-1}}(f^*,f_1)I_1(f^*,f_1,f_2),\\
&C_2(f^*,f_1,f_2,f_3) =\, a_1\frac{\gamma}{\gamma-1}\left[I_0^{\frac{1}{\gamma-1}}(f^*,f_1)I_2(f^*,f_1,f_2,f_3)+\frac{1}{2(\gamma-1)}I_0^{\frac{2-\gamma}{\gamma-1}}(f^*,f_1)I_1^2(f^*,f_1,f_2)\right].
\end{align*}

After substituting \eqref{C expansion} and \eqref{generalfYexpasion} into \eqref{generalHJBVCY}, and regrouping them in terms of the power of $\varepsilon^{\lambda}$, \eqref{generalHJBVCY} becomes
\begin{align}
&D_{0}\left(f^*+\varepsilon^{\lambda}f_1+\varepsilon^{2\lambda}f_2+\varepsilon^{3\lambda}f_3+\varepsilon^{4\lambda} f_4+\cdots\right)\notag\\
&+\varepsilon^{\lambda}D_{1}\left(f^*+\varepsilon^{\lambda}f_1+\varepsilon^{2\lambda}f_2+\varepsilon^{3\lambda}f_3+\varepsilon^{4\lambda} f_4+\cdots\right)\notag\\
&+\varepsilon^{2\lambda}\left[D_{2}\left(f^*+\varepsilon^{\lambda}f_1+\varepsilon^{2\lambda}f_2+\varepsilon^{3\lambda}f_3+\varepsilon^{4\lambda} f_4+\cdots\right)+C_0(f^*,f_1)\right]\notag\\
&+\varepsilon^{3\lambda}\left[D_{3}\left(f^*+\varepsilon^{\lambda}f_1+\varepsilon^{2\lambda}f_2+\varepsilon^{3\lambda}f_3+\varepsilon^{4\lambda} f_4+\cdots\right)+C_1(f^*,f_1,f_2)\right]\notag\\
&+\varepsilon^{4\lambda}\left[D_{4}\left(f^*+\varepsilon^{\lambda}f_1+\varepsilon^{2\lambda}f_2+\varepsilon^{3\lambda}f_3+\varepsilon^{4\lambda} f_4+\cdots\right)+C_2(f^*,f_1,f_2,f_3)\right]+\cdots=0,\label{C D_i}
\end{align}
where $D_i$ for $i=0,1,2,3,4$ are given by \eqref{D0-4}.

By comparing \eqref{C D_i} with \eqref{G}, we have the expression for $G_j$ in \eqref{G}
\begin{equation}\label{G equals CD}
G_j=\,D_0^l(f_j)+\left\{\sum_{i=1}^{j\wedge4} D_i^l(f_{(j-i)\vee 0})+{\bf 1}_{\{j\geq2\}}\left[\widehat{D}_{j-2}(f_0,\dots,f_{j-1})+C_{j-2}(f_0,\dots,f_{j-1})\right]\right\}.
\end{equation}
Here $j=0,1,2,3,\dots$, $j\wedge4\triangleq \min\{j,4\}$, $(j-i)\vee 0\triangleq \max\{j-i,0\}$, $f_0=f^*$, and $\widehat{D}_{j-2}(f_0,\dots,f_{j-1})$ for $j=2,3,4$ are given by
\begin{align*}
\widehat{D}_0(f^*,f_1)
=\,&-\frac{1}{2}\sigma^2\phi^{*^2}(1-\phi^{*})^2\gamma\frac{(\partial_{3}f_{1})^2}{f^*},
\\
\widehat{D}_1(f^*,f_1,f_2)
=\,&-\frac{1}{2}\sigma^2\phi^{*^2}(1-\phi^{*})^2\gamma\left[2\,\partial_{3}f_{2}-f_1\frac{\partial_{3}f_{1}}{f^*}\right]\frac{\partial_{3}f_{1}}{f^*}\\
&-\sigma^2\phi^*(1-\phi^*)\gamma \Big[(1-2\phi^*)Y\,\partial_{3}f_{1}-\phi^{*}w\,\partial_{2}f_{1}\Big]\frac{\partial_{3}f_{1}}{f^*},
\end{align*}
and
\begin{align*}
&\widehat{D}_2(f^*,f_1,f_2,f_3)\\
=\,&-\sigma^2\phi^{*^2}(1-\phi^{*})^2\frac{\gamma}{2f^*}\left[(\partial_{3}f_{2})^2+2\,(\partial_{3}f_{1})(\partial_{3}f_{3})-2f_1\frac{\partial_{3}f_{1}}{f^*}\,\partial_{3}f_{2}+(f_1^2-f_2f^*)\left(\frac{\partial_{3}f_{1}}{f^*}\right)^2\right]\\
&-\sigma^2\phi^*(1-\phi^*)(1-2\phi^*)Y\frac{\gamma}{f^*}\left[2\,(\partial_{3}f_{1})(\partial_{3}f_{2})-f_1\frac{(\partial_{3}f_{1})^2}{f^*}\right]\\
&-\sigma^2\phi^{*^2}(1-\phi^{*})w\frac{\gamma}{f^*}\left[(\partial_{3}f_{1})(\partial_{2}f_{2})+(\partial_{2}f_{1})(\partial_{3}f_{2})-f_1\frac{\partial_{3}f_{1}}{f^*}\partial_{2}f_{1}\right]\\
&-\frac{1}{2}\sigma^2 \frac{\gamma}{f^*}\left\{\Big[(1-2\phi^{*})^2-2\phi^{*}(1-\phi^{*})\Big]Y^2(\partial_{3}f_{1})^2+\phi^{*^2}w^2(\partial_{2}f_{1})^2\right\}\\
&-\sigma^2\frac{\gamma}{f^*}\phi^{*}(2-3\phi^*)wY(\partial_{3}f_{1})(\partial_{2}f_{1}),
\end{align*}
which are the results of the nonlinear term $D_i^{nl}$  given by \eqref{D0 nl}, \eqref{D1 nl}, \eqref{D2 nl}, \eqref{D3 nl} and \eqref{D4 nl}. 

In \eqref{G equals CD}, $G_j$ is the governing equation for the order $\left(\varepsilon^{\lambda}\right)^j$ in the singular perturbation expansion and does not depend on $\varepsilon$. 
In \eqref{G equals CD}, only the first term $D_0^l(f_j)$ depends on the unknown function $f_j$, while all remaining terms in the bracket depend only on lower-order functions, which are known in the singular perturbation expansion procedure because we carry out the expansion from lower to higher orders. Thus, at each order we solve a linear equation.

\subsection{Derivation for the boundary conditions}\label{general B.C.} 
After changing variables from $(\tau,w,\phi)$ to $(\tau,w,Y)$, namely using the relationship \eqref{general cost} and \eqref{generalYeta boundary}, boundary conditions \eqref{crrabc1}-\eqref{crrasc3} become
\begin{align}
f(\tau,w,Y_B)=\,&\left(1+\varepsilon\triangle W_B\right)^{\frac{\gamma}{1-\gamma}}f(\tau,\widehat{w}_B,\widehat{Y}_B),\label{bc-1Y}
\\
\partial_{Y_B}f(\tau,w,Y_B)
=\,&\varepsilon\,\partial_{1}K\big(\triangle Y_B,w\big)\frac{\gamma}{1-\gamma}\left(1+\varepsilon\triangle W_B\right)^{-1}f(\tau,w,Y_B)\notag\\
&+\varepsilon\,\partial_{1}K\big(\triangle Y_B,w\big)\left(1+\varepsilon\triangle W_B\right)^{\frac{\gamma}{1-\gamma}}w\,\partial_{2}f(\tau,\widehat{w}_B,\widehat{Y}_B),\label{bc-2Y}
\\
\partial_{\widehat{Y}_B}f(\tau,\widehat{w}_B,\widehat{Y}_B)
=\,&\varepsilon\,\partial_{1}K\big(\triangle Y_B,w\big)w\,\partial_{2}f(\tau,\widehat{w}_B,\widehat{Y}_B)\notag\\
&+\varepsilon\,\partial_{1}K\big(\triangle Y_B,w\big)\frac{\gamma}{1-\gamma}\left(1+\varepsilon\triangle W_B\right)^{-\frac{1}{1-\gamma}}f(\tau,w,Y_B),\label{bc-3Y}
\\
f(\tau,w,Y_S)=\,&\left(1+\varepsilon\triangle W_S\right)^{\frac{\gamma}{1-\gamma}}f(\tau,\widehat{w}_S,\widehat{Y}_S),\label{sc-1Y}
\\
\partial_{Y_S}f(\tau,w,Y_S)
=\,&-\varepsilon\,\partial_{1}K\big(\triangle Y_S,w\big)\frac{\gamma}{1-\gamma}\left(1+\varepsilon\triangle W_S\right)^{-1}f(\tau,w,Y_S)\notag\\
&-\varepsilon\,\partial_{1}K\big(\triangle Y_S,w\big)\left(1+\varepsilon\triangle W_S\right)^{\frac{\gamma}{1-\gamma}}w\,\partial_{2}f(\tau,\widehat{w}_S,\widehat{Y}_S),\label{sc-2Y}
\\
\partial_{\widehat{Y}_S}f(\tau,\widehat{w}_S,\widehat{Y}_S)=\,& -\varepsilon\,\partial_{1}K\big(\triangle Y_S,w\big)w\,\partial_{2}f(\tau,\widehat{w}_S,\widehat{Y}_S)\notag\\
&-\varepsilon\,\partial_{1}K\big(\triangle Y_S,w\big)\frac{\gamma}{1-\gamma}\left(1+\varepsilon\triangle W_S\right)^{-\frac{1}{1-\gamma}}f(\tau,w,Y_S),\label{sc-3Y}
\end{align}
where $\widehat{w}_B=w(1+\varepsilon\triangle W_B)$, $\triangle W_B=-K\big(\triangle Y_B,w\big)$, $\triangle Y_B=\widehat{Y}_B-Y_B$, $\widehat{w}_S=w(1+\varepsilon\triangle W_S)$, $\triangle W_S=-K\big(\triangle Y_S,w\big)$, and $\triangle Y_S=Y_S-\widehat{Y}_S$. 

Since $\varepsilon$ is small, we apply Taylor expansion in terms of $\varepsilon$ to \eqref{bc-1Y}-\eqref{sc-3Y} and only keep the leading order terms. The results are \eqref{crraBC1}-\eqref{crraSC3}.

In these expansions, $f$, $\widehat{w}_B$ and $\widehat{w}_S$ in \eqref{bc-1Y}-\eqref{sc-3Y} all depend on $\varepsilon$. We have expanded these quantities as well in the above derivation.

This completes our derivations for the singular perturbation expansion of the governing equation and boundary conditions for the case of the general transaction cost structure.

\section{Proof of recovery of second order boundary conditions in the case of proportional cost only}\label{appendix: proof fixed or prop}

In the case of proportional cost only ($A_1=0$) given by Corollary \ref{CRRA proportional}, we need to take extra care to analyze the associated \textit{boundary conditions} since in this case the post-buy boundary coincides with the pre-buy boundary and the post-sell boundary coincides with the pre-sell boundary, namely $\widehat{Y}_B=Y_B$ and $\widehat{Y}_S=Y_S$ (see Remark \ref{Remark special cases of original generalx} in Section \ref{sec: general costs} and \eqref{generalYeta boundary}). This means the investor trades an infinitesimally small amount of wealth $\delta_B$ or $\delta_S$ as soon as the portfolio position lies outside the boundaries of the no-trade region. Thus, the change of wealth due to the transaction costs is $\triangle W_B=-K_2\delta_B$ or $\triangle W_S=-K_2\delta_S$ in \eqref{crraBC1}-\eqref{crraSC3}. 
Equations \eqref{crraBC1} and \eqref{crraSC1} give
\begin{align*}
	f(\tau,w,Y_B,\varepsilon^{\lambda})=\,&f(\tau,w,Y_B+\delta_B,\varepsilon^{\lambda})-\varepsilon K_2\delta_B\left[\frac{\gamma}{1-\gamma}f(\tau,w,Y_B+\delta_B,\varepsilon^{\lambda})+w\,\partial_{2}f(\tau,w,Y_B+\delta_B,\varepsilon^{\lambda})\right],\\
	f(\tau,w,Y_S,\varepsilon^{\lambda})=\,&f(\tau,w,Y_S-\delta_S,\varepsilon^{\lambda})-\varepsilon K_2\delta_S\left[\frac{\gamma}{1-\gamma}f(\tau,w,Y_S-\delta_S,\varepsilon^{\lambda})+w\,\partial_{2}f(\tau,w,Y_S-\delta_S,\varepsilon^{\lambda})\right].
\end{align*}
After applying the variation principle and taking the limits $\delta_B\rightarrow 0$ and $\delta_S\rightarrow 0$ (because the trading volume is infinitesimal), the above two equations become
\begin{align}
	\partial_{Y_B}f(\tau,w,Y_B,\varepsilon^{\lambda})=\,&\varepsilon K_2\left[\frac{\gamma}{1-\gamma}f(\tau,w,Y_B,\varepsilon^{\lambda})+w\,\partial_{2}f(\tau,w,Y_B,\varepsilon^{\lambda})+{\rm O}(\delta_B)\right],\label{crraBC1-p}\\
	\partial_{Y_S}f(\tau,w,Y_S,\varepsilon^{\lambda})=\,&-\varepsilon K_2\left[\frac{\gamma}{1-\gamma}f(\tau,w,Y_S,\varepsilon^{\lambda})+w\,\partial_{2}f(\tau,w,Y_S,\varepsilon^{\lambda})+{\rm O}(\delta_S)\right],\label{crraSC1-p}
\end{align}
which are the special case of \eqref{crraBC2} and \eqref{crraSC2}. Following a similar procedure, \eqref{crraBC3} and \eqref{crraSC3} lead to the boundary conditions
\begin{align}		
	\partial_{Y_BY_B}f(\tau,w,Y_B,\varepsilon^{\lambda})=\,&2\varepsilon K_2w\left[\partial_{23}f(\tau,w,Y_B,\varepsilon^{\lambda})+{\rm O}(\delta_B)\right],\label{crraBC2-p}\\ 
	\partial_{Y_SY_S}f(\tau,w,Y_S,\varepsilon^{\lambda})=\,&-2\varepsilon K_2w\left[\partial_{23}f(\tau,w,Y_S,\varepsilon^{\lambda})+{\rm O}(\delta_S)\right],\label{crraSC2-p}
\end{align}
where we have applied \eqref{crraBC1-p} and \eqref{crraSC1-p}. 

In summary, boundary conditions for the case of only proportional cost are \eqref{crraBC1-p}-\eqref{crraSC2-p}.

\section{Proof for Theorem \ref{effect of costs}}\label{appendix: proof effect of costs}
In this section, we prove Theorem \ref{effect of costs}. 

{\bf Proof for part \ref{effect of A1}.}	
For fixed $A_2$, \eqref{original generalx} gives
\begin{align}
&2x\widetilde{x}{\rm d}x\ =\left(3\widetilde{x}^{2}-x^{2}\right){\rm d}\widetilde{x},\label{derivation 5}\\
&3x^2\widetilde{x}{\rm d}x+x^3{\rm d}\widetilde{x}\ ={\rm d}A_1.\label{derivation 6}
\end{align}

Substituting \eqref{derivation 5} into \eqref{derivation 6}, we get
\begin{align}\label{derivation 8}
\frac{{\rm d}\widetilde{x}}{{\rm d}A_1} 
=\, \frac{2}{x}\frac{1}{9\widetilde{x}^{2}-x^{2}}
=\, \frac{2}{x}\frac{1}{8\widetilde{x}^{2}+\left(\widetilde{x}^{2}-x^{2}\right)}.
\end{align}
From the first equation of \eqref{original generalx}, \eqref{derivation 8} can be written as
\begin{align*}
\frac{{\rm d}\widetilde{x}}{{\rm d}A_1} = \frac{2\widetilde{x}}{x\left(8\widetilde{x}^{3}+A_2\right)}>0,
\end{align*} 
which implies
$\frac{{\rm d}\widetilde{x}}{{\rm d}A_1}>0$.  
From \eqref{derivation 5} and the first equation of \eqref{original generalx}, we have
\begin{align*}
\frac{{\rm d}x}{{\rm d}A_1}=\frac{{\rm d}x}{{\rm d}\widetilde{x}}\frac{{\rm d}\widetilde{x}}{{\rm d}A_1}
=\frac{{\rm d}\widetilde{x}}{{\rm d}A_1}\frac{3\widetilde{x}^{2}-x^{2}}{2x\widetilde{x}}
=\frac{{\rm d}\widetilde{x}}{{\rm d}A_1}\frac{2\widetilde{x}^{3}+A_2}{2x\widetilde{x}^2}
>0.
\end{align*}
In the above expression, the second equality comes from \eqref{derivation 5} and the last equality comes from the first equation of \eqref{original generalx}. It follows that $\frac{{\rm d}(\widetilde{x}+x)}{{\rm d}A_1}>0$. Therefore, from \eqref{optimal sell boundary}-\eqref{optimal buy boundary}, $\phi_S$ increases with $A_1$, and $\phi_B$ decreases with $A_1$.

From \eqref{derivation 5}, we also have
\begin{align}\label{dx-x dA1}
2x\widetilde{x}\frac{{\rm d}(\widetilde{x}-x)}{{\rm d}A_1}=\left(2x\widetilde{x}-(3\widetilde{x}^{2}-x^{2})\right)\frac{{\rm d}\widetilde{x}}{{\rm d}A_1}.
\end{align}
From the first equation of \eqref{original generalx}, \eqref{dx-x dA1} can be written as 
\begin{align*}
\frac{{\rm d}(\widetilde{x}-x)}{{\rm d}A_1} = -\frac{\widetilde{x}(\widetilde{x}-x)^2+2A_2}{2x\widetilde{x}^2}\frac{{\rm d}\widetilde{x}}{{\rm d}A_1}<0.
\end{align*}
Therefore, $\frac{{\rm d}(\widetilde{x}-x)}{{\rm d}A_1}<0$. This means that $\widehat{\phi}_S$ decreases with $A_1$, and $\widehat{\phi}_B$ increases with $A_1$.

From $\frac{{\rm d}x}{{\rm d}A_1}>0$, \eqref{optimal sell boundary}-\eqref{optimal buy boundary} show that both the buy trading sizes $|\widehat{\phi}_B-\phi_B|$ and the sell trading size $|\widehat{\phi}_S-\phi_S|$ increase with $A_1$.

In summary, $\phi_S$ and $\widehat{\phi}_B$ increase with $A_1$, $\phi_B$ and $\widehat{\phi}_S$ decrease with $A_1$, and both buy trading size $|\widehat{\phi}_B-\phi_B|$ and sell trading size $|\widehat{\phi}_S-\phi_S|$ increase with $A_1$. This completes our proof for part \ref{effect of A1}.

{\bf Proof for part \ref{effect of A2}.} 
Now, we consider how these quantities vary with $A_2$ when $A_1$ is fixed. Equation \eqref{original generalx} gives
\begin{align}
&\left(3\widetilde{x}^{2}-x^{2}\right){\rm d}\widetilde{x}-2x\widetilde{x} {\rm d}x={\rm d}A_2,\label{derivation 1}\\
&\widetilde{x} {\rm d}x=-\frac13x {\rm d}\widetilde{x}.\label{derivation 2}
\end{align}
Substituting \eqref{derivation 2} into \eqref{derivation 1}, we get
\begin{align}
{\rm d}A_2 
= \left(3\widetilde{x}^{2}-\frac13x^{2}\right){\rm d}\widetilde{x}
=\left[\frac83\widetilde{x}^{2}+\frac13\left(\widetilde{x}^{2}-x^{2}\right)\right]{\rm d}\widetilde{x}
= \left[\frac83\widetilde{x}^{2}+\frac{A_2}{3\widetilde{x}}\right]{\rm d}\widetilde{x}.\label{derivation 3}
\end{align}
The last equality follows from the first equation in \eqref{original generalx}. Equation \eqref{derivation 3} implies that $\frac{{{\rm d}\widetilde{x}}}{{\rm d}A_2}>0$. From \eqref{derivation 2}, we have $\frac{{{\rm d}x}}{{\rm d}A_2}=\frac{{{\rm d}x}}{{{\rm d}\widetilde{x}}}\frac{{{\rm d}\widetilde{x}}}{{\rm d}A_2}=-\frac{x}{3\widetilde{x}}\frac{{{\rm d}\widetilde{x}}}{{\rm d}A_2}<0$. Here the fact $\widetilde{x}\neq 0$ is used. By \eqref{optimal sell boundary}-\eqref{optimal buy boundary}, the trading size for sell and buy are
\begin{align*}
|\widehat{\phi}_S-\phi_S|=|\widehat{\phi}_B-\phi_B|=x.
\end{align*}
Therefore, the trading sizes decrease with $A_2$, for a given $A_1$.

We now answer how $\phi_B$, $\widehat{\phi}_B$, $\phi_S$, and $\widehat{\phi}_S$ will vary with $A_2$ for a given $A_1$.

Based on $\frac{{{\rm d}\widetilde{x}}}{{\rm d}A_2}>0$ and $\frac{{{\rm d}x}}{{\rm d}A_2}<0$, we have $\frac{{{\rm d}(\widetilde{x}-x)}}{{\rm d}A_2}>0$. From \eqref{optimal sell boundary}-\eqref{optimal buy boundary}, it follows that $\widehat{\phi}_S$ increases with $A_2$, and $\widehat{\phi}_B$ decreases with $A_2$. From \eqref{derivation 2} and \eqref{derivation 3}, it is clear that
\begin{align}\label{dx+x dA2}
\frac{{\rm d}(\widetilde{x}+x)}{{\rm d}A_2}
=\frac{3\widetilde{x}-x}{8\widetilde{x}^{3}+A_2}.
\end{align}
Combining \eqref{original generalx} with the fact that $x$ and $\widetilde{x}$ are nonnegative, we have
\begin{align}\label{dx+x dA2-1}
3\widetilde{x}-x 
=2\widetilde{x}+\frac{\widetilde{x}^2-x^2}{\widetilde{x}+x}\cdot\frac{\widetilde{x}}{\widetilde{x}}
=2\widetilde{x}+\frac{A_2}{\widetilde{x}+x}\cdot\frac{1}{\widetilde{x}}>0,
\end{align}
After substituting \eqref{dx+x dA2-1} into \eqref{dx+x dA2}, we obtain $\frac{{\rm d}(\widetilde{x}+x)}{{\rm d}A_2}>0$. Thus, $\phi_S$ increases with $A_2$ and $\phi_B$ decreases with $A_2$.

In summary, $\phi_S$ and $\widehat{\phi}_S$ increase with $A_2$, $\phi_B$ and $\widehat{\phi}_B$ decrease with $A_2$, and both buy trading size $|\widehat{\phi}_B-\phi_B|$ and sell trading size $|\widehat{\phi}_S-\phi_S|$ decrease with $A_2$.

This completes our proof for Theorem \ref{effect of costs}.

\bibliographystyle{abbrv}

\end{CJK}
\end{document}